\newif\ifdraft
\crefname{section}{§}{§§}
\Crefname{section}{§}{§§}
\newtheorem{prop}{Proposition}
\newtheorem{remark}{Remark}
\newtheorem{definition}{Definition}
\newtheorem{example}{Example}
\newcommand{\mysize}{0.84}
\newcommand{\side}[1]{\begin{sideways}{#1}\end{sideways}}
\newcommand*{\tikzmk}[1]{\tikz[remember picture,overlay,] \node (#1) {};\ignorespaces}
\newcommand{\boxit}[1]{\tikz[remember picture,overlay]{\node[yshift=3pt,fill=#1,opacity=.25,fit={(A)($(B)+(.75\linewidth,.8\baselineskip)$)}] {};}\ignorespaces}
\colorlet{pink}{red!40}
 \newcommand{\strikeout}[1]{\ifdraft{\sout{#1}}\else{\vspace{0ex}}\fi}
\newcommand{\blue}[1]{\ifdraft{\leavevmode\color{blue}{#1}}\else{\leavevmode\color{black}{#1}}\fi}
\newcommand{\wasblue}[1]{\ifdraft{\leavevmode\color{black}{#1}}\else{\leavevmode\color{black}{#1}}\fi}
\newcommand{\replace}[2]{\ifdraft{\leavevmode\color{blue}\strikeout{#1} \leavevmode\color{blue}{#2}}\else{\leavevmode\color{black}{#2}}\fi}
\newcommand{\TP}{\mathrm{TP}}
\newcommand{\TN}{\mathrm{TN}}
\newcommand{\FP}{\mathrm{FP}}
\newcommand{\FN}{\mathrm{FN}}
\begin{document}

\title{Measuring Fairness \\ Under Unawareness of Sensitive
Attributes: \\ A Quantification-Based Approach}

\author{\name Alessandro Fabris \email fabrisal@dei.unipd.it \\
\addr Max Planck Institute for Security and Privacy \\
Universitätsstraße 140, 44799 Bochum, Germany \\
Department of Information Engineering, University of Padova\\
Via Giovanni Gradenigo 6B, Padua, 35131, Italy \AND
\name Andrea Esuli \email andrea.esuli@isti.cnr.it \\
\name Alejandro Moreo \email alejandro.moreo@isti.cnr.it \\
\name Fabrizio Sebastiani \email fabrizio.sebastiani@isti.cnr.it \\
\addr Istituto di Scienza e Tecnologie dell'Informazione \\ Consiglio
Nazionale delle Ricerche \\ Via Giuseppe Moruzzi 1, Pisa, 56124,
Italy}

% For research notes, remove the comment character in the line below.
% \researchnote

\maketitle

\begin{abstract}
  \noindent Algorithms and models are increasingly deployed to inform
  decisions about people, inevitably affecting their lives. As a
  consequence, those in charge of developing these models must
  carefully evaluate their impact on different groups of people and
  favour \emph{group fairness}, that is, ensure that groups determined
  by sensitive demographic attributes, such as race or sex, are not
  treated unjustly. To achieve this goal, the availability
  (\emph{awareness}) of these demographic attributes to those
  evaluating the impact of these models is fundamental. Unfortunately,
  collecting and storing these attributes is often in conflict with
  industry practices and legislation on data minimisation and
  privacy. For this reason, it can be hard to measure the group
  fairness of trained models, even from within the companies
  developing them. In this work, we tackle the problem of measuring
  group \emph{fairness under unawareness} of sensitive attributes, by
  using techniques from \textit{quantification}, a supervised learning
  task concerned with directly providing group-level prevalence
  estimates (rather than individual-level class labels). We show that
  quantification approaches are particularly suited to tackle the
  fairness-under-unawareness problem, as they are robust to inevitable
  distribution shifts while at the same time decoupling the
  (desirable) objective of measuring group fairness from the
  (undesirable) side effect of allowing the inference of sensitive
  attributes of individuals. More in detail, we show that fairness under unawareness can be cast as a
  quantification problem and solved with proven methods from the
  quantification literature. We show that these methods outperform
  previous approaches to measure demographic parity in five
  experimental protocols, corresponding to important challenges that
  complicate the estimation of classifier fairness under unawareness.
\end{abstract}

\setcounter{page}{1}
\section{Introduction}
\label{sec:intro}

\noindent The widespread adoption of algorithmic decision-making in
high-stakes domains has determined an increased attention to the
underlying algorithms and their impact on people, with attention to
sensitive (or ``protected'') groups. Typically, sensitive groups are
subpopulations determined by salient social and demographic factors,
such as race or sex. The unfair treatment of such groups is not only
unethical, but also ruled out by anti-discrimination laws, and is thus
studied by a growing community of algorithmic fairness
researchers. Important works in this area have addressed the unfair
treatment of subpopulations that may arise in the judicial system
\citep{angwin2016machine,larson2016how,berk2021fairness}, healthcare
\citep{obermeyer2019dissecting,gervasi2022potential,ricci2022addressing}, search engines
\citep{geyik2019fairnessaware,fabris2020gender,ekstrand2022fairness}, insurance
\citep{angwin2017:mn,donahue2021better,fabris2021algorithmic}, and computer vision
\citep{buolamwini2018:gender,raji2019actionable,goyal2022fairness}, just to name a few
domains that may be affected. One common trait of these research works
is their attention to a careful definition (and subsequent
measurement) of what it means for a model to be fair to the subgroups
involved (\textit{group fairness}), which is typically viewed in terms
of differences, across the salient subpopulations, in quantities of
interest such as accuracy, recall, or acceptance rate. According to
popular definitions of fairness, sizeable such differences (e.g.,
between women and men) correspond to low fairness on the part of the
algorithm
\citep{dwork2012:fairness,barocas2019fair,pedreschi2008discrimination}.
% \fabsebcomment{Noi usiamo spesso la parola \emph{model}, che è più
% generale di \emph{classifier}, perché e.g., un \emph{recommender} è
% anch'esso un \emph{model}. Secondo me dovremmo specificare già
% dall'introduzione che noi ci occupiamo di \emph{classifiers}, e non
% di \emph{recommenders}, e io lo farei usando sempre, fin
% dall'introduzione, la parola \emph{classifier} al posto della parola
% \emph{model}.}

Unfortunately, sensitive demographic data, such as the race or sex of
subjects, are often not available, since practitioners find several
barriers to obtaining these data, both during model development and
after deployment. Among these barriers, legislation plays a major
role, prohibiting the collection of sensitive attributes in some
domains \citep{bogen2021:ap}. Even in the absence of explicit
prohibition, privacy-by-design standards and a data minimization ethos
often push companies in the direction of avoiding the collection of
sensitive data from their customers. Similarly, the prospect of
negative media coverage is a clear concern, so companies often err on
the side of caution and inaction \citep{andrus2021:wm}. The
unavailability of these data thus makes the measurement of model
fairness nontrivial, even for the company that is developing and/or
deploying the model. For these reasons, in a recent survey of industry
professionals, most of the respondents stated that the availability of
tools that support fairness auditing in the absence of
individual-level demographics would be very useful
\citep{holstein2019:if}.
% Moreover, noisy estimates of sensitive attributes naïvely plugged
% into fairness-enhancing schemes may actually make a model
% \emph{less} fair \citep{ghosh2021when,mehrotra2021:mitigating}.
In other words, the problem of measuring group fairness when the
values of the sensitive attributes are unknown (\textit{fairness under
unawareness}) is pressing and requires ad hoc solutions.

In the literature on algorithmic fairness, much work has been done to
propose techniques directly aimed at improving the fairness of a model
\citep{zafar2017fairness,donini2018empirical,hashimoto2018fairness,he2020geometric,hardt2016:equality,sankar2021matching}. However,
relatively little attention has been paid to the problem of reliably
measuring fairness. This represents an important, but rather
overlooked, preliminary step to enforce fairness and make algorithms
more equitable across groups. More recent works have studied non-ideal
conditions, such as missing data \citep{goel2021importance}, noisy or
missing group labels \citep{chen2019:fu,awasthi2020equalized}, and
non-iid samples \citep{singh2021fairness,rezaei2021robust}, and showed
that naïve fairness-enhancing algorithms may actually make a model
\emph{less} fair under noisy demographic information
\citep{ghosh2021when,mehrotra2021:mitigating}.

% The problem of measuring differences in outcomes across sensitive
% demographics has a long tradition in health research
% \citep{andjelkovich1990:ar,fiscella2006:ug}. One notable example is
% the Bayesian Improved Surname Geocoding (BISG) method
% \citep{elliott2009:uc}, which was developed for the health sector and
% later adopted for an external fairness audit of interest rates set
% by auto lending companies \citep{cfpb2013cfpb,cfpb2014using}. Only
% recently this line of work has converged with algorithmic fairness
% research, where a handful of works have studied the problem of
% reliably measuring group-wise disparities of classifiers without
% access to sensitive
% attributes\citep{chen2019:fu,kallus2020:aa,aswasthi2021:ef}.

% The problem of reliably measuring the group fairness of classifiers
% without access to sensitive attributes is considered in three recent
% works \citep{chen2019:fu, kallus2020:aa, aswasthi2021:ef}, following
% the deployment of related techniques by US consumer protection
% agencies \citep{cfpb2013cfpb, cfpb2014using}. \fabsebcomment{Questo
% paragrafetto qui sopra non viene approfondito, e quindi io lo
% toglierei del tutto; tanto poi questi lavori li approfondiamo nel
% related work.}

In this work, we propose a novel solution to the problem of measuring
classifier fairness under unawareness by using techniques from
\textit{quantification} \citep{Gonzalez:2017it}, a supervised learning
task concerned with estimating, rather than the class labels of
individual data points, the class prevalence values for samples of
such data points, i.e., group-level quantities, such as the percentage
of women in a given sample. Quantification methods address two
pressing facets of the fairness under unawareness problem: (1) their
estimates are robust to \emph{distribution shift} (i.e., to the fact
that the distribution of the labels in the unlabeled data may
significantly differ from the analogous distribution in the training
data), which is often inevitable since populations evolve, and
demographic data are unlikely to be representative of every condition
encountered at deployment time; (2) they allow the estimation of
group-level quantities but do not allow the inference of sensitive
attributes at the individual level, which is beneficial since the
latter might lead to the inappropriate and nonconsensual utilization
of this sensitive information, reducing individuals' agency over data
\citep{andrus2022demographic}. Quantification methods achieve these
goals by \emph{directly} targeting group-level prevalence
estimates. They do so through a variety of approaches, including,
e.g., dedicated loss functions, task-specific adjustments,
and \textit{ad hoc} model selection procedures.
% , or minimization of the divergence between the cumulative
% distributions of the posterior probabilities of the validation set
% and the unlabelled set.  \fabsebcomment{Questa ultima frase si
% potrebbe fose anche tagliare?}  \strikeout{; including aggregation
% of posterior probabilities and error compensation via the estimation
% of false positive and false negative rates.}\fabsebcomment{Questa
% ultima frase ho proposto di tagliarla perché fa sembrare la
% quantification una cosa triviale.}

% This is precisely the goal of practitioners looking to measure
% fairness under unawareness of sensitive attributes. When auditing an
% algorithm for group fairness, the aim is not the development of a
% model that is accurate for individual predictions (i.e.,
% classification), which may be misused to infer people's
% demographics, such as a user's race, and may thus lead to the
% inappropriate and non-consensual utilization of this
% information. Rather, the central interest of fairness audits is the
% reliable estimation of group-level quantities (i.e.,
% quantification), such as the prevalence of women among the instances
% to which a certain class has been assigned by the model.

%\wasreplace{We concentrate on a notion of fairness called
%\emph{demographic disparity}, defined as the difference in the
%acceptance rates of the classifier (that is, the rate at which a
%classifier assigns the positive class) among the relevant
%sub-populations. In order to measure the demographic disparity of a
%classifier, we leverage different solutions proposed in the
%quantification literature.}{}

Overall, we make the following contributions:

\begin{itemize}
\item \textbf{Quantifying fairness under unawareness}. We show that measuring fairness under
  unawareness can be cast as a quantification problem and solved with
  approaches of proven consistency established in the quantification
  literature (Section \ref{sec:method}). We propose and demonstrate several high-accuracy fairness estimators for both vanilla and fairness-aware classifiers.
  % We prove that our approach is a generalised \blue{(and much
  % improved)} version of the Weighted Estimator \citep{chen2019:fu}
  % . \fabsebcomment{E' il caso di metterla, quest'ultima frase? Fa
  % sembrare, right from the start. il nostro approccio una banale
  % generalizzazione di qualcosa di già noto. Per me si può anche
  % togliere.}
  % We prove the importance of each component through an ablation
  % study (Section~\ref{sec:ablation}).
\item \textbf{Experimental protocols for five major
  challenges}. Drawing from the algorithmic fairness literature, we
  identify five important challenges that arise in estimating fairness
  under unawareness. These challenges are encountered in real-world
  applications, and include the nonstationarity of the processes
  generating the data and the variable cardinality of the available
  samples. For each such challenge, we define and formalise a precise
  experimental protocol, through which we compare the performance of
  quantifiers (i.e., group-level prevalence estimators) generated by
  six different quantification methods
  (Sections~\ref{sec:sample_prev_d3}--\ref{sec:flip_prev_d1}).
\item %\textbf{Decoupling.}
  \textbf{Decoupling group-level and individual-level inferences.} We
  consider the problem of potential model misuse to maliciously infer
  demographic characteristics at an individual level, which represents
  a concern for \emph{proxy methods}, i.e., methods that measure model fairness based on proxy
  attributes. Proxy methods are estimators of sensitive attributes
  which exploit the correlation between available attributes (e.g.,
  ZIP code) and the sensitive attributes (e.g., race) in order to
  infer the values of the latter. Through a set of experiments, we
  demonstrate two methods that yield precise estimates of demographic
  disparity but poor classification performance, thus decoupling the
  (desirable) objective of group-level prevalence estimation from the
  (undesirable) objective of individual-level class label prediction
  (Section~\ref{sec:q_not_c}).
\end{itemize}

% \noindent \fabsebcomment{Al di là del fatto che, notoriamente, io
% trovo il paragrafo qui sotto fuori posto (lo vedrei bene nelle
% conclusioni), non capisco proprio cosa voglia dire e cosa c'entri
% col nostro approccio; magari me lo spiegherete a voce ...}

\noindent 
It is worth noting from the outset some intrinsic limitations of proxy
methods and measures of group fairness. In essence, proxy methods
exploit the co-occurrence of membership in a group and display of a
given trait, potentially learning, encoding and reinforcing
stereotypical associations \citep{lipton2018does}. More in general,
even when labels for sensitive attributes are available, these are not
all equivalent. Self-reported labels are preferable to avoid external
assignment (i.e., inference of sensitive attributes), which can be
harmful in itself \citep{keyes2018:misgendering}. In broader terms,
approaches that define sensitive attributes as rigid and fixed
categories are limited in that they impose a taxonomy onto people,
erasing the needs and experiences of those who do not fit the
envisioned prevalent categories \citep{namaste2000invisible}. Although
we acknowledge these limitations, we hope that our work will help
highlight, investigate, and mitigate unfavourable outcomes for
disadvantaged groups caused by different automated decision-making
systems.
% \fabsebcomment{Io di questo paragrafo (da ``It is worth noting ...")
% ci ho capito poco, forse perché non sono un ``iniziato" al mondo
% della fairness; e comunque lo vedrei meglio nelle conclusioni.}
% \afabcomment{Lo so, però c'è una parte della comunità fairness
% atttenta a queste limitazioni e il paragrafo è utile qui per
% appianare subito potenziali osservazioni/fastidi del lettore}
% \andcomment{Per me è abbastanza chiaro. Si notano due limitazioni
% intrinseche del problema, il possibile bias nelle label del dataset
% quando queste non sono self-assigned, e il possibile bias nella
% definizione del label set. Lo scopo è quello cautelativo scritto da
% Alessandro.}

The outline of this work is as follows. Section \ref{sec:prelimnaries}
summarizes the notation and background for this article. Section
\ref{sec:related} introduces related works. After giving a primer on
quantification, with emphasis on the approaches we consider in this
work, Section~\ref{sec:method} shows how these approaches can be
leveraged to measure fairness under unawareness of
sensitive attributes. Section~\ref{sec:experiments} presents our
experiments, in which we tackle, one by one, each of the five major
challenges mentioned above. We then summarize and discuss these
results (Section~\ref{sec:discussion}) and present concluding remarks
(Section~\ref{sec:conclusion}), describing key limitations and
directions for future work.

% \noindent \textbf{Reproducibility}.
Our code is available at
\url{https://github.com/alessandro-fabris/ql4facct}.

\section{Preliminaries}
\label{sec:prelimnaries}

\subsection{Notation}
\label{sec:notation}

\begin{table}[tb]
  \caption{Main notational conventions used in this work.}
  \label{tab:notation}
  \begin{center}
    % \resizebox{\textwidth}{!} {
    \begin{tabular}{|c|p{10cm}|}
      % \begin{tabular}{|r|l|}
      \hline
      $\mathbf{x} \in \mathcal{X}$ & a data point, i.e.,  a vector of non-sensitive attribute values \\
      $s \in \mathcal{S}$ & a value for the sensitive attribute 
                            , with $\mathcal{S}=\{0,1\}$ \\
      $y \in \mathcal{Y}$ & a class from the target domain
                            $\mathcal{Y}=\{\ominus,\oplus\}$ \\
      $X, S, Y, \hat{Y}$ & random variables for data points,
                           non-sensitive attributes,
                           classes, and class predictions \\
      $h(\mathbf{x})$ & 
                        a classifier $h:\mathcal{X} \rightarrow \mathcal{Y}$ 
                        issuing predictions in $\mathcal{Y}$ for data points in $\mathcal{X}$ \\ 
      $k(\mathbf{x})$ & 
                        a classifier $k:\mathcal{X} \rightarrow \mathcal{S} $ 
                        issuing predictions in $\mathcal{S}$ for data points in $\mathcal{X}$ \\ 
      $\sigma$ & a sample, i.e., a non-empty set of data 
                 points drawn from $\mathcal{X}$ \\
      $p_{\sigma}(s)$ & true prevalence of sensitive attribute value 
                        $s$ in sample $\sigma$ \\
      $\hat{p}_{\sigma}(s)$ & estimate of the prevalence 
                              of sensitive attribute value $s$ 
                              in sample $\sigma$ \\
      $\hat{p}_{\sigma}^q(s)$ & estimate $\hat{p}_{\sigma}(s)$ 
                                obtained via quantifier $q$ \\
      $q(\sigma)$ & a quantifier $q:2^\mathcal{X} \rightarrow [0,1]$ 
                    estimating the prevalence of the positive class of 
                    sensitive attribute $S$ in a sample \\
      $\mathcal{D}_1$ & set of pairs $(\mathbf{x}_i,y_i) \in (\mathcal{X}, 
                        \mathcal{Y})$ for training classifier $h(\mathbf{x})$ \\
      $\mathcal{D}_2$ & set of pairs $(\mathbf{x}_i,s_i) \in (\mathcal{X}, 
                        \mathcal{S})$ for training quantifier $q(\sigma)$ \\
      $\mathcal{D}_3$ & set of points $\mathbf{x}_i \in \mathcal{X}$ to which $h(\mathbf{x})$ and $q(\sigma)$ are to be applied \\
      $\mathcal{D}_{2}^y$ & short for 
                            $\mathcal{D}_{2}^{\hat{Y}=y}=\{ (\mathbf{x}_i,s_i) 
                            \in \mathcal{D}_{2} \;|\; h(\mathbf{x}_i)=y \}$ \\
      $\mathcal{D}_{3}^y$ & short for 
                            $\mathcal{D}_{3}^{\hat{Y}=y}=\{ \mathbf{x}_i
                            \in \mathcal{D}_{3} \;|\; h(\mathbf{x}_i)=y \}$ \\
      $\breve{\mathcal{D}} \;$ & a set
                                 derived from $\mathcal{D}$ according to an experimental 
                                 protocol among those detailed in Sections 
                                 \ref{sec:sample_prev_d3}--\ref{sec:flip_prev_d1}\\
 %
                                 % $\breve{\mathcal{D}}_1$ & a
                                 % modified training set, derived from
                                 % $\mathcal{D}_1$ according to an
                                 % experimental protocol. Protocols
                                 % are detailed in Sections
                                 % \ref{sec:sample_prev_d3}--\ref{sec:flip_prev_d1}\\
                                 % $\breve{\mathcal{D}}_{2},\breve{\mathcal{D}}_{3}$,
                                 % $\breve{\mathcal{D}}_{2}^\oplus,$ &
                                 % \multirow{2}{*}{analogous sets}\\
                                 % $\breve{\mathcal{D}}_{2}^\ominus,
                                 % \breve{\mathcal{D}}_{3}^\oplus,
                                 % \breve{\mathcal{D}}_{3}^\ominus$ &
                                 % \\
      \hline
    \end{tabular}
    % }
  \end{center}
\end{table}

\noindent In this paper, we use the following notation, summarized in
Table~\ref{tab:notation}. By $\mathbf{x}$ we indicate a data point
drawn from a domain $\mathcal{X}$, represented via a set $X$ of
nonsensitive attributes (i.e., features).
% \strikeout{; by $X$ we indicate a random variable that takes values
% in $\mathcal{X}$}. $\mathcal{S}$ to denote the (binary) domain of a
% sensitive attribute, that we indicate as $\mathcal{S}=\{0,1\}$ for
% ease of exposition, and by $s$ a value that $\mathcal{S}$ may take.
We use $S$ to denote a sensitive attribute that takes values in
$\mathcal{S}=\{0,1\}$, and by $s\in\mathcal{S}$ a value that $S$ may
take.\footnote{\blue{Note that, for ease of exposition, we consider only one binary sensitive attribute; our approach straighforwardly applies to more complex settings (see Remark \ref{rem:nonbinary}).}}
% \blue{Note also that we use the term ``attribute'' with some
% liberty, sometimes meaning a feature in a vector space (as in
% ``fairness with respect to a sensitive attribute'') and sometimes
% meaning the value that the feature takes for a specific data point
% (as in ``unawareness of sensitive attributes''); in these cases, the
% precise meaning is clear from the context.} \fabsebcomment{Che ne
% direste di usare $\mathcal{S}=\{\female,\male\}$ invece di
% $\mathcal{S}=\{0,1\}$? Un sacco di formule risulterebbero più
% immediate da capire. Potremmo dire ``We use $\mathcal{S}$ to denote
% the (binary) domain of a sensitive attribute; for ease of exposition
% we assume that the sensitive attribute is ``sex'', and we thus
% indicate its domain as $\mathcal{S}=\{\female,\male\}$. By $s$ we
% indicate a value that $\mathcal{S}$ may take.''.}
By $Y$ we indicate a class (representing the target of a prediction
task) taking values in a binary domain
$\mathcal{Y}=\{\ominus,\oplus\}$, and by $y\in\mathcal{Y}$ a value
that $Y$ can take.
% Moreover, we focus on the case in which the classifier that we want
% to audit is a binary one, but the definitions and techniques we
% employ can be straightforwardly extended to a multiclass
% setting. See also Footnote~\ref{foot:multiclass} on this.} Given
% $\mathbf{x}\in\mathcal{X}$ and $y\in\mathcal{Y}$, a pair
% $(\mathbf{x},y)$ thus denotes a data point with its true class
% label.
%
The symbol $\sigma$ denotes a \emph{sample}, i.e., a non-empty set of
data points drawn from $\mathcal{X}$. By $p_{\sigma}(s)$ we indicate
the true prevalence of an attribute value $s$ in the sample $\sigma$,
% by $\hat{p}_{\sigma}(y)$ we indicate an estimate of this prevalence,
while by $\hat{p}_{\sigma}^{q}(s)$ we indicate the estimate
% \footnote{Consistently with most mathematical literature, we use the
% caret symbol (\^\/\/) to indicate estimation.}
of this prevalence obtained by means of a quantifier $q$, which we
define as a function $q:2^\mathcal{X} \rightarrow [0,1]$.
% quantification method $M$.
Since $0\leq p_{\sigma}(s) \leq 1$ and
$0\leq \hat{p}_{\sigma}^{q}(s) \leq 1$ for all $s\in\mathcal{S}$, and
since
$\sum_{s\in\mathcal{S}}p_{\sigma}(s) =
\sum_{s\in\mathcal{S}}\hat{p}_{\sigma}^{q}(s)=1$, the
$p_{\sigma}(s)$'s and the $\hat{p}_{\sigma}^{q}(s)$'s form two
probability distributions in $\mathcal{S}$. We also introduce the
random variable $\hat{Y}$, which denotes a predicted label. By
$\Pr(V=v)$ we indicate, as usual, the probability that a random
variable $V$ takes value $v$, which we shorten as $\Pr(v)$ when $V$ is
clear from the context, since $X,S,Y$ can also be seen as random
variables. By $h:\mathcal{X}\rightarrow \mathcal{Y}$ we indicate a
binary classifier that assigns classes in $\mathcal{Y}$ to data points
in $\mathcal{X}$; by $k:\mathcal{X}\rightarrow \mathcal{S}$ we instead
indicate a binary classifier that assigns sensitive attribute values
in $\mathcal{S}$ to data points (e.g., that predicts the sensitive
attribute value of a certain data item $\mathbf{x}$). It is worth
re-emphasizing that both $h$ and $k$ only use nonsensitive attributes
$X$ as input variables,
% \blue{since the values of the sensitive attribute $S$ are assumed
% not available}.
For ease of use, we will interchangeably write $h(\mathbf{x})=y$ or
$h_{y}(\mathbf{x})=1$, and $k(\mathbf{x})=s$ or $k_{s}(\mathbf{x})=1$.

\subsection{Background}

\noindent Several criteria for group fairness have been proposed in
the machine learning literature, typically requiring equalization of
some conditional or marginal property of the distribution of sensitive
variable $S$, ground truth $Y$, and classifier estimate $\hat{Y}$
\citep{dwork2012:fairness,hardt2016:equality,narayanan2018:21}.
% \wasreplace{\emph{Demographic parity}, also called \textit{statistical
% parity} or \textit{independence} \citep{dwork2012:fairness,
% barocas2019fair, emelianov2022fair}, concentrates on the acceptance
% rate of a classifier conditional on the protected attribute. We
% adopt the following definition from \citet{chen2019:fu}:}
\wasblue{The main criteria of observational group fairness
\citep{barocas2019fair}, i.e., the ones computed directly from
groupwise confusion matrices, are defined as follows:}
\begin{definition}
  \label{def:dd}
  \wasblue{Given a classifier $h:\mathcal{X}\rightarrow \mathcal{Y}$
  issuing predictions $\hat{y}=h(\mathbf{x})$, and given the
  respective ground truth labels $y$, the following groupwise
  disparities with respect to attribute $S$ can be defined.}  \wasblue{
  \small{
  \begin{align*}
    \text{\emph{Demographic Disparity: }} \delta_{h}^{S\mathrm{, DD}} &= \Pr(\hat{Y}=\oplus|S=1) - \Pr(\hat{Y}=\oplus|S=0) \\
    \text{\emph{True Positive Rate Disparity: }} \delta_{h}^{S\mathrm{, TPRD}} &= \Pr(\hat{Y}=\oplus|S=1, Y=\oplus) - \Pr(\hat{Y}=\oplus|S=0, Y=\oplus) \\
    \text{\emph{True Negative Rate Disparity: }} \delta_{h}^{S\mathrm{, TNRD}} &= \Pr(\hat{Y}=\ominus|S=1, Y=\ominus) - 
                                                                                 \Pr(\hat{Y}=\ominus|S=0, Y=\ominus) \\
    \text{\emph{Positive Predicted Value Disparity: }} \delta_{h}^{S\mathrm{, PPVD}} &= \Pr(Y=\oplus|S=1, \hat{Y}=\oplus) - \Pr(Y=\oplus|S=0, \hat{Y}=\oplus) \\
    \text{\emph{Negative Predicted Value Disparity: }} \delta_{h}^{S\mathrm{, NPVD}} &= \Pr(Y=\ominus|S=1, \hat{Y}=\ominus) - 
                                                                                       \Pr(Y=\ominus|S=0, \hat{Y}=\ominus)\\
  \end{align*}
  } }
  \hfill\qed
\end{definition}
\noindent
% These are the main criteria of observational group fairness in the
% literature \citep{barocas2019fair}. Demographic Disparity falls under
% the \emph{independence} criterion, True Positive Rate Disparity and
% True Negative Rate Disparity under \emph{separation}, and Positive
% Predicted Value Disparity and Negative Predicted Value Disparity
% under \emph{sufficiency}. In this work we show how to estimate
% measures of group fairness under unawareness via quantification
% methods, and provide an extensive evaluation It is worth
% reemphasising that the sensitive attribute $S$ does \emph{not}
% typically belong to the set of attributes $\mathcal{X}$ which
% generate the feature space on which classifier $h$ operates (in
% other words, when training $h$ we are \emph{unaware} of $S$).
Demographic disparity, for example, measures whether the prevalence of
the positive class is the same across subpopulations identified by the
sensitive attribute $S$; a value
$\delta_{h}^{S\wasblue{\mathrm{, DD}}}=0$ indicates maximum fairness,
while values of $\delta_{h}^{S\wasblue{\mathrm{, DD}}}=-1$ or
$\delta_{h}^{S\wasblue{\mathrm{, DD}}}=+1$ indicate minimum fairness,
i.e., maximum advantage for $S=0$ over $S=1$ or vice versa. We
illustrate the problem of measuring fairness under unawareness using
an example focused on demographic disparity.

\begin{example}
  \label{ex:bank}
  Assume that $S$ stands for ``race'', $S=1$ for ``African-American''
  and $S=0$ for ``White'',\footnote{While acknowledging its
  limitations \citep{hephzibah2018race}, we follow the race
  categorization adopted by the US Census Bureau wherever possible.}
  and that the classifier, deployed by a bank, is responsible for
  recommending loan applicants for acceptance, classifying them as
  ``grant'' ($\oplus$) or ``deny'' ($\ominus$). For simplicity, let us
  assume that the outcome of the classifier will be translated
  directly into a decision without human supervision. The bank might
  want to check that the fraction of loan recipients out of the total
  number of applicants is approximately the same in the
  African-American and White subpopulations. In other words, the bank
  might want $\delta_{h}^{S\wasblue{\mathrm{, DD}}}$ to be close to 0. Of
  course, if the bank is aware of the race of each applicant, this
  constraint is very easy to check and, potentially, enforce. If the
  bank is unaware of the applicants' race, the problem is not trivial,
  and can be addressed by the method we propose in this paper.
\end{example}
% -------------------------------------------------------------------

\section{\wasblue{Related Work}}
\label{sec:related}

\subsection{Fairness Under Unawareness}
\label{sec:fairnessunderunawareness}

% \afabcomment{May want to mention this work
% \url{https://arxiv.org/pdf/2105.09985.pdf} on conditional parity
% (emphasis on covariates)}

\noindent Unavailability of sensitive attribute values poses a major
challenge for internal and external fairness audits. When these values
are unknown, it is sometimes possible to seek expert advice to obtain
them \citep{buolamwini2018:gender}. Alternatively, disclosure
procedures have been proposed for subjects to provide their sensitive
attributes to a trusted third party \citep{veale2017:fm} or to share
them encrypted \citep{kilbertus2018:bj}. Another line of research
studies the problem of reliably estimating measures of group fairness,
in classification \citep{chen2019:fu,kallus2020:aa,aswasthi2021:ef} and
ranking \citep{kirnap2021estimation,ghazimatin2022measuring}, without
access to sensitive attributes, via proxy
variables. %\afabcomment{\url{https://arxiv.org/abs/2108.05152} this may also be relevant (in the context of ranking) proposing a sampling strategy for annotation.}

\citep{chen2019:fu} is the work most closely related to ours. The
authors study the problem of estimating the demographic disparity of a
classifier, exploiting the values of non-sensitive attributes $X$ as
proxies to infer the value of the sensitive variable $S$. Starting
from a naïve approach, dubbed \emph{threshold estimator}
(\textbf{TE}), which estimates $\mu(s)=\Pr(\hat{Y}=\oplus|S=s)$ as
\begin{align}
  \hat{\mu}^{\mathrm{TE} (s)
  } = \frac{\sum_{\mathbf{x}_i} k_s(\mathbf{x}_i) 
  h_{\oplus}(\mathbf{x}_i)}{\sum_{\mathbf{x}_i} k_s(\mathbf{x}_i)} 
  \label{eq:mu_cc}
\end{align}
\noindent i.e., by using a hard classifier
$k_s:\mathcal{X}\rightarrow \{0,1\}$ (which outputs Boolean decisions
regarding membership in a sensitive group $S=s$), they propose a
\emph{weighted estimator} (\textbf{WE}) with better convergence
properties.
% They characterise the limitations of a naïve approach which
% estimates $\mu(s)$ in Equation~\ref{eq:mu} via a hard classifier
% $k_s:\mathcal{X}\rightarrow \{0,1\}$ that outputs Boolean decisions
% regarding membership to a sensitive group $S=s$ as follows:
%%
% \begin{align}
% \hat{\mu}(s) = \frac{\sum_{\mathbf{x}_i} k_s(\mathbf{x}_i) h_{\oplus}(\mathbf{x}_i)}{\sum_{\mathbf{x}_i} k_s(\mathbf{x}_i)} \label{eq:mu_cc}
%\end{align}
%%
%\noindent They propose a \emph{weighted estimator} (\textbf{WE})
%
\begin{align}
  \hat{\mu}^{\mathrm{WE}}(s) = \frac{\sum_{\mathbf{x}_i} 
  \pi_s(\mathbf{x}_i) h_{\oplus}(\mathbf{x}_i)}{\sum_{\mathbf{x}_i} 
  \pi_s(\mathbf{x}_i)} \label{eq:mu_we}
\end{align}
\noindent WE exploits a soft classifier
$\pi_s:\mathcal{X}\rightarrow [0,1]$ that outputs posterior
probabilities $\Pr(s|\mathbf{x}_i)$. The posteriors represent the
probability that the classifier attributes to the fact that
$\mathbf{x}_i$ belongs to the subpopulation with sensitive attribute
$S=s$. The authors argue that the naïve estimator of
Equation~\eqref{eq:mu_cc} has a tendency to exaggerate disparities,
and show that WE mitigates this problem under the hypothesis that
$\pi_s(\mathbf{x}_i)$ outputs well-calibrated posterior
probabilities. \wasblue{A contribution of our paper is to show that TE
and WE are just instances of a broad family of estimators (Proposition
\ref{prop:we_pcc}). Moreover, we consider alternative methods from the
same family, and show them to outperform both TE and WE on an
extensive suite of experiments (Section \ref{sec:experiments}).}

% In this work we show extensive empirical evidence of this phenomenon
% and adapt different approaches from the quantification

\citet{kallus2020:aa} study the problem of measuring a classifier's
demographic disparity, true positive rate disparity, and true negative
rate disparity in a setting with access to a primary dataset involving
$(\hat{Y},Z)$ and an auxiliary dataset involving $(S,Z)$, where $Z$ is
a generic set of proxy variables, potentially disjoint from $X$. They
show that reliably estimating the demographic disparity of a
classifier issuing predictions $\hat{Y}$ when $Z$ is not highly
informative with respect to $\hat{Y}$ or $S$ is infeasible. Moreover,
they provide upper and lower bounds for the true value of the estimand
in a setting where the primary and auxiliary datasets are drawn from
marginalisations of a common joint distribution. \wasblue{Our work
departs from this setting in two important ways, to focus on realistic
conditions for internal fairness audits. Firstly, we take into account
the nonstationarity of the processes generating the data and do not
assume the primary and auxiliary dataset to be marginalisations of the
same joint distribution. Rather, we identify different sources of
distribution shift, and formalize them into protocols to test the
performance of different estimators in a more realistic setting
(Sections \ref{sec:sample_prev_d3}--\ref{sec:flip_prev_d1}). Secondly,
we hypothesize that, from within the company deploying a classifier
$h(\mathbf{x})$, the available proxy variables $Z$ comprise $X$, and
are thus highly informative with respect to $\hat{Y}$.}

\citet{aswasthi2021:ef} characterize the structure of the best
estimator for sensitive attributes when the final estimand is a
classifier's disparity in true positive rates across protected
groups. They show that the test accuracy of the attribute classifier
and its performance as an estimator of the true positive rate
disparity are not necessarily correlated. \wasblue{We contribute to this
line of research, demonstrating the possibility to decouple the
\emph{classification} performance of a model when deployed for
sensitive attribute inference at the individual level, which
constitutes a privacy infringement, from its \emph{quantification}
performance in applications where it is used for group-level estimates
(Section \ref{sec:q_not_c}). This line of work opens the possibility
of developing estimators that reliably measure group fairness under
unawareness of sensitive attributes, while guaranteeing privacy at the
individual level.}

\subsection{Quantification and Fairness}
\label{sec:quantificationlearning}

\noindent The application of quantification methods in algorithmic
fairness research is not entirely new. \citet{biswas2021ensuring}
study the problem of enforcing fair classification under distribution
shift, which potentially affects different demographic groups at
different rates. They define a notion of fairness based on the
proportionality between the prevalence of positives in a protected
group $S=s$ and the group-specific acceptance rate of a classifier
issuing predictions $\hat{Y}$. This notion, called \emph{proportional
equality}, is defined by the quantity
$$\text{PE}= \left | \frac{\Pr(Y=\oplus|S=1)}{\Pr(Y=\oplus|S=0)} -
  \frac{\Pr(\hat{Y}=\oplus|S=1)}{\Pr(\hat{Y}=\oplus|S=0)} \right |$$
calculated on a test set $\mathcal{D}$, where low values of
$\text{PE}$ correspond to fairer predictions $\hat{Y}$. In the
presence of distribution shift between training and testing
conditions, the true group-specific prevalences
% , i.e., the numerator and denominator of the first fraction,
$\Pr(Y=\oplus|S=1)$ and $\Pr(Y=\oplus|S=0)$ are unknown. The authors
use an approach from the quantification literature to estimate these
prevalence values, integrating it in a wider system aimed at
optimizing PE.

% PCC-based quantifier to estimate these prevalence values. PCC is
% embedded within a wider pipeline called CAPE, combining sampling,
% ensemble and quantification techniques to provide fair predictions.

In other words, prior work applying quantification to problems of
algorithmic fairness concentrates on \emph{enforcing} classifier
fairness under unawareness of \emph{target labels}.
% \fabsebcomment{Che \citet{biswas2021ensuring} si occupi di
% ``enforcing'' non mi sembra per nulla evidente dalla descrizione di
% cui sopra, che mi sembra invece indicare il contrario.}
Our work, on the other hand, aims at \emph{measuring} classifier
fairness under unawareness of \emph{sensitive attributes}.
        
% -------------------------------------------------------------------

\section{Measuring Fairness Under Unawareness: A Quantification-based
Method}
\label{sec:method}

\noindent In this section, we first present a primer on quantification
(Section~\ref{sec:learningtoquantify}), and then show how
to measure fairness under unawareness with quantification
(Section~\ref{sec:adaptation}), discussing the properties of the
resulting estimators.

\subsection{Learning to Quantify}
\label{sec:learningtoquantify}

\noindent \emph{Quantification} (also known as \emph{supervised
prevalence estimation}, or \emph{learning to quantify}) is the task of
training, by means of supervised learning, a predictor that estimates
the relative frequency (also known as \emph{prevalence}, or
\emph{prior probability}) of the classes of interest in a sample of
unlabelled data points, where the data used to train the predictor are
a set of labelled data points; see \citet{Gonzalez:2017it} for a
survey of quantification research.

\wasblue{
\begin{definition}
  \label{def:ql}
  Given a sample $\sigma$ of data points $\mathbf{x} \in \mathcal{X}$,
  with unknown target labels in domain $\mathcal{S}$, a
  \emph{quantifier} $q(\sigma)$ is an estimator
  $q:2^\mathcal{X} \rightarrow [0,1]$ that predicts the prevalence of
  class $s$ in the sample $\sigma$ as
  $\hat{p}_{\sigma}^{q}(s) = q(\sigma)$.
\end{definition}

\begin{remark}
  The above definition is deliberately broad to include the trivial
  \emph{classify and count} baseline introduced below. In practice, a
  method is \emph{truly} quantification-based when explicitly
  targeting prevalence estimates, rather than simply treating them as
  a by-product of classification. This includes methods that make use
  of dedicated loss functions, task-specific adjustments, and ad hoc
  model selection procedures. Typically, the prevalence estimates
  issued by these methods display desirable properties of unbiasedness
  and convergence.
\end{remark}
}

\noindent Quantification can be trivially solved via classification,
i.e., by classifying all the unlabelled data points by means of a
standard classifier, counting, for each class, the data points that
have been assigned to the class, and normalizing. However, it has
unequivocally been shown (see, among many others,
\citet{Forman:2008kx,Gonzalez:2017it,Gonzalez-Castro:2013fk,Moreo:2022bf,Vaz:2019eu})
that solving quantification by means of this \emph{classify and count}
(CC) method is suboptimal, and that more accurate quantification
methods exist.
The key reason behind this is the fact that many applicative scenarios
suffer from \emph{distribution shift}, therefore the class prevalence
values in the training set may substantially differ from the class
prevalence values in the unlabelled data that the classifier issues
predictions for \citep{morenotorres2012:unifying}. The presence of
distribution shift means that the well-known IID assumption, on which
most learning algorithms for training classifiers are based, does not
hold; in turn, this means that CC will perform suboptimally on
scenarios that exhibit distribution shift, and that the higher the
amount of shift, the worse we can expect CC to perform.

A wide variety of quantification methods have been defined in the
literature. In the experiments presented in this paper, we compare six
such methods, which we briefly present in this section. One of them is
the trivial CC baseline; we have chosen the other five methods over
other contenders because they are simple and proven, and because some
of them (especially the ACC, PACC, SLD and HDy methods; see below)
have shown top-notch performance in recent comparative tests run in
other domains~\citep{Moreo:2021sp,Moreo:2022bf}. We briefly describe
them here, with direct reference to the application we are interested
in, i.e., estimating the prevalence of a protected subgroup.

As mentioned above, an obvious way to solve quantification (used,
among others, in Equation~\ref{eq:mu_cc}) is by aggregating the
predictions of a ``hard'' classifier, i.e., a classifier
$k_s:\mathcal{X}\rightarrow \{0,1\}$ that outputs Boolean decisions
regarding membership in a sensitive group (defined by constraint
$S=s$).
% encoding $\mathcal{Y}=\{\ominus,\oplus\}$, with $\ominus\equiv 0$
% and $\oplus\equiv 1$) \footnote{While in this work quantifiers are
% used to estimate the prevalence of sensitive attribute $S$, in this
% section, we present quantification approaches targeting a generic
% variable $\mathcal{Y}$, as typical in the literature. Similarly, the
% hard classifier $h$ in this section should not be confused with the
% one introduced in Section~\ref{sec:notation}. The notational overlap
% is limited to the current section, where we follow the typical
% machine learning conventions in presenting the supervised task of
% quantification.} Of course,
% $\Pr(S=1|\mathbf{x})=(1-\Pr(S=0|\mathbf{x}))$.
The (trivial) \emph{classify and count} (\textbf{CC}) quantifier then
comes down to computing
\begin{align}
  \label{eq:cc}
  \hat{p}_{\sigma}^{\mathrm{CC}}(s) & = \frac{\sum_{\mathbf{x}_i\in \sigma}
                                      k_s(\mathbf{x}_i)}{|\sigma|}.
\end{align}
\noindent Alternatively, quantification methods can use a ``soft''
classifier $\pi_s:\mathcal{X}\rightarrow [0,1]$ that produces
posterior probabilities $\Pr(s|\mathbf{x}_i)$.
% , representing the probability that the classifier attributes to the
% fact that $\mathbf{x}_i$ belongs to the subpopulation with sensitive
% attribute $S=s$.
The resulting \emph{probabilistic classify and count} quantifier
(\textbf{PCC})~\citep{Bella:2010kx} is defined by the equation
\begin{align}
  \label{eq:pcc}
  \hat{p}_{\sigma}^{\mathrm{PCC}}(s) & = \frac{\sum_{\mathbf{x}_i\in \sigma}
                                       \pi_{s}(\mathbf{x}_i)}{|\sigma|}.
\end{align}
\noindent It should be noted that PCC and CC are clearly related to WE
and TE, summarized by Equations \eqref{eq:mu_cc} and \eqref{eq:mu_we},
as shown later in Proposition \ref{prop:we_pcc}.

% \noindent Of course, for any quantification method $q$ we have
% $\hat{p}_{\sigma}^{\mathit{q}}(\ominus)=(1-\hat{p}_{\sigma}^{\mathit{q}}(\oplus))$.

A different and popular quantification method consists of applying an
\emph{adjustment} to the prevalence
$\hat{p}^{\mathrm{CC}}_{\sigma}(s)$ estimated through ``classify and
count''. It is easy to check that, in the binary case, the true
prevalence $p_{\sigma}(s)$ and the estimated prevalence
$\hat{p}^{\mathrm{CC}}_{\sigma}(s)$ are such that
\begin{equation}
  \label{eq:exactacc} 
  p_{\sigma}(s) = \frac{\hat{p}_{\sigma}^{\mathrm{CC}}(s) - 
  \mathrm{fpr}_{k_{s}}}{\mathrm{tpr}_{k_{s}} - \mathrm{fpr}_{k_{s}}}
\end{equation}
\noindent where $\mathrm{tpr}_{k_{s}}$ and $\mathrm{fpr}_{k_{s}}$
stand for \emph{true positive rate} and \emph{false positive rate} of
the classifier $k_{s}$ used to obtain
$\hat{p}_{\sigma}^{\mathrm{CC}}(s)$. The values of
$\mathrm{tpr}_{k_{s}}$ and $\mathrm{fpr}_{k_{s}}$ are unknown, but can
be estimated via $k$-fold cross-validation on the training data. This
boils down to using the results $k_{s}(\mathbf{x}_{i})$ obtained in
the $k$-fold cross-validation (i.e., $\mathbf{x}_{i}$ ranges on the
training items) in Equations
\begin{align}
  \begin{split}
    \label{eq:tprandfpr}
    \hat{\mathrm{tpr}}_{k_{s}} = \frac{\sum_{\{(\mathbf{x}_i,
    s_i)|s_i=s\}}k_{s}(\mathbf{x}_i)}{|\{(\mathbf{x}_i, s_i)|s_i=s\}|}
    \hspace{3em} \hat{\mathrm{fpr}}_{k_{s}}
    =\frac{\sum_{\{(\mathbf{x}_i, s_i)|s_i\neq
    s\}}k_s(\mathbf{x}_i)}{|\{(\mathbf{x}_i, s_i)|s_i\neq s\}|}.
  \end{split}
\end{align}
\noindent We obtain estimates of $p_{\sigma}^{\mathrm{ACC}}(s)$, which
define the \emph{adjusted classify and count} method
(\textbf{ACC})~\citep{Forman:2008kx}, by replacing
$\mathrm{tpr}_{k_{s}}$ and $\mathrm{fpr}_{k_{s}}$ in Equation
\ref{eq:exactacc} with the estimates of Equation~\ref{eq:tprandfpr},
i.e.,
\begin{equation}
  \label{eq:acc} 
  \hat{p}_{\sigma}^{\mathrm{ACC}}(s) = 
  \frac{\hat{p}_{\sigma}^{\mathrm{CC}}(s) 
  - \hat{\mathrm{fpr}}_{k_{s}}}{\hat{\mathrm{tpr}}_{k_{s}} 
  - \hat{\mathrm{fpr}}_{k_{s}}}.
\end{equation}
\noindent If the soft classifier $\pi_{s}(\mathbf{x}_i)$ is used in
place of $k_{s}(\mathbf{x}_i)$, analogues of
$\hat{\mathrm{tpr}}_{k_{s}}$ and $\hat{\mathrm{fpr}}_{k_{s}}$ from
Equation~\ref{eq:tprandfpr} can be defined as
\begin{align}
  \begin{split}
    \label{eq:tprandfpr2}
    \hat{\mathrm{tpr}}_{\pi} = \frac{\sum_{\{(\mathbf{x}_i,
    s_i)|s_i=s\}}\pi_{s}(\mathbf{x}_i)}{|\{(\mathbf{x}_i,
    s_i)|s_i=s\}|} \hspace{3em} \hat{\mathrm{fpr}_{\pi}}
    =\frac{\sum_{\{(\mathbf{x}_i, s_i)|s_i\neq
    s\}}\pi_s(\mathbf{x}_i)}{|\{(\mathbf{x}_i, s_i)|s_i\neq s\}|}.
  \end{split}
\end{align}
\noindent We obtain $p_{\sigma}^{\mathrm{PACC}}(s)$ estimates, which
define the \emph{probabilistic adjusted classify and count} method
(\textbf{PACC})~\citep{Bella:2010kx}, by replacing all factors on the
right-hand side of Equation~\ref{eq:acc} with their ``soft''
counterparts from Equations \ref{eq:pcc} and \ref{eq:tprandfpr2},
i.e.,
\begin{equation}
  \label{eq:pacc} 
  \hat{p}_{\sigma}^{\mathrm{PACC}}(s) = 
  \frac{\hat{p}_{\sigma}^{\mathrm{PCC}}(s) 
  - \hat{\mathrm{fpr}}_{\pi}}{\hat{\mathrm{tpr}}_{\pi} 
  - \hat{\mathrm{fpr}}_{\pi}}.
\end{equation}
\noindent A further method is the one proposed in
\citep{Saerens:2002uq} (which we here call \textbf{SLD}, from the names
of its proposers), which consists of training a probabilistic
classifier and then using the Expectation–Maximization (EM) algorithm
(i) to update (in an iterative, mutually recursive way) the posterior
probabilities that the classifier returns, and (ii) to re-estimate the
class prevalence values of the test set until convergence. This makes
the method robust to distribution shift, since the iterative process
allows the estimates of the prevalence values to become increasingly
attuned to the changed conditions found in the unlabelled set.
% \fabsebcomment{Qui potrebbe valer la pena accennare come mai SLD è
% robusto al distribution shift; di fatto, il lettore di questo papero
% potrebbe chiedersi come facciano questi algoritmi di quantificazione
% a essere robusti al distribution shift.}
Pseudocode describing the SLD algorithm can be found in Appendix
\ref{app:sld}.

%Lastly, we 
We consider \textbf{HDy} \citep{Gonzalez-Castro:2013fk}, a
probabilistic binary quantification method that views quantification
as the problem of minimizing the divergence (measured in terms of the
Hellinger Distance) between two cumulative distributions of posterior
probabilities returned by the classifier, one coming from the
unlabelled examples and the other coming from a validation set. HDy
looks for the mixture parameter $\alpha$ that best fits the validation
distribution (consisting of a mixture of a ``positive'' and a
``negative'' distribution) to the unlabelled distribution, and returns
$\alpha$ as the estimated prevalence of the positive class. Here,
robustness to distribution shift is achieved by the analysis of the
distribution of the posterior probabilities in the unlabelled set,
which reveals how conditions have changed with respect to the training
data. A more detailed description of HDy can be found in Appendix
\ref{app:hdy}.

\blue{Lastly, we consider Maximum Likelihood Prevalence Estimator (\textbf{MLPE}),
a dummy method that assumes there is no shift and always returns the 
class prevalence value as found in the training data, as the estimate of any future test sample.
This method is not a serious contender, since MLPE makes no real attempt to
address the problem. Notwithstanding this, MLPE is going to generate very low error values
in all protocols in which the test prevalence is kept fixed.}

% -------------------------------------------------------------------

\subsection{Using Quantification to Measure Fairness Under
Unawareness}
\label{sec:adaptation}

\noindent We assume the existence, in
the operational setup, of three separate sets of data points:
\begin{itemize}
\item A \emph{training set} $\mathcal{D}_1$ for $h$,
  $\mathcal{D}_1 = \{(\mathbf{x}_i,y_i) \;|\;
  \mathbf{x}_i\in\mathcal{X}, y_i\in\mathcal{Y}\}$, typically of large
  size, where $h$ is the classifier whose fairness we want to
  measure. Given the difficulties inherent in demographic data
  procurement mentioned in the introduction, we assume that the
  sensitive attribute $S$ is not part of the vectorial representation
  $X$.
\item A small \emph{auxiliary set}
  $\mathcal{D}_2 = \{(\mathbf{x}_i,s_i) \;|\;
  \mathbf{x}_i\in\mathcal{X}, s_i\in\mathcal{S}\}$, containing
  demographic data, employed to train quantifiers for the sensitive
  attribute.
  % \fabsebcomment{Questa frase qui sotto è ripetuta identica verso la
  % fine della Sezione 5; secondo me sta meglio là.} This dataset may
  % originate from a targeted effort, such as interviews
  % \citep{baker2005patients}, surveys sent to customers asking for
  % voluntary disclosure of sensitive attributes
  % \citep{andrus2021:wm}, or other optional means to share
  % demographic information
  % \citep{beutel2019putting,beutel2019fairness}. %Alternatively it could derive from data acquisitions carried out for other purposes \citep{gladonclavell2020auditing}.
  % \fabsebcomment{Questa frase qui sotto non si capisce cosa voglia
  % dire.}
  % Both $\mathcal{D}_1$ and $\mathcal{D}_2$ are in the development
  % domain of the operational setup, i.e.\ they are available to
  % practitioners prior to model deployment.
\item A set
  $\mathcal{D}_3 = \{\mathbf{x}_i \;|\; \mathbf{x}_i\in\mathcal{X}\}$
  of unlabelled data points, which are the data to which classifier
  $h$ is to be applied, representing the deployment
  conditions. Alternatively, $\mathcal{D}_3$ could also be a labelled
  held-out test set available at a company, if it has acted
  proactively rather than reactively, for pre-deployment audits
  \citep{raji2020closing}. In our experiments we will use labelled data
  and call $\mathcal{D}_3$ the \emph{test set}, on which
  the fairness of the classifier $h$ should be measured.
  % \fabsebcomment{Alessandro, adesso ho capito perché insistevi per
  % chiamarlo ``deployment set''. Perché per te la pipeline che
  % descriviamo qui sopra era la pipeline ``in production'', mentre
  % per me era la pipeline dei nostri esperimenti. Ho provato a
  % spiegare, come vedi sopra.}
  % \fabsebcomment{Continuo a pensare che sia assurdo chiamare questo
  % insieme il ``deployment'' set; tutto il mondo del machine learning
  % lo chiamerebbe il \emph{test} set, dato che un test set è, ovunque
  % in applicazioni ML, l'insieme che, in una valutazione sperimentale
  % come la nostra, fa le veci del test set, e cioè dei dati
  % unlabelled a cui il sistema oramai in produzione viene
  % applicato. Tra l'altro, ``deploy`` in inglese vuol dire un'altra
  % cosa (``schierare'', ``mettere in campo''; si può dire ``to deploy
  % a classifier`` ma non ``to deploy a classifier on a dataset''.}
\end{itemize}

\noindent It is worth re-emphasizing that, from the perspective of the
estimation task at hand, i.e., estimating the fairness of the classifier $h$, $\mathcal{D}_2$
represents the quantifier's training set, while $\mathcal{D}_3$ is its
test set.
\wasblue{
\begin{prop} \label{prop:ql4facct} Observational measures of
  algorithmic fairness, such as the ones introduced in Definition
  \ref{def:dd}, can be computed, under unawareness of sensitive
  attributes, by estimating the prevalence of the sensitive attribute
  in specific subsets of the test set.
\end{prop}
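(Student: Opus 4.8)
The plan is to show that every probability occurring in Definition \ref{def:dd} can be rewritten, by elementary manipulations, as a ratio of two kinds of quantities: (i) frequencies that are directly observable on the test set $\mathcal{D}_3$ once $h$ has been applied to it, and (ii) prevalences of the sensitive attribute value $s$ over explicitly described subsets of $\mathcal{D}_3$ — the latter being exactly what a quantifier $q$ trained on $\mathcal{D}_2$ is meant to estimate. I will carry out the argument in full for demographic disparity, which is the running example of the paper, and then indicate how the remaining four disparities reduce to the same pattern.

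First I would fix $s\in\mathcal{S}$ and observe that every quantity in Definition \ref{def:dd} has the form $\Pr(A\mid S=s,B)$, where $A$ and $B$ are events definable purely in terms of $\hat{Y}$ and $Y$; for demographic disparity $A=\{\hat{Y}=\oplus\}$ and $B$ is trivial. Treating $\mathcal{D}_3$ as the population and counting, the identity $\Pr(\hat{Y}=\oplus\mid S=s)=\Pr(\hat{Y}=\oplus,S=s)/\Pr(S=s)$ gives a numerator equal to $p_{\mathcal{D}_3^{\oplus}}(s)\,|\mathcal{D}_3^{\oplus}|$ and a denominator equal to $p_{\mathcal{D}_3}(s)\,|\mathcal{D}_3|$, whence
\begin{equation*}
  \Pr(\hat{Y}=\oplus\mid S=s)=\frac{|\mathcal{D}_3^{\oplus}|}{|\mathcal{D}_3|}\cdot\frac{p_{\mathcal{D}_3^{\oplus}}(s)}{p_{\mathcal{D}_3}(s)},
\end{equation*}
and therefore
\begin{equation*}
  \delta_{h}^{S\mathrm{, DD}}=\frac{|\mathcal{D}_3^{\oplus}|}{|\mathcal{D}_3|}\left(\frac{p_{\mathcal{D}_3^{\oplus}}(1)}{p_{\mathcal{D}_3}(1)}-\frac{p_{\mathcal{D}_3^{\oplus}}(0)}{p_{\mathcal{D}_3}(0)}\right).
\end{equation*}
Here $|\mathcal{D}_3^{\oplus}|/|\mathcal{D}_3|=\Pr(\hat{Y}=\oplus)$ is computable because $h$ uses only non-sensitive attributes, while $p_{\mathcal{D}_3^{\oplus}}(s)$ and $p_{\mathcal{D}_3}(s)$ are prevalences of $S=s$ in $\mathcal{D}_3^{\oplus}$ and in $\mathcal{D}_3$; replacing them by the quantifier estimates $\hat{p}_{\mathcal{D}_3^{\oplus}}^{q}(s)$ and $\hat{p}_{\mathcal{D}_3}^{q}(s)$ yields an estimator of $\delta_{h}^{S\mathrm{, DD}}$ under unawareness.

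For the other disparities the same computation is performed inside the appropriate stratum of $\mathcal{D}_3$: for TPRD and TNRD one conditions additionally on $Y=\oplus$ (resp.\ $Y=\ominus$) and applies the identity within the set of true positives (resp.\ true negatives), so the prevalences needed are those of $S=s$ in $\{\mathbf{x}_i\in\mathcal{D}_3: h(\mathbf{x}_i)=\oplus,\,y_i=\oplus\}$ and in $\{\mathbf{x}_i\in\mathcal{D}_3: y_i=\oplus\}$; for PPVD and NPVD one conditions on $\hat{Y}=\oplus$ (resp.\ $\ominus$), i.e.\ restricts to $\mathcal{D}_3^{\oplus}$ (resp.\ $\mathcal{D}_3^{\ominus}$), with the roles of $Y$ and $\hat{Y}$ interchanged. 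In each case the multiplicative factor that plays the role of $|\mathcal{D}_3^{\oplus}|/|\mathcal{D}_3|$ above is a recall, a precision, or a selection rate of $h$ on the labelled test set, hence observable, and the remaining unknowns are prevalences of $S$ on the specified subsets of $\mathcal{D}_3$.

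The algebraic rewritings are routine; the one point that needs care is the bookkeeping — checking that every subset on which a prevalence must be estimated is carved out of $\mathcal{D}_3$ by conditions on $\hat{Y}$ (available since $h$ ignores $S$) and on $Y$ (available because $\mathcal{D}_3$ is, in our setting, a labelled held-out set), and that each such subset is non-empty so that the conditional probabilities, and the quantifier $q$, are well defined. I would also remark, separately from the proposition itself, that since these subsets generally differ in sensitive-attribute prevalence from the auxiliary set $\mathcal{D}_2$ on which $q$ is trained, the reduction by itself does not guarantee accuracy: the resulting estimator inherits whatever robustness to distribution shift the chosen quantification method has, which is precisely why one prefers ACC, PACC, SLD, or HDy over plain CC.
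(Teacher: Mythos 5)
Your proposal is correct and follows essentially the same route as the paper: both rewrite each conditional probability $\Pr(A\mid S=s,B)$ via Bayes' rule as a product of a known quantity (computable from $\hat{Y}$ and, where needed, $Y$ on the test set) and a ratio of prevalences of $S=s$ over explicitly described subsets of $\mathcal{D}_3$ — your worked case is DD (which matches the paper's Equation~\eqref{eq:mu2}) while the paper's proof works out TPRD, but the identity and the reduction are identical. The only cosmetic difference is that you estimate $p_{\mathcal{D}_3}(s)$ directly, whereas the paper further decomposes it via Equation~\eqref{eq:dd_den} into prevalences over $\mathcal{D}_3^{\oplus}$ and $\mathcal{D}_3^{\ominus}$; this does not affect the validity of the proposition.
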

\begin{proof}
  We prove this statement for TPRD in Definition \ref{def:dd}, which
  we recall below: \small{
  \begin{align*}
    \text{True Positive Rate Disparity: } \delta_{h}^{S\mathrm{, TPRD}} = \Pr(\hat{Y}=\oplus|S=1, Y=\oplus) - \Pr(\hat{Y}=\oplus|S=0, Y=\oplus) 
  \end{align*}
  }
  \noindent Both terms in the above equation can be written as
  \begin{align*}
    \Pr(\hat{Y}=\oplus|S=s, Y=\oplus) &= \frac{\Pr(Y=\oplus, \hat{Y}=\oplus,S=s)}{\Pr(Y=\oplus, S=s)} \\
                                      &= \underbrace{ \frac{\Pr(S=s| Y=\oplus,\hat{Y}=\oplus)}{\Pr(S=s| Y=\oplus)}}_{\text{obtained from prevalence estimator}} \cdot \underbrace{\frac{\Pr( Y=\oplus,\hat{Y}=\oplus)}{\Pr(Y=\oplus)}}_{\text{known quantity}}
  \end{align*}
 
  % \Pr(\hat{Y}=\oplus|S=s)=\overbrace{\Pr(S=s|\hat{Y}=\oplus)}^{\text{EST.}}\frac{\overbrace{\Pr(\hat{Y}=\oplus)}^{\text{KNOWN}}}{\underbrace{\Pr(S=s|\hat{Y}=\oplus)}_{\text{EST.}}
  % \underbrace{\Pr(\hat{Y}=\oplus)}_{\text{KNOWN}} +
  % \underbrace{\Pr(S=s|\hat{Y}=\ominus)}_{\text{EST.}}
  % \underbrace{\Pr(\hat{Y}=\ominus)}_{\text{KNOWN}}}
 
  \noindent In other words, TPRD can be calculated by estimating the
  prevalence of the sensitive attribute among the positives and the
  true positives in $\mathcal{D}_3$. Analogous results can be proven
  for other measures of observational fairness, under the assumption
  that $Y$ and $\hat{Y}$ are known.
\end{proof}

\begin{remark}
  This proposition is important for two reasons. First, it shows that
  inference of sensitive attributes at the individual level is not
  necessary to measure fairness under unawareness; rather, prevalence
  estimates in given subsets are sufficient. Second, it suggests that
  methods directly targeting prevalence estimates (i.e.,
  \emph{quantifiers}) are especially suited in this setting.
\end{remark}

\noindent Notice that, for the purposes of a fairness audit, it is
common to assume that the ground truth variable $Y$ is available in
$\mathcal{D}_3$. In the banking scenario of Example \ref{ex:bank},
this is only partially realistic, as the outcomes for the accepted
applicants are eventually observed, but the outcomes for the rejected
applicants remain unknown, leaving us with a problem of sample
selection bias \citep{banasik2003sample}. This is an instance of a
general estimation problem, common to all fairness criteria that
require knowledge of the ground truth variable $Y$, such as TPRD,
TNRD, PPVD, and NPVD in Definition \ref{def:dd}. This represents an
open research problem \citep{wang2021fair,sabato2020bounding} which is
beyond the scope of this work \blue{and demands additional caution in the estimation and interpretation of these fairness measures.} 

\replace{In the remainder of this article, we avoid this issue by focusing on a detailed study of demographic disparity (DD), which does not require information on the ground truth $Y$, leaving additional measures of observational fairness for future work.}{In the remainder of this article, we focus on a
detailed study of demographic disparity (DD). This allows us to thoroughly characterize and discuss DD estimators while avoiding the pitfalls and complexity of uncertain ground truth information. We leave additional measures of observational fairness for future work.} 

Following \citep{chen2019:fu},
we write DD as
\begin{align}
  \delta_{h}^{S} = \Pr(\hat{Y}=\oplus|S=1) - 
  \Pr(\hat{Y}=\oplus|S=0) = \mu(1) - \mu(0), \label{eq:dd}
\end{align}
\noindent where
\begin{align}
  \mu(s) = \Pr(\hat{Y}=\oplus|S=s) \label{eq:mu}
\end{align} 
\noindent is the acceptance rate of individuals in the group $S=s$.}
To estimate the demographic disparity of a classifier $h(\mathbf{x})$
in the test set $\mathcal{D}_3$, we
can use any quantification approach from
Section~\ref{sec:learningtoquantify}.
% \fabsebcomment{Questo deployment set $\mathcal{D}_3$ è stato
% introdotto 5 pagine fa, e qui il lettore se l'è già bell'e scordato;
% questo è un altro motivo secondo me per cui quella parte andrebbe
% spostata da pagina 5 a qua.}
Applying Bayes' theorem to
Equation~\eqref{eq:mu}, we obtain
\begin{align}
  \mu(s) = & \ p_{\mathcal{D}_{3}}(\oplus|s) \nonumber \\
  = & \ p_{\mathcal{D}_{3}^{\oplus}}(s) 
      \frac{p_{\mathcal{D}_{3}}(\oplus)}{p_{\mathcal{D}_{3}}(s)},
      \label{eq:mu2}
\end{align}
\noindent where we use $p_{\mathcal{D}_{3}}(\oplus)$ as a shorthand of
$p_{\mathcal{D}_{3}}(h(\mathbf{x})=\oplus)$, and where we have defined
\begin{align*}
  \begin{split}\label{eq:split}
    % \mathcal{D}_{3}^{\ominus} = \{(\mathbf{x},s) \in \mathcal{D}_3 |
    % h(\mathbf{x})=\hat{y} \}, \quad \hat{y} \in \left \lbrace
    %   \ominus,\oplus \right \rbrace.
    \mathcal{D}_{3}^{\oplus} = & \{\mathbf{x} \in \mathcal{D}_3 \;|\;
    h(\mathbf{x})=\oplus \} \\
    \mathcal{D}_{3}^{\ominus} = & \{\mathbf{x} \in \mathcal{D}_3 \;|\;
    h(\mathbf{x})=\ominus \}.
  \end{split}
\end{align*}
%
% \noindent and where we make explicit the fact that, if a value $s$
% that attribute $S$ can take is viewed as a class, the probabilities
% $\Pr(s|\ominus)$ and $\Pr(s|\oplus)$ of Equation~\ref{eq:dd1} may be
% seen as the prevalence values of class $s$ in the two samples
% $\mathcal{D}_{3}^{\oplus}$ and $\mathcal{D}_{3}^{\ominus}$. In other
% words, measuring demographic disparity is reduced to estimating the
% prevalence values of class $s$ in the two samples
% $\mathcal{D}_{3}^{\oplus}$ and $\mathcal{D}_{3}^{\ominus}$, i.e.,
% \emph{it can be framed as a task of quantification}.

\noindent Since $p_{\mathcal{D}_{3}}(\oplus)$ is known (it is the
fraction of items in $\mathcal{D}_{3}$ that have been assigned class
$\oplus$ by the classifier $h$), in order to compute $\mu(s)$ through
Equation~\eqref{eq:mu2}, for $s\in\{0,1\}$, we only need to estimate
the prevalence values $\hat{p}_{\mathcal{D}_{3}^{\oplus}}(s)$ and
$\hat{p}_{\mathcal{D}_{3}^{\ominus}}(s)$; the latter is needed to
estimate the denominator of Equation~\eqref{eq:mu2}, i.e., the
prevalence $p_{\mathcal{D}_{3}}(s)$ of the sensitive attribute value
$s$ in the entire test set $\mathcal{D}_3$, since
\begin{equation}
  p_{\mathcal{D}_{3}}(s) = p_{\mathcal{D}_{3}^{\oplus}}(s) 
  \cdot p_{\mathcal{D}_{3}}(\oplus) + p_{\mathcal{D}_{3}^{\ominus}}(s) 
  \cdot p_{\mathcal{D}_{3}}(\ominus). \label{eq:dd_den}
\end{equation}
%
% Since quantification is a supervised task, we need a set of labelled
% data points for training our quantifiers, i.e., a set of data points
% $\mathbf{x}$ which use the same set $\mathcal{A}$ of features as
% $\mathcal{D}_{1}$ and $\mathcal{D}_{3}$ but whose labels are the
% values of $S$ (which are here viewed as classes). We call this
% ``auxiliary'' data set $\mathcal{D}_{2}$.
%
\noindent In order to compute $p_{\mathcal{D}_{3}^{\oplus}}(s)$ and
$p_{\mathcal{D}_{3}^{\ominus}}(s)$ we can use a quantification-based
approach, which can be easily integrated into existing machine
learning workflows, as summarized by
the method below.

\medskip

\noindent \wasblue{\textbf{Method}. Quantification-Based Estimate of
Demographic Disparity.}

\begin{enumerate}

\item The classifier $h:\mathcal{X}\rightarrow\mathcal{Y}$ is trained
  on $\mathcal{D}_1$ and ready for deployment, e.g., to estimate the
  creditworthiness of individuals. The assumption that, at this
  training stage, we are unaware of the sensitive attribute $S$ is due
  to the inherent difficulties in demographic data procurement already
  mentioned in Section~\ref{sec:intro}.
  % A \emph{training set} \afabcomment{development set may more
  % appropriate, to stress that this is all the data used to choose an
  % appropriate classifier for production, including training,
  % validation and test.} $\mathcal{D}_1$ for $h$ involving $(X,Y)$,
  % typically of large size. Given the inherent difficulties in
  % demographic data procurement, we expect this dataset to contain no
  % explicit information on the sensitive attribute $S$.
 
\item We use the classifier $h$ to classify the auxiliary set
  $\mathcal{D}_2$, thus inducing a partition of $\mathcal{D}_2$ into
  $\mathcal{D}_{2}^{\oplus}=\{(\mathbf{x}_i, s_i) \in \mathcal{D}_2
  \;|\; h(\mathbf{x})=\oplus \}$ and
  $\mathcal{D}_{2}^{\ominus}=\{(\mathbf{x}_i, s_i) \in \mathcal{D}_2
  \;|\; h(\mathbf{x})=\ominus \}$.
 
\item \label{item:trainquantifiers} We use $\mathcal{D}_{2}^{\oplus}$
  as the training set for the quantifier $q_{\oplus}(s)$, whose task
  will be to estimate the prevalence of value $s$ (e.g.,
  African-American applicants) on sets of data points labelled with
  class $\oplus$ (e.g., creditworthy applicants). Likewise, we use
  $\mathcal{D}_{2}^{\ominus}$ as the training set for a quantifier
  $q_{\ominus}(s)$ whose task will be to estimate the prevalence of
  $s$ on sets of data points labelled with
  $\ominus$. \wasblue{Intuitively, separate quantifiers specialized on
  different subpopulations (of positively and negatively classified
  individuals) should perform better than a single quantifier. The
  ablation study in Section \ref{sec:ablation} supports this
  hypothesis.}
  % These two disjoint datasets act as the training sets for the two
  % quantifiers $q_{\ominus}$ and $q_{\oplus}$,
  % respectively. Quantifier $q_{\ominus}$ (or its dual $q_{\oplus}$)
  % is trained on $\mathcal{D}_{2}^{\ominus}$ (resp.,
  % $\mathcal{D}_{2}^{\oplus}$) to estimate the prevalence of data
  % points where $S=s$ among the data points labelled with $\ominus$
  % (resp., $\oplus$).
 
\item The classifier $h$ is deployed, classifying the test set
  $\mathcal{D}_3$, thus inducing a partition of $\mathcal{D}_3$ into
  positive
  $\mathcal{D}_{3}^{\oplus}=\{\mathbf{x} \in \mathcal{D}_3 \;|\;
  h(\mathbf{x})=\oplus \}$ and negative
  $\mathcal{D}_{3}^{\ominus}=\{\mathbf{x} \in \mathcal{D}_3 \;|\;
  h(\mathbf{x})=\ominus \}$.

\item We apply the quantifier $q_{\oplus}$ to
  $\mathcal{D}_{3}^{\oplus}$ to obtain an estimate
  $\hat{p}_{\mathcal{D}_{3}^{\oplus}}^{q_{\oplus}}(s)$ of the
  prevalence of $s$ in $\mathcal{D}_{3}^{\oplus}$, and we apply
  $q_{\ominus}$ to $\mathcal{D}_{3}^{\ominus}$ to obtain an estimate
  $\hat{p}_{\mathcal{D}_{3}^{\ominus}}^{q_{\ominus}}(s)$ of the
  prevalence of $s$ in $\mathcal{D}_{3}^{\ominus}$. Recall from
  Section~\ref{sec:notation} that $\hat{p}_{\sigma}^q(s)$ denotes the
  prevalence of an attribute value $s$ in a set $\sigma$ as estimated
  via quantification method $q$.
 
\item To avoid numerical instability in the denominator of
  Equation~\eqref{eq:mu_ql} below, we apply Laplace smoothing to the
  estimated prevalence values
  $\hat{p}^{q_{\oplus}}_{\mathcal{D}_3^{\oplus}}(s)$ and
  $\hat{p}^{q_{\ominus}}_{\mathcal{D}_3^\ominus}(s)$. We use the
  variant that uses known incidence rates, using
  $\mathcal{D}_2^\ominus$ and $\mathcal{D}_2^\oplus$ as the control
  populations, and assume a pseudocount $\alpha=1/2$. We thus compute
  the smoothed estimator
  \begin{align*}
    \tilde{p}_{\mathcal{D}_3^\oplus}^{q_{\oplus}}(s) = & \ \frac{\hat{p}_{\mathcal{D}_3^\oplus}^{q_{\oplus}}(s)
                                                         \cdot
                                                         |\mathcal{D}_3^\oplus|+p_{\mathcal{D}_2^\oplus}(s)\cdot \alpha \cdot |\mathcal{Y}|}{|\mathcal{D}_3^\oplus|+\alpha \cdot |\mathcal{Y}|} \\ 
    = & \ \frac{\hat{p}_{\mathcal{D}_3^\oplus}^{q_{\oplus}}(s)
        \cdot
        |\mathcal{D}_3^\oplus|+p_{\mathcal{D}_2^\oplus}(s)}{|\mathcal{D}_3^\oplus|+1}
  \end{align*}
  and analogously for
  $\tilde{p}_{\mathcal{D}_3^\ominus}^{q_{\ominus}}(s)$.
 
\item Finally, we estimate the demographic disparity of $h$, defined
  in Equation~\eqref{eq:dd}, as
  \begin{align}
    \hat{\delta}_{h}^{S} = \hat{\mu}(1) - \hat{\mu}(0)
    \label{eq:dd_estim}
  \end{align}
  \noindent where, as from Equations~\eqref{eq:mu2}
  and~\eqref{eq:dd_den},
  \begin{align}
    \hat{\mu}(s) &= \tilde{p}_{\mathcal{D}_{3}^{\oplus}}^{q_{\oplus}}(s) \cdot \frac{p_{\mathcal{D}_{3}}(\oplus)}{\tilde{p}_{\mathcal{D}_{3}^{\oplus}}^{q_{\oplus}}(s) \cdot p_{\mathcal{D}_{3}}(\oplus) + \tilde{p}_{\mathcal{D}_{3}^{\ominus}}^{q_{\ominus}}(s) \cdot p_{\mathcal{D}_{3}}(\ominus)} \label{eq:mu_ql}
  \end{align}
\end{enumerate}

\wasblue{
\begin{remark}
  \label{rem:ql_fisher}
  Therefore, prevalence estimates
  $\hat{p}_{\mathcal{D}_{3}^{\oplus}}^{q_{\oplus}}(s)$ and
  $\hat{p}_{\mathcal{D}_{3}^{\ominus}}^{q_{\ominus}}(s)$, obtained
  with a quantification method of the type introduced in Section
  \ref{sec:learningtoquantify}, can be translated into estimates of a
  classifier's demographic disparity using Equations
  \eqref{eq:dd_estim} and \eqref{eq:mu_ql}. Importantly, the bias and
  variance of said estimate depend on the properties of the underlying
  quantification method, which have been characterized in the
  quantification literature. For example, SLD, ACC, and PACC have been
  shown to be \emph{Fisher-consistent}, that is, unbiased, under prior
  probability shift \citep{tasche2017:fisher,Vaz:2019eu}. In other
  words, we expect Equation \ref{eq:dd_estim} instantiated with SLD,
  PCC, and PACC to provide unbiased estimates when $\mathcal{D}_2$ and
  $\mathcal{D}_3$ are linked by prior probability shift. We verify
  this property in Sections \ref{sec:sample_prev_d3} and
  \ref{sec:sample_prev_d2}.

\end{remark}

}

\noindent It is worth noting that the weighted estimator (WE)
introduced in \citep{chen2019:fu}, summarized by
Equation~\eqref{eq:mu_we}, can be viewed as a special case of this
approach, as shown by the proposition below.
\begin{prop} \label{prop:we_pcc} The weighted estimator of
  Equation~\eqref{eq:mu_we} is a special case of quantification-based
  estimation of demographic disparity, instantiated with the PCC
  quantification method. \wasblue{Moreover, the threshold estimator of
  Equation~\eqref{eq:mu_cc} corresponds to CC.}
\end{prop}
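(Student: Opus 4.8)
The plan is to show that the two estimators in Chen et al. coincide, after algebraic rearrangement, with the quantification-based estimator of $\mu(s)$ given in Equation~\eqref{eq:mu_ql} when the underlying quantifier is PCC (resp.\ CC). The proof is essentially a matter of rewriting $\hat\mu^{\mathrm{WE}}(s)$ from Equation~\eqref{eq:mu_we} until it matches the right-hand side of Equation~\eqref{eq:mu_ql}; the conceptual content is small, so the task is mostly bookkeeping with the set-partition notation.

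First I would unpack $\hat\mu^{\mathrm{WE}}(s) = \frac{\sum_{\mathbf{x}_i} \pi_s(\mathbf{x}_i)\, h_\oplus(\mathbf{x}_i)}{\sum_{\mathbf{x}_i} \pi_s(\mathbf{x}_i)}$ where the sums range over $\mathcal{D}_3$. In the numerator, the factor $h_\oplus(\mathbf{x}_i)$ restricts the sum to $\mathcal{D}_3^\oplus$, so the numerator equals $\sum_{\mathbf{x}_i \in \mathcal{D}_3^\oplus} \pi_s(\mathbf{x}_i)$, which by the definition of PCC in Equation~\eqref{eq:pcc} is exactly $\hat{p}_{\mathcal{D}_3^\oplus}^{\mathrm{PCC}}(s) \cdot |\mathcal{D}_3^\oplus|$. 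In the denominator, split the sum over $\mathcal{D}_3$ into its parts over $\mathcal{D}_3^\oplus$ and $\mathcal{D}_3^\ominus$, giving $\hat{p}_{\mathcal{D}_3^\oplus}^{\mathrm{PCC}}(s)\cdot|\mathcal{D}_3^\oplus| + \hat{p}_{\mathcal{D}_3^\ominus}^{\mathrm{PCC}}(s)\cdot|\mathcal{D}_3^\ominus|$. Dividing numerator and denominator by $|\mathcal{D}_3|$ and using $p_{\mathcal{D}_3}(\oplus) = |\mathcal{D}_3^\oplus|/|\mathcal{D}_3|$ and $p_{\mathcal{D}_3}(\ominus) = |\mathcal{D}_3^\ominus|/|\mathcal{D}_3|$ turns this into
\[
\hat\mu^{\mathrm{WE}}(s) = \frac{\hat{p}_{\mathcal{D}_3^\oplus}^{\mathrm{PCC}}(s)\cdot p_{\mathcal{D}_3}(\oplus)}{\hat{p}_{\mathcal{D}_3^\oplus}^{\mathrm{PCC}}(s)\cdot p_{\mathcal{D}_3}(\oplus) + \hat{p}_{\mathcal{D}_3^\ominus}^{\mathrm{PCC}}(s)\cdot p_{\mathcal{D}_3}(\ominus)},
\]
which is precisely Equation~\eqref{eq:mu_ql} with $q_\oplus = q_\ominus = \mathrm{PCC}$ and without the Laplace-smoothing step (i.e., reading $\tilde p$ as $\hat p$). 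I would note explicitly that WE corresponds to the unsmoothed variant, since smoothing is a numerical safeguard added in our Method and not present in the original WE. For the second claim, the identical computation with $k_s$ in place of $\pi_s$ — using that $k_s(\mathbf{x}_i)\,h_\oplus(\mathbf{x}_i)$ restricts to $\mathcal{D}_3^\oplus$ and that $\sum_{\mathbf{x}_i\in\mathcal{D}_3^\oplus} k_s(\mathbf{x}_i) = \hat p_{\mathcal{D}_3^\oplus}^{\mathrm{CC}}(s)\cdot|\mathcal{D}_3^\oplus|$ by Equation~\eqref{eq:cc} — shows $\hat\mu^{\mathrm{TE}}(s)$ of Equation~\eqref{eq:mu_cc} is the same expression with CC, i.e., TE corresponds to CC.

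**The main obstacle** is not mathematical depth but making the index-set manipulation airtight: one must be careful that in Chen et al.'s formulas the sums are over $\mathcal{D}_3$ while in our PCC/CC definitions they are over the relevant sub-partition, and that the counting terms $|\mathcal{D}_3^\oplus|$, $|\mathcal{D}_3^\ominus|$ cancel correctly against the prevalence factors $p_{\mathcal{D}_3}(\oplus)$, $p_{\mathcal{D}_3}(\ominus)$ when one divides through. A secondary subtlety worth a sentence is the role of the soft classifier: WE assumes $\pi_s$ outputs calibrated posteriors $\Pr(s\mid\mathbf{x})$, which is exactly the input PCC consumes, so the correspondence is not merely formal but matches the intended semantics; and one should flag that the equivalence holds verbatim only when $q_\oplus$ and $q_\ominus$ are the *same* method (PCC on both sub-partitions), whereas our general Method permits them to differ.
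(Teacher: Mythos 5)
Your proposal is correct and follows essentially the same argument as the paper's proof in Appendix~\ref{sec:proof}: both rest on decomposing the sum $\sum_{\mathbf{x}_i}\pi_s(\mathbf{x}_i)$ over the partition $\mathcal{D}_3^{\oplus}\cup\mathcal{D}_3^{\ominus}$ and cancelling the cardinalities against $p_{\mathcal{D}_3}(\oplus)$ and $p_{\mathcal{D}_3}(\ominus)$, with the only difference being the direction of the rewriting (you start from $\hat{\mu}^{\mathrm{WE}}$, the paper starts from Equation~\eqref{eq:mu_ql}). Your explicit remarks on the unsmoothed estimator and on $q_{\oplus}=q_{\ominus}=\mathrm{PCC}$ are accurate clarifications consistent with the paper.
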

\begin{proof}
  See Appendix \ref{sec:proof}.
\end{proof}
\wasblue{
\begin{remark}
  The above proposition shows that PCC and WE are equivalent, and that
  the trivial CC quantifier is equivalent to TE. We treat these
  methods as prior art and refer to them as CC and PCC for consistency
  of exposition.
\end{remark}
}
\noindent This quantification-based method of addressing demographic
disparity is suitable for internal fairness audits, since it allows unawareness of
the sensitive attribute $S$ (i) in the set $\mathcal{D}_{1}$ used for
training the classifier $h$ to be audited, and (ii) in the set
$\mathcal{D}_{3}$ on which this classifier is going to be deployed; it
only requires the availability of an auxiliary data set
$\mathcal{D}_{2}$ where the attribute $S$ is labelled.
% \fabsebcomment{Questa frase qui sotto è ripetuta identica verso la
% fine della Sezione 2; secondo me sta meglio qui.}
Dataset $\mathcal{D}_{2}$ may originate from a targeted effort, such
as interviews \citep{baker2005patients}, surveys sent to customers
asking for voluntary disclosure of sensitive attributes
\citep{andrus2021:wm}, or other optional means of sharing demographic
information
\citep{beutel2019putting,beutel2019fairness}. Alternatively, it could
derive from data acquisitions carried out for other purposes
\citep{gladonclavell2020auditing}.
% Both $\mathcal{D}_1$ and $\mathcal{D}_2$ are in the development
% domain of our machine learning pipeline.

%\wasreplace{Additionally, we note that this approach is extremely
%suitable to situations in which the prevalence of attribute value $s$
%in $\mathcal{D}_{2}$ is possibly very different from the prevalence of
%$s$ in the test set $\mathcal{D}_{3}$ (a situation that certainly
%characterizes many operational environments), since many
%quantification approaches are robust by construction to distribution
%shift, \wasreplace{as we will show in the next section.}{as mentioned in
%Remark \ref{rem:ql_fisher}.}}{}
% i.e., to changes in the distribution of the class labels when moving
% from the training to the test set.
%
% Training and maintaining two separate quantifiers $q_{\ominus}$,
% $q_{\oplus}$ may not be necessary. Section~\ref{sec:ablation}
% contains an ablation study where we test the performance of methods
% based on a single quantifier.

% \fabsebcomment{What about a single non-binary (e.g., categorical)
% attribute? We could say that in this case we can assess, in a
% one-vs-all fashion, the fairness of the classifier for each value
% that the attribute can take.}
Finally, note that, in this paper, we assume the existence of a single
binary sensitive attribute $S$ only for ease of exposition; our approach can straightforwardly used in more complex scenarios.

\blue{
\begin{remark} \label{rem:nonbinary}
Our method can deal with multiple, non-binary sensitive attributes. 
\end{remark}
}

If \emph{multiple} sensitive attributes are present at the same time, one can simply measure fairness with respect
to each sensitive attribute separately, if interested in independent
assessments, or jointly, if emphasizing intersectionality
\citep{ghosh2021characterizing}. Our approach can also be extended to
deal with \emph{categorical, non-binary} attributes. In this case, one
needs (1) to extend the notion of demographic disparity to the case of
non-binary attributes. This can be done, e.g., by considering, instead
of the simple difference between two acceptance rates $\mu(s)$ as in
Equation~\eqref{eq:dd}, the variance of the acceptance rates across
the possible values of $S$, or the difference between the highest and
lowest acceptance rate
$\max_{s \in \mathcal{S}} \mu(s) - \min_{s \in \mathcal{S}} \mu(s)$;
and (2) to use a single-label multiclass (rather than a binary)
quantification system. Concerning this, note that all the methods
discussed in Section~\ref{sec:learningtoquantify} except HDy admit
straightforward extensions from the binary case to the single-label
multiclass case (see~\citep{Moreo:2022bf} for details). HDy is a method
for binary quantification only, but it can be adapted to the
single-label multiclass scenario by training a binary quantifier for
each class in one-vs-all fashion, estimating the prevalence of each
class independently of the others, and normalising the obtained
prevalence values so that they sum to 1.
% In this paper we are not concerned with the single-class
% multi-label case since in our experiments we consider sensitive
% attributes $s$ which range on two values only; however, the above
% considerations would be useful for applying our approach to
% non-binary attributes.

\section{Experiments}
\label{sec:experiments}

\subsection{General Setup}
\label{sec:setup}

\begin{table}
  % table caption is above the table
  \caption{Summary of experimental protocols.}
  \label{tab:prot} % Give a unique label
  % For LaTeX tables use
  \centering
  \begin{tabular}{lp{4.5cm}p{5.2cm}l}
    \hline\noalign{\smallskip}
    Protocol name & Variable & \blue{Motivation} & Section \\
    \noalign{\smallskip}\hline\noalign{\smallskip}
    \texttt{sample-prev-$\mathcal{D}_3$} & joint distribution of $(S,\hat{Y})$ in $\mathcal{D}_3$, via sampling & \blue{post-deployment drift, ripple effect, domain adaptation} & \cref{sec:sample_prev_d3} \\
    \texttt{sample-prev-$\mathcal{D}_2$} & joint distribution of $(S,\hat{Y})$ in $\mathcal{D}_2$, via sampling & \blue{skewed auxiliary data, non-response bias} & \cref{sec:sample_prev_d2} \\
    \texttt{sample-size-$\mathcal{D}_2$} & size of $\mathcal{D}_2$, via sampling & \blue{variable response rates, issues with sensitive data procurement}  & \cref{sec:sample_size_d2} \\
    \texttt{sample-prev-$\mathcal{D}_1$} & joint distribution of $(S,Y)$ in $\mathcal{D}_1$, via sampling & \blue{censored data, sampling bias} & \cref{sec:sample_prev_d1} \\
    \texttt{flip-prev-$\mathcal{D}_1$} & joint distribution of $(S,Y)$ in $\mathcal{D}_1$, via label flipping & \blue{ground truth distortion, group-dependent annotation inaccuracy} & \cref{sec:flip_prev_d1} \\
    \noalign{\smallskip}\hline
  \end{tabular}
\end{table}

\noindent 
In this section, we carry out an evaluation of different estimators of
demographic disparity. We propose five experimental protocols
(Sections \ref{sec:sample_prev_d3}--\ref{sec:flip_prev_d1}) summarized
in
Table~\ref{tab:prot}. %\fabsebcomment{Forse non sarebbe stato male fare un sesto blocco di esperimenti, i.e., uno dove prendiamo i dataset come sono, senza modificarli mediante sampling o label flipping; farebbe vedere come, sperabilmente, i metodi di quantificazione vanno meglio di ``classify and count'' anche quando esistono quantità di shift ``naturali'', i.e., moderate, i.e., derivanti dalle moderate fluttuazioni che ci sono di già all'interno di un dataset.}
Each protocol addresses a major challenge that may arise in estimating
fairness under unawareness, and does so by varying the size and the
mutual distribution shift of the training, auxiliary, and test
sets. Protocol names are in the form
\texttt{action-characteristic-dataset}, as they act on datasets
($\mathcal{D}_1$, $\mathcal{D}_2$ or $\mathcal{D}_3$), modifying their
characteristics (size or class prevalence) through one of two actions
(sampling or flipping of labels). We investigate the performance of
six estimators of demographic disparity in each of the five
challenges/protocols, keeping the remaining factors
constant. For every protocol, we perform an extensive
empirical evaluation as follows:

% \fabsebcomment{Questa parte va un po' riordinata.} We perform
% extensive empirical evaluation of different estimation
% techniques. Under each experimental protocol, the size or the
% prevalence of a given dataset is carefully varied based on the
% protocol definition (Sections
% \ref{sec:sample_prev_d1}--\ref{sec:sample_prev_d3}). To obtain
% reliable results we perform multiple repetitions as follows:
%
\begin{itemize}
\item We compare the performance of each estimation technique on three
  datasets (Adult, COMPAS, and CreditCard). The datasets and
  respective preprocessing are described in detail in
  Section~\ref{sec:datasets}. We focus our discussion (and we present
  plots -- see Figures~\ref{fig:sample_prev_d3}--\ref{fig:q_wo_c_d3})
  on the experiments carried out on the Adult dataset, while we
  summarise numerically the results on COMPAS and CreditCard (Tables
  \ref{tab:sample_prev_d3}--\ref{tab:flip_prev_d1}), discussing them
  only when significant differences from Adult arise.
\item We divide a given data set into three %stratified
  subsets $\mathcal{D}_A, \mathcal{D}_B, \mathcal{D}_C$ of identical
  sizes and identical joint distribution over $(S,Y)$. We perform five
  random such splits; in order to test each estimator under the same
  conditions, these splits are the same for every method. For each
  split, we permute the role of the stratified subsets
  $\mathcal{D}_A, \mathcal{D}_B, \mathcal{D}_C$, so that each subset
  alternatively serves as the training set ($\mathcal{D}_1$), or
  auxiliary set ($\mathcal{D}_2$), or test set ($\mathcal{D}_3$). We
  test all (six) such permutations.
\item Whenever an experimental protocol requires sampling from a set,
  for instance when artificially altering a class prevalence value, we
  perform 10 different samplings. To perform extensive experiments at
  a reasonable computational cost, every time an experimental protocol
  requires changing a dataset $\mathcal{D}$ into a version
  $\breve{\mathcal{D}}$ characterized by distribution shift, we also
  reduce its cardinality to $|\breve{\mathcal{D}}|=500$. Further
  details and implications of this choice on each experimental
  protocol are provided in the context of the protocol's setup (e.g.,
  Section~\ref{sec:sample_prev_d1:setup}).
\item Different learning approaches can be used to train the sensitive
  attribute classifier $k_{s}$ underlying the quantification
  methods. We test Logistic Regression (LR) and Support Vector
  Machines (SVMs).\footnote{Some among the quantification methods we
  test in this study require the classifier to output posterior
  probabilities (as is the case for classifiers trained via LR). If a
  classifier natively outputs classification scores that are not
  probabilities (as is the case for classifiers trained via SVM), we
  convert the former into the latter via \citet{Platt:2000fk}'s
  probability calibration method.} Sections
  \ref{sec:sample_prev_d3}--\ref{sec:flip_prev_d1} report results of
  quantification algorithms wrapped around a classifier trained via
  LR. Analogous results obtained with SVMs are reported in Appendix
  \ref{app:svm}.
\item We train the classifier $h$, whose demographic disparity we aim
  to estimate, using LR with balanced class weights (i.e., loss
  weights inversely proportional to class frequencies).
\item To measure %the effect of a given factor on
  the performance of different quantifiers, we report the signed
  estimation error, derived from Equations~\eqref{eq:dd}
  and~\eqref{eq:dd_estim} as
  \begin{align}
    e = \hat{\delta}_{h}^{S} - \delta_{h}^{S} =\left [ \hat{\mu}(1) - \hat{\mu}(0) \right ] -
    \left [ \mu(1) - \mu(0) \right ] \label{eq:estim_err1}
  \end{align}
  \noindent We refer to $|e|$ as the Absolute Error (AE), and evaluate
  the results of our experiments by Mean Absolute Error (MAE) and Mean
  Squared Error (MSE), defined as
  \begin{align}
    \label{eq:mae}
    \mathrm{MAE}(E) & = \frac{1}{|E|}\sum_{e_i\in E}|e_i| \\
    \label{eq:mse}
    \mathrm{MSE}(E) & = \frac{1}{|E|}\sum_{e_i\in E}e_i^2 
  \end{align}

  \noindent where the mean of the signed estimation errors $e_i$ is
  computed over multiple experiments $E$. Overall, our experiments
  consist of over 700,000 separate estimations of demographic
  disparity.
  % We focus on binary outcomes and sensitive attributes, but the
  % definitions and techniques we employ can be straightforwardly
  % extended to a multi-class setting.
\end{itemize}

\noindent The remainder of this section is organized as
follows. Section~\ref{sec:datasets} presents the datasets that we have
chosen and the pre-processing steps we apply.
% in order to map the data points into vectors.
Sections~\ref{sec:sample_prev_d3}--\ref{sec:flip_prev_d1} motivate and
detail each of the five experimental protocols, reporting the
performance of different demographic disparity
estimators. \blue{Section~\ref{sec:fair_classifier} presents an experiment on fairness-aware methods,  where the classifier whose fairness we aim to estimate has been trained to optimize that measure.} Section~\ref{sec:q_not_c} shows that
% good estimators of demographic disparity are not necessarily good at
% classifying the sensitive attribute at an individual level, so that
reliable fairness auditing may be decoupled from undesirable misuse
aimed at inferring the values of the sensitive attribute at an
individual level. Finally, Section~\ref{sec:ablation} describes an
ablation study, aimed at investigating the benefits of training and
maintaining multiple class-specific quantifiers.

% -------------------------------------------------------------------

\subsection{Datasets}
\label{sec:datasets}

\noindent We perform our experiments on three datasets. We choose
Adult and COMPAS, the two most popular datasets in algorithmic fairness
research \citep{fabris2022algorithmic}, and Credit Card Default (hereafter: CreditCard), which
serves as a representative use case for a bank performing a fairness
audit of a prediction tool used internally. For each dataset, we
standardize the selected features by subtracting the mean and scaling
to unit variance.

\textbf{Adult}.\footnote{\url{https://archive.ics.uci.edu/ml/datasets/adult}}
One of the most popular resources in the UCI Machine Learning
Repository, the Adult dataset was curated to benchmark the performance
of machine learning algorithms. It was extracted from the March 1994
US Current Population Survey and represents respondents along
demographic and socioeconomic dimensions, reporting, e.g., their sex,
race, educational attainment, and occupation. Each instance comes with
a binary label, encoding whether their income exceeds \$50,000, which
is the target of the associated classification task. We consider
``sex'' the sensitive attribute $S$, with a binary categorization of
respondents as ``Female'' or ``Male''. From the non-sensitive
attributes $X$, we remove ``education-num'' (a redundant feature),
``relationship'' (where the values ``husband'' and ``wife'' are
near-perfect predictors of ``sex''), and ``fnlwgt'' (a variable
released by the US Census Bureau to encode how representative each
instance is of the overall population). Categorical variables are
dummy-encoded and instances with missing values (7\%) are removed.

\textbf{COMPAS}.\footnote{\url{https://github.com/propublica/compas-analysis}}
This dataset was curated to audit racial biases in the Correctional
Offender Management Profiling for Alternative Sanctions (COMPAS) risk
assessment tool, which estimates the likelihood of a defendant
becoming a recidivist \citep{angwin2016machine,larson2016how}. The
dataset represents defendants who were scored for risk of recidivism
by COMPAS in Broward County, Florida between 2013 and 2014,
summarizing their demographics, criminal record, custody, and COMPAS
scores. We consider the \texttt{compas-scores-two-years} subset
published by ProPublica on github, consisting of defendants who were
observed for two years after screening, for whom a binary recidivism
ground truth is available. We follow standard pre-processing to remove
noisy instances \citep{propublica2016compas}. We focus on ``race'' as a
protected attribute $S$, restricting the data to defendants labelled
``African-American'' or ``Caucasian''. Our attributes $X$ are the age
of the defendant (``age'', an integer), the number of juvenile
felonies, misdemeanours, and other convictions (``juv\_fel\_count'',
``juv\_misd\_count'', ``juv\_other\_count'', all integers), the number
of prior crimes (``priors\_count'', an integer) and the degree of
current charge (``c\_charge\_degree'', felony or misdemeanour,
dummy-encoded).

\textbf{CreditCard}.\footnote{\url{https://archive.ics.uci.edu/ml/datasets/default+of+credit+card+clients}. Note
that we discuss variables with the names they are given in the tabular
data (.xls file), which do not match those in the documentation.}
This resource was curated to study automated credit card default
prediction, following a wave of defaults in Taiwan. The dataset
summarizes the payment history of customers of an important Taiwanese
bank, from April to October 2005. Demographics, marital status, and
education of customers are also provided, along with the amount of
credit given and a binary variable encoding the default on payment
within the next month, which is the associated prediction task. We
consider ``sex'' (binarily encoded) as the sensitive attribute $S$ and
keep every other variable in $X$, preprocessing categorical ones via
dummy-encoding (``education'', ``marriage'', ``pay\_0'', ``pay\_2'',
``pay\_3'', ``pay\_4'', ``pay\_5'', ``pay\_6''). Differently from
Adult, we keep marital status as its values are not trivial predictors
of the sensitive attribute.
 
A summary of these datasets and related statistics is reported in
Table~\ref{tab:datastats}.

\begin{table}[ht!]
  \caption{Dataset statistics after preprocessing.}
  \label{tab:datastats}
  \centering
  \begin{tabular}{lrrr}
    \toprule
    Dataset & Adult  & COMPAS & CreditCard \\
    \midrule
    \# data points & 45,222 & 5,278 & 30,000  \\
    \# non-sensitive features & 84 & 6   &  81 \\
    sensitive attribute & sex & race  & sex \\
    $S=1$ & Male   & Caucasian & Male   \\
    $\Pr(S=1)$ & 0.675 &  0.398   &  0.396 \\
    target variable & income & recidivist   & default \\
    $Y=\oplus$ & $>$\$50,000   &  no   &  no \\
    $\Pr(Y=\oplus)$ & 0.248 &  0.498   &  0.779   \\
    \bottomrule
  \end{tabular}
\end{table}

\subsection{Distribution Shift Affecting the Test Set: Protocol
\texttt{sample-prev-$\mathcal{D}_3$}}
\label{sec:sample_prev_d3}

% -------------------------------------------------------------------

\subsubsection{Motivation and Setup}

\noindent The first experimental protocol models a setting in which
the test set $\mathcal{D}_3$ shows a significant distribution shift
with respect to the sets $\mathcal{D}_1$ and $\mathcal{D}_2$ available
during training of $h$ and $k$. In other words, in this protocol,
$\mathcal{D}_1$ and $\mathcal{D}_2$ are marginalisations of the same
joint distribution, while $\mathcal{D}_3$ (more precisely
$\breve{\mathcal{D}}_3$) is drawn from a different joint
distribution. %\fabsebcomment{Qui forse bisognerebbe caratterizzare la cosa meglio, perché ha perfettamente senso parlare di distribution shift fra $\mathcal{D}_1$ e $\mathcal{D}_3$, molto meno fra $\mathcal{D}_2$ e $\mathcal{D}_3$, dato che usano codeframes differenti.} \afabcomment{Vedi ``In other words''}
We consider two sub-protocols
(\texttt{sample-prev-$\mathcal{D}_{3}^{\ominus}$} and
\texttt{sample-prev-$\mathcal{D}_{3}^{\oplus}$}) that model changes in
the distribution of a sensitive variable $S$ in
$\mathcal{D}_{3}^{\ominus}$ and $\mathcal{D}_{3}^{\oplus}$, the test
subsets of either negatively or positively predicted instances. More
in detail, we let $\Pr(s|\ominus)$ (or its dual $\Pr(s|\oplus)$) in
$\breve{\mathcal{D}}_3$ range on eleven evenly spaced values between 0
and 1. For example, under sub-protocol
\texttt{sample-prev-$\mathcal{D}_{3}^{\ominus}$}, we vary the
distribution of sensitive attribute $S$ in
$\breve{\mathcal{D}}_{3}^{\ominus}$, so that
$\Pr(s|\ominus) \in \left \lbrace 0.0, 0.1 \dots, 0.9, 1.0 \right
\rbrace$, while keeping the distribution in
$\breve{\mathcal{D}}_{3}^{\oplus}$ fixed. For both sub-protocols, in
each repetition we sample subsets of the test set $\mathcal{D}_3$ such
that
$|\breve{\mathcal{D}}_{3}^{\ominus}|=|\breve{\mathcal{D}}_{3}^{\oplus}|=500$.
Pseudocode~\ref{pseudo:sample-prev-d3} describes the protocol when
acting on $\mathcal{D}_{3}^{\ominus}$; the case for
$\mathcal{D}_{3}^{\oplus}$ is analogous and consists of swapping the
roles of $\mathcal{D}_{3}^{\ominus}$ and $\mathcal{D}_{3}^{\oplus}$ in
Lines~\ref{line:randomsample:d30} and~\ref{line:randomsample:d31}. The
pale red region highlights the part of the experimental protocol that
is specific to Protocol \texttt{sample-prev-$\mathcal{D}_3$}; the rest
is common to all the experimental protocols mentioned in this paper.

% Note that the pale red region (delimiting the protocol-specific
% computations) comprises the computation of the error
% (line~\ref{line:error:p4}) since, in this case, the error is
% computed with respect to the samples
% $\breve{\mathcal{D}}_{3}^{\ominus}$ and
% $\breve{\mathcal{D}}_{3}^{\oplus}$, and not with respect to
% $\mathcal{D}_{3}^{\ominus}$ and $\mathcal{D}_{3}^{\oplus}$.
% \alex{potremmo riportare il numero totale di esperimenti per ciascun
% boxplot}

This protocol accounts for the inevitable evolution of phenomena,
especially those related to human behaviour. Indeed, it is common in
real-world scenarios for data generation processes to be nonstationary
and change across development and deployment, due, e.g., to
seasonality, changes in the spatiotemporal application context, or any
sort of unmodelled novelty and difference in populations
\citep{morenotorres2012:unifying,ditzler2015:learning,malinin2021shifts}. Given
that most work on algorithmic fairness focuses on decisions or
predictions about people, and given inevitable changes in human lives,
values, and behaviour, the above considerations about non-stationarity
seem particularly relevant. For example, data available from one
population is often repurposed to train algorithms that will be
deployed on a different population, requiring ad hoc fair learning
approaches \citep{coston2019:fair} and evoking the \emph{portability
trap} of fair machine learning \citep{selbst2019:fairness}. In
addition, agents can respond to novel technology in their social
context and adapt their behaviour accordingly
\citep{hu2019:disparate,tsirtsis2019optimal}, causing \emph{ripple
effects} \citep{selbst2019:fairness} and \emph{feedback loops}
\citep{mansoury2020:feedback}. \replace{Furthermore, as a concrete (although spurious) example of shift in a popular fairness-related dataset, the repeated offence rate for defendants in the COMPAS dataset
%\citep{larson2016how} 
increases sharply between 2013 and 2014 %\citep{barenstein2019propublica,biswas2021ensuring}
.}{} Finally,
personalized pricing constitutes an increasingly possible practice
with nontrivial fairness concerns \citep{kallus2021:fairness} and
inevitable shifts due to changing habits and environments
\citep{sindreu2021covid19}.

% In the quantification literature, this is the most common evaluation
% protocol. Similarly to \texttt{sample-prev-$\mathcal{D}_2$}, it
% imposes shifts in the estimand between the training and testing
% conditions of a quantifier, represented by the auxiliary set
% $\mathcal{D}_2$ and the test set $\mathcal{D}_3$, respectively.

In this protocol, quantifiers are tested on subsets
$\breve{\mathcal{D}}_{3}^{\ominus}$,
$\breve{\mathcal{D}}_{3}^{\oplus}$ that exhibit a different prevalence
of sensitive attribute $s$ with respect to their counterparts
$\mathcal{D}_{2}^{\ominus}$, $\mathcal{D}_{2}^{\oplus}$ in the
auxiliary set. More specifically, with this protocol we vary the joint
distribution of $(S,\hat{Y})$ to directly influence the demographic
disparity of the classifier $h$ in the test set $\mathcal{D}_{3}$, and
move it away from the value $\delta_{h}^{S}$ of the same measure that
we would obtain on the set $\mathcal{D}_{2}$. This is a fundamental
evaluation protocol, as it makes our estimand different across
$\mathcal{D}_2$ and $\mathcal{D}_3$ (or, more precisely, its modified
version $\breve{\mathcal{D}}_3$), which is typically expected in
practice. If this was not the case, a practitioner could simply resort
to an explicit calculation of the demographic disparity in the
auxiliary set $\mathcal{D}_2$ and consider it representative of any
deployment condition, \blue{as in the MLPE trivial baseline}. Given this reasoning, this protocol imposes
sizeable variations in the demographic disparity of $h$ between
$\mathcal{D}_2$ and $\mathcal{D}_3$, which act as the training set and
the test set, respectively, for our quantifiers. For example, on
Adult, $\delta_{h}^{S}$ is approximately equal to $0.3$ in
$\mathcal{D}_2$, while in $\mathcal{D}_3$ we let it vary in the range
$[-0.7, 0.9]$. Despite these sizeable
variations, we expect that methods such as SLD, ACC, and PACC perform
well, due to their proven unbiasedness in this setting (Remark
\ref{rem:ql_fisher}).

\begin{algorithm}[t]
  \LinesNumbered \SetNoFillComment
 
 \begin{footnotesize}
   \SetKwInOut{Input}{Input} \SetKwInOut{Output}{Output} \Input{
   \textbullet\ Dataset $\mathcal{D}$ ; \\
   \hspace{0.35em}\textbullet\ Classifier learner CLS; \\
   \hspace{0.35em}\textbullet\ Quantification method Q; \\
   } \Output{
   \textbullet\ MAE of the demographic disparity estimates ; \\
   \hspace{0.35em}\textbullet\ MSE of the demographic disparity
   estimates ;}
 
   \BlankLine

   $E\leftarrow\emptyset$ ;
 
   \For{5 random splits}{
   $\mathcal{D}_A,\mathcal{D}_B,\mathcal{D}_C \leftarrow
   \mathrm{split\_stratify}(\mathcal{D})$ ;
 
   \For{$\mathcal{D}_1,\mathcal{D}_2,\mathcal{D}_3 \in
   \mathrm{permutations}(\mathcal{D}_A,\mathcal{D}_B,\mathcal{D}_C)$}{
 
   \tcc{Learn a classifier $h : \mathcal{X} \rightarrow \mathcal{Y}$}
   $h\leftarrow$ CLS.fit($\mathcal{D}_1$) ;
 
   $\mathcal{D}_{2}^{\ominus}\leftarrow\{(\mathbf{x}_i,s_i)\in\mathcal{D}_2
   \;|\; h(\mathbf{x}_i)=\ominus\}$ ;
 
   $\mathcal{D}_{2}^{\oplus}\leftarrow\{(\mathbf{x}_i,s_i)\in\mathcal{D}_2
   \;|\; h(\mathbf{x}_i)=\oplus\}$ ;
 
   \tcc{Learn quantifiers $q_y : 2^\mathcal{X} \rightarrow [0,1]$}
 
   $q_{\ominus} \leftarrow $ Q.fit($\mathcal{D}_{2}^{\ominus})$ ;
 
   $q_{\oplus} \leftarrow $ Q.fit($\mathcal{D}_{2}^{\oplus})$ ;
 
   \tcc{Split instances in $\mathcal{D}_3$ based on predicted labels
   from $h$}
 
   $\mathcal{D}_{3}^{\ominus}\leftarrow\{\mathbf{x}_i\in\mathcal{D}_3
   \;|\; h(\mathbf{x}_i)=\ominus\}$ ;
 
   $\mathcal{D}_{3}^{\oplus}\leftarrow\{\mathbf{x}_i\in\mathcal{D}_3
   \;|\; h(\mathbf{x}_i)=\oplus\}$ ;
 
   \For{10 repeats}{

   \tikzmk{A} \For{$p \in \{0.1,0.2,\ldots,0.9\}$}{
 
   \tcc{Generate samples from $\mathcal{D}_{3}^{\ominus}$ at desired
   prevalence and size, and uniform samples from
   $\mathcal{D}_{3}^{\oplus}$ at desired size}
 
   $\breve{\mathcal{D}}_{3}^{\ominus}\sim\mathcal{D}_{3}^{\ominus}$
   with $p_{\breve{\mathcal{D}}_{3}^{\ominus}}(s)=p$ and
   $|\breve{\mathcal{D}}_{3}^{\ominus}|=500$ \label{line:randomsample:d30}
   ;
 
   $\breve{\mathcal{D}}_{3}^{\oplus}\sim\mathcal{D}_{3}^{\oplus}$ with
   $|\breve{\mathcal{D}}_{3}^{\oplus}|=500$ \label{line:randomsample:d31}
   ;
 
   \tcc{Use quantifiers to estimate demographic prevalence}

   $\hat{p}^{q_{\ominus}}_{\breve{\mathcal{D}}_{3}^{\ominus}}(s)
   \leftarrow q_{\ominus}(\breve{\mathcal{D}}_{3}^{\ominus})$ ;
 
   $\hat{p}^{q_{\oplus}}_{\breve{\mathcal{D}}_{3}^{\oplus}}(s)
   \leftarrow q_{\oplus}(\breve{\mathcal{D}}_{3}^{\oplus})$ ;

   \tcc{Compute the signed error of the demographic disparity
   estimate} $e \leftarrow $ compute error using
   $\hat{p}^{q_{\ominus}}_{\breve{\mathcal{D}}_{3}^{\ominus}}(s)$,
   $\hat{p}^{q_{\oplus}}_{\breve{\mathcal{D}}_{3}^{\oplus}}(s)$ and
   Equation~\eqref{eq:estim_err1}
   % $e=\left [
   %   \hat{p}_{\breve{\mathcal{D}}_{3}^{\ominus}}^{q_{\ominus}}(s) -
   %   \hat{p}_{\breve{\mathcal{D}}_{3}^{\oplus}}^{q_{\oplus}}(s)
   % \right ] - \left [ p_{\breve{\mathcal{D}}_{3}^{\ominus}}(s) -
   %   p_{\breve{\mathcal{D}}_{3}^{\oplus}}(s) \right
   % ]$ \label{line:error:p4};
 
   \tikzmk{B} \boxit{pink} $E \leftarrow E \cup \{e\}$}}}}
 
   $\mathrm{mae}\leftarrow\mathrm{MAE}(E)$ ;
 
   $\mathrm{mse}\leftarrow\mathrm{MSE}(E)$ ;

   \Return{$\mathrm{mae}$, $\mathrm{mse}$} \BlankLine

   \caption{Protocol \texttt{sample-prev-$\mathcal{D}_3$}, shown for
   variations of prevalence values in class $y=\ominus$.}
   \label{pseudo:sample-prev-d3}
 \end{footnotesize}
\end{algorithm}

% -------------------------------------------------------------------

\subsubsection{Results}

\begin{figure}[t]%[h!]
  \centering
  \begin{subfigure}{\mysize\textwidth}
    \centering
    \includegraphics[width=\mysize\textwidth]{./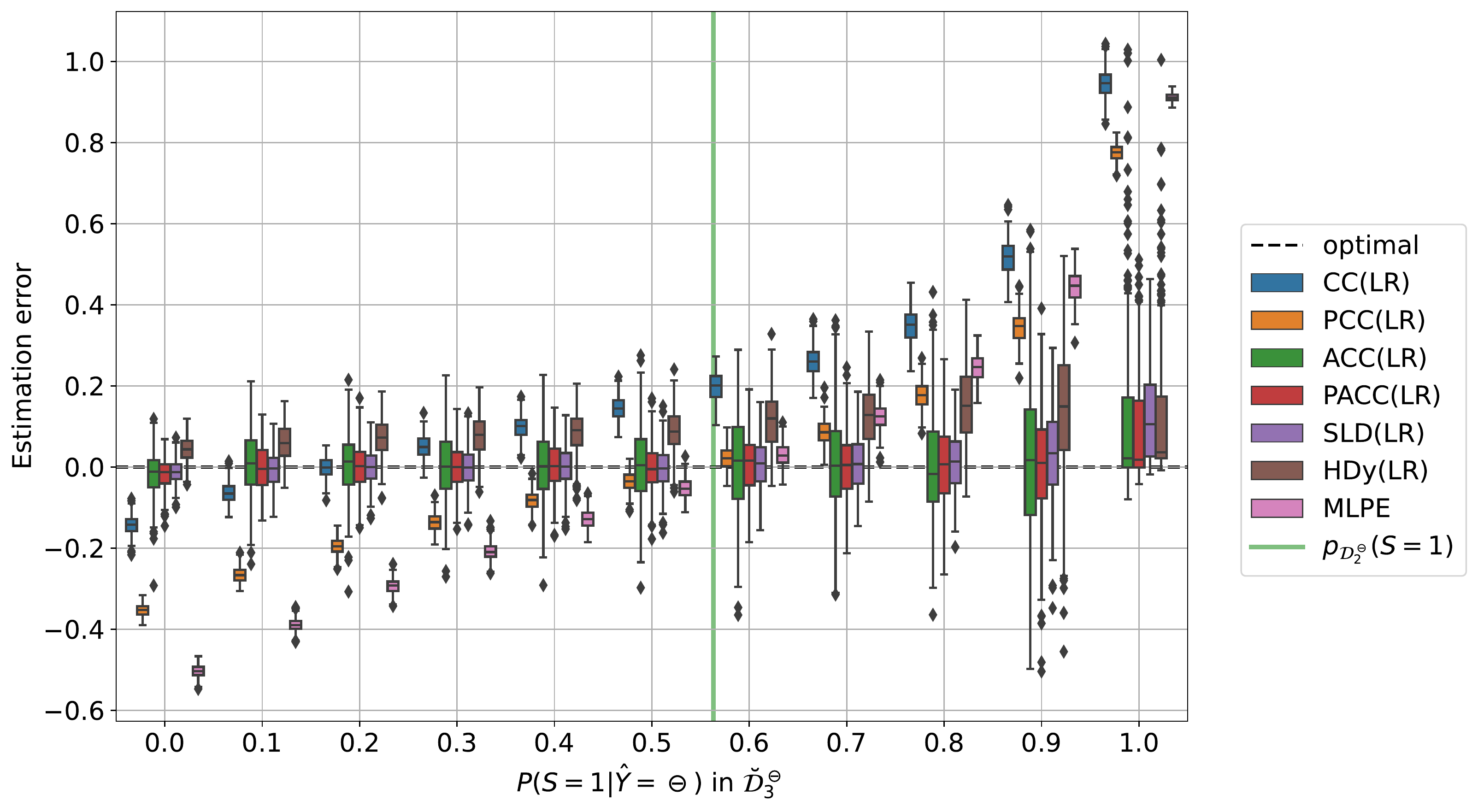}
    \caption{Protocol
    \texttt{sample-prev-$\mathcal{D}_{3}^{\ominus}$}}
    \label{fig:sample_prev_d30}
  \end{subfigure} \\
  \begin{subfigure}{\mysize\textwidth}
    \centering
    \includegraphics[width=\mysize\textwidth]{./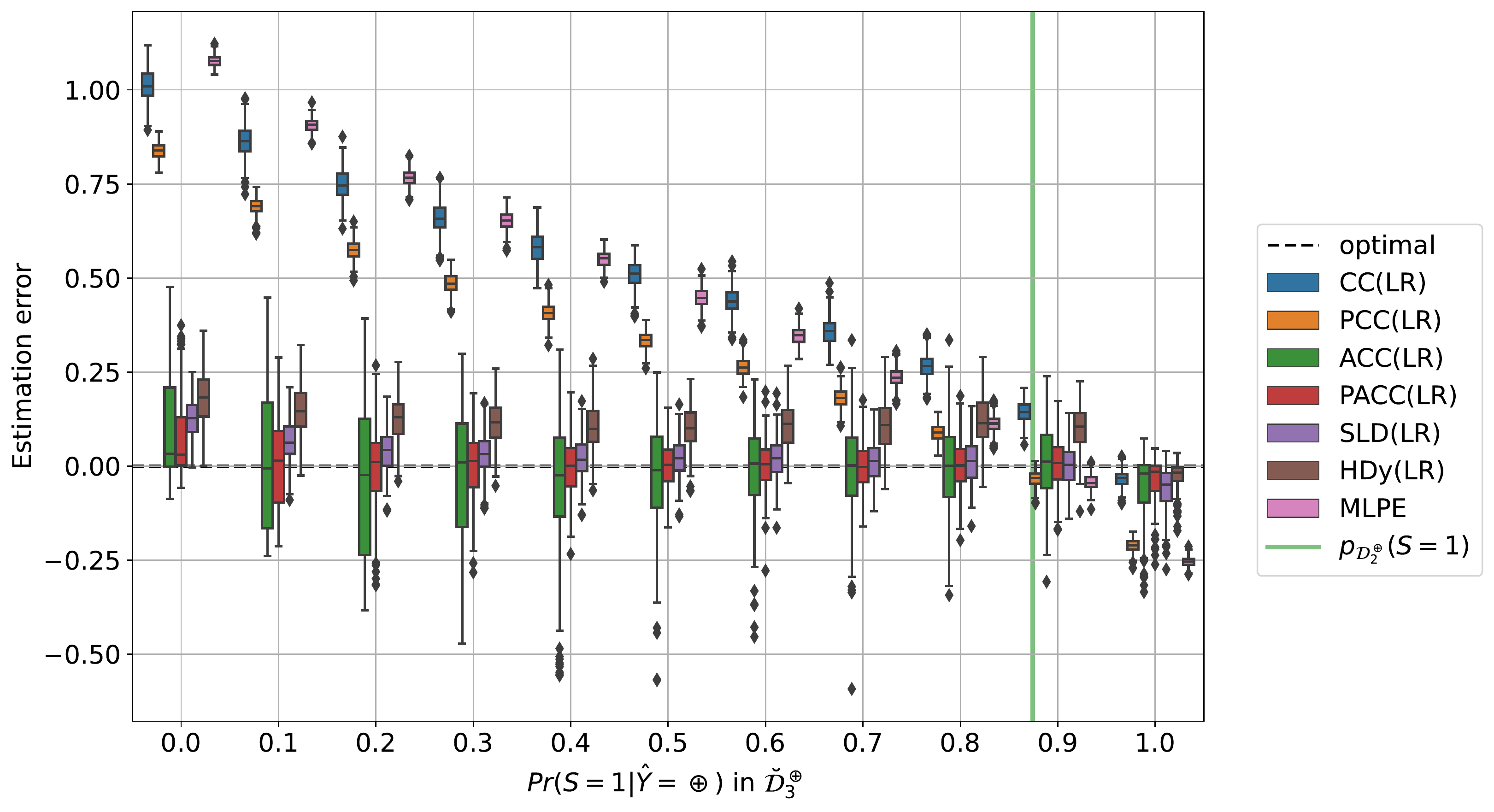}
    \caption{Protocol \texttt{sample-prev-$\mathcal{D}_{3}^{\oplus}$}}
    \label{fig:sample_prev_d31}
  \end{subfigure}
  \caption{Experiments conducted according to protocol
  \texttt{sample-prev-$\mathcal{D}_3$} on the Adult dataset. The
  figure shows the distribution of the estimation error (on the $y$
  axis) as $\breve{\mathcal{D}}_3$ is sampled with a given
  $\Pr(S=1|Y=\ominus)$ value (a) or with a given $\Pr(S=1|Y=\oplus)$
  value (b), which are shown on the $x$ axis. The green line indicates
  the value of $\Pr(S=1)$ as observed in $\mathcal{D}_{2}^{\ominus}$
  (a) or in $\mathcal{D}_{2}^{\oplus}$ (b).}
  \label{fig:sample_prev_d3}
\end{figure}

\noindent In Figure~\ref{fig:sample_prev_d3} we report the performance
of CC, PCC, ACC, PACC, SLD, \replace{and HDy}{HDy, and MLPE} on the Adult dataset under the
\texttt{sample-prev-$\mathcal{D}_3$} experimental protocol. The
estimation error (Equation~\ref{eq:estim_err1}) is reported on the $y$
axis, as we vary the prevalence of the protected group in the test
set, which is displayed on the $x$
axis. Figure~\ref{fig:sample_prev_d30} concentrates on prevalence
variations in $\mathcal{D}_{3}^{\ominus}$, while Figure
\ref{fig:sample_prev_d31} considers variations of the prevalence of
the protected group in $\mathcal{D}_{3}^{\oplus}$. Each boxplot
summarizes the results of 5 random splits, 6 role permutations, and 10
samplings of $\breve{\mathcal{D}}_3$, for a total of 300 repetitions
for each combination of 6 methods and 11 values which vary on the $x$
axis. Boxes enclose the two central quartiles (separated by a median
horizontal line), while whiskers surround points in the outer
quartiles, except for outliers marked with diamonds.

Similar trends emerge under both sub-protocols. \replace{CC and PCC}{CC, PCC, and MLPE} display a
clear trend along the $x$ axis, vastly over- or underestimating the
demographic disparity of $h$, and proving unreliable in settings where
the prevalence values in the unlabelled (test) set shift away from the
prevalence values of the training set. In sub-protocol
\texttt{sample-prev-$\mathcal{D}_{3}^{\oplus}$}, summarised in
Figure~\ref{fig:sample_prev_d31}, the prevalence of men ($S=1$) in
$\breve{\mathcal{D}}_{3}^{\oplus}$, used to test one of the
quantifiers, is almost always lower than the prevalence in the
respective training set $\mathcal{D}_{2}^{\oplus}$, reported with a
vertical green line. As a result, quantifiers trained on
$\mathcal{D}_{2}^{\oplus}$ tend to systematically overestimate the
prevalence of males in $\mathcal{D}_{3}^{\oplus}$, thus also
overestimating $\mu(1)$ and $\delta_{h}^{S}$, according to
Equations~\eqref{eq:dd_estim} and \eqref{eq:mu_ql}. Similar
considerations hold for sub-protocol
\texttt{sample-prev-$\mathcal{D}_{3}^{\ominus}$}, with a sign flip.
% due to $p_{\mathcal{D}_{3}^{\ominus}}(s)$ influencing the second
% term in Equation~\ref{eq:dd_estim}.

\begin{table}[tb]
  \small \centering
  \caption{Results obtained in the experiments run according to
  protocol \texttt{sample-prev-$\mathcal{D}_3$}.}
  \label{tab:sample_prev_d3}
  % \resizebox{\textwidth}{!}{%
  
\begin{tabular}{llcccc} \toprule
 & & $\downarrow$ MAE & $\downarrow$ MSE & $\uparrow$ $P(\mathrm{AE}<0.1)$ & $\uparrow$ $P(\mathrm{AE}<0.2)$ \\ \midrule
\multirow{7}{*}{Adult} &CC(LR) & 0.382$^{\phantom{\ddag}}\pm0.304$ & 0.239$^{\phantom{\ddag}}\pm0.305$ & 0.207 & 0.386 \\
 &PCC(LR) & 0.299$^{\phantom{\ddag}}\pm0.237$ & 0.146$^{\phantom{\ddag}}\pm0.199$ & 0.235 & 0.427 \\
 &ACC(LR) & 0.103$^{\phantom{\ddag}}\pm0.097$ & 0.020$^{\phantom{\ddag}}\pm0.047$ & 0.595 & 0.870 \\
 &PACC(LR) & 0.061$^{\phantom{\ddag}}\pm0.059$ & 0.007$^{\phantom{\ddag}}\pm0.016$ & 0.813 & 0.970 \\
 &SLD(LR) & \textbf{0.055}$^{\phantom{\ddag}}\pm0.052$ & \textbf{0.006}$^{\phantom{\ddag}}\pm0.012$ & \textbf{0.846} & \textbf{0.980} \\
 &HDy(LR) & 0.110$^{\phantom{\ddag}}\pm0.079$ & 0.018$^{\phantom{\ddag}}\pm0.032$ & 0.500 & 0.893 \\
 &MLPE & 0.397$^{\phantom{\ddag}}\pm0.298$ & 0.246$^{\phantom{\ddag}}\pm0.316$ & 0.162 & 0.294 \\

 \vspace{0.01cm} \\ 
\multirow{7}{*}{COMPAS} &CC(LR) & 0.541$^{\phantom{\ddag}}\pm0.369$ & 0.429$^{\phantom{\ddag}}\pm0.472$ & 0.118 & 0.237 \\
 &PCC(LR) & 0.337$^{\phantom{\ddag}}\pm0.242$ & 0.172$^{\phantom{\ddag}}\pm0.214$ & 0.181 & 0.344 \\
 &ACC(LR) & 0.495$^{\phantom{\ddag}}\pm0.363$ & 0.377$^{\phantom{\ddag}}\pm0.471$ & 0.143 & 0.252 \\
 &PACC(LR) & 0.252$^{\phantom{\ddag}}\pm0.213$ & 0.109$^{\phantom{\ddag}}\pm0.184$ & 0.287 & 0.492 \\
 &SLD(LR) & \textbf{0.169}$^{\phantom{\ddag}}\pm0.139$ & \textbf{0.048}$^{\phantom{\ddag}}\pm0.077$ & \textbf{0.385} & \textbf{0.669} \\
 &HDy(LR) & 0.267$^{\phantom{\ddag}}\pm0.213$ & 0.116$^{\phantom{\ddag}}\pm0.176$ & 0.250 & 0.472 \\
 &MLPE & 0.349$^{\phantom{\ddag}}\pm0.249$ & 0.184$^{\phantom{\ddag}}\pm0.227$ & 0.175 & 0.332 \\

 \vspace{0.01cm} \\ 
\multirow{7}{*}{CreditCard} &CC(LR) & 0.345$^{\phantom{\ddag}}\pm0.241$ & 0.177$^{\phantom{\ddag}}\pm0.212$ & 0.172 & 0.339 \\
 &PCC(LR) & 0.325$^{\phantom{\ddag}}\pm0.213$ & 0.151$^{\phantom{\ddag}}\pm0.157$ & 0.176 & 0.340 \\
 &ACC(LR) & 0.341$^{\phantom{\ddag}}\pm0.259$ & 0.183$^{\phantom{\ddag}}\pm0.256$ & 0.189 & 0.367 \\
 &PACC(LR) & 0.259$^{\phantom{\ddag}}\pm0.211$ & 0.111$^{\phantom{\ddag}}\pm0.173$ & 0.269 & 0.480 \\
 &SLD(LR) & \textbf{0.190}$^{\phantom{\ddag}}\pm0.148$ & \textbf{0.058}$^{\phantom{\ddag}}\pm0.086$ & \textbf{0.334} & \textbf{0.609} \\
 &HDy(LR) & 0.251$^{\phantom{\ddag}}\pm0.190$ & 0.099$^{\phantom{\ddag}}\pm0.142$ & 0.248 & 0.478 \\
 &MLPE & 0.334$^{\phantom{\ddag}}\pm0.218$ & 0.159$^{\phantom{\ddag}}\pm0.165$ & 0.172 & 0.330 \\
\bottomrule
\end{tabular}%

  % }%
\end{table}

ACC, PACC, SLD and HDy, on the other hand, display low bias, even
under sizeable prevalence shift. Their variance is higher than CC and
PCC, but their estimation error is moderate overall. The condition
$\Pr(S=1|\hat{Y}=\ominus)=1$ (right-most point in Figure
\ref{fig:sample_prev_d30}) is particularly critical for every method
due to $p_{\mathcal{D}_{3}}(s=0)$ dropping below 0.1, thus making
small estimation errors for the denominator of Equation \ref{eq:mu_ql}
especially impactful on $\hat{\mu}(0)$.

% Analogous results obtain on COMPAS and CreditCard, as summarised in
% Table~\ref{tab:sample_prev_d3}. Under large shifts between the
% auxiliary and test set, the estimation of demographic disparity is
% more difficult on COMPAS and CreditCard than on the Adult dataset.

The results of the COMPAS and CreditCard datasets are reported in
Table \ref{tab:sample_prev_d3}, along with a summary of the results of
the Adult dataset we have just discussed. The first and second columns
indicate the MAE and MSE values (lower is better), while the third and
fourth columns indicate the probability that the Absolute Error (AE)
falls below 0.1 and 0.2 across the entire experimental protocol
(higher is better). \textbf{Boldface} indicates the best method for a
given dataset and metric. The superscripts $\dag$ and $\ddag$ denote
the methods (if any) whose error scores (MAE, MSE) are \emph{not}
statistically significantly different from the best according to a
paired sample, two-tailed t-test at different confidence
levels. Symbol $\dag$ indicates $0.001<p$-value $<0.05$ while symbol
$\ddag$ indicates $0.05\leq p$-value; the absence of any such symbol
indicates $p$-value $\leq 0.001$ (i.e., that the performance of the
method is statistically significantly different from that of the best
method).  Overall, SLD strikes the best balance between bias and
variance. PACC is the second-best approach, outperforming ACC and PCC,
demonstrating the utility of combining posterior probabilities and
adjustments when the latter can reliably be estimated. The trends we
discussed also hold for COMPAS and CreditCard. Note that both datasets
appear to provide a setting harder than Adult for the inference of the
sensitive attribute $S$ from the non-sensitive attributes $X$.

% -------------------------------------------------------------------

% -------------------------------------------------------------------

\subsection{Distribution Shift Affecting the Auxiliary Set: Protocol
\texttt{sample-prev-$\mathcal{D}_2$}}
\label{sec:sample_prev_d2}

% -------------------------------------------------------------------

\subsubsection{Motivation and
Setup}\label{sec:sample_prev_d2_motivation}

\noindent This protocol is analogous to protocol
\texttt{sample-prev-$\mathcal{D}_3$} (Section
\ref{sec:sample_prev_d3}), but for the fact that it focuses on shifts
in the auxiliary set $\mathcal{D}_2$, while $\mathcal{D}_1$ and
$\mathcal{D}_3$ remain at their natural prevalence. Similarly to
Section \ref{sec:sample_prev_d3}, we assess the signed estimation
error under shifts that affect $\mathcal{D}_{2}^{\ominus}$ or
$\mathcal{D}_{2}^{\oplus}$, that is, the subsets of $\mathcal{D}_2$
labelled positively or negatively by the classifier $h$. Here too, we
consider two experimental sub-protocols, describing variations in the
prevalence of sensitive attribute $s$ in either subset. More
specifically, we let $\Pr(s|\ominus)$ (or its dual $\Pr(s|\oplus)$)
take 9 evenly spaced values between 0.1 and 0.9. \replace{We avoid extreme
values of 0 and 1, which would make either demographic group $S=0$ or
$S=1$ absent from the training set of one quantifier. For example, in
sub-protocol \texttt{sample-prev-$\mathcal{D}_{2}^{\ominus}$} we let
the prevalence $\Pr(s|\ominus)$ in $\breve{\mathcal{D}}_{2}^{\ominus}$
take values in
$\left \lbrace 0.1, 0.2 \dots, 0.8, 0.9 \right \rbrace$, while the
remaining subset $\breve{\mathcal{D}}_{2}^{\oplus}$ remains at its
natural prevalence $\Pr(s|\oplus)$.
% \footnote{The natural prevalence is matched allowing for small
% fluctuations due to subsampling.}
For each repetition, we set
$|\breve{\mathcal{D}}_{2}^{\ominus}|=|\breve{\mathcal{D}}_{2}^{\oplus}|=500$. This
makes for a challenging quantification setting and allows for fast
training of multiple quantifiers across many repetitions.}{}
Pseudocode~\ref{pseudo:sample-prev-d2} describes the protocol when
acting on $\mathcal{D}_{2}^{\ominus}$; the case for
$\mathcal{D}_{2}^{\oplus}$ is analogous, and comes down to swapping
the roles of $\mathcal{D}_{2}^{\ominus}$ and
$\mathcal{D}_{2}^{\oplus}$ in Lines~\ref{line:randomsample:d20}
and~\ref{line:randomsample:d21}.

This protocol captures issues of representativity in demographic data,
e.g., due to nonuniform response rates across subpopulations
\citep{schouten2009indicators,schouten2012evaluating}. Given the
importance of trust for the provision of one's sensitive attributes,
in some domains this provision is considered akin to a \emph{data
donation} \citep{andrus2021:wm}. Individuals from groups that were
historically served with worse quality or had lower acceptance rates
for a service can be reluctant to disclose their membership in those
groups, fearing that it may be used against them as grounds for
rejection or discrimination \citep{hasnain2006obtaining}. This may be
especially true for individuals who perceive to be at high risk of
rejection, and this can cause complex selection biases, jointly
dependent on $S$ and $Y$, or $S$ and $\hat{Y}$ if individuals have
some knowledge of the classification procedure. For example, health
care providers may be advised to collect information about the race of
patients to monitor the quality of services across subpopulations. In
a field study, 28\% of patients reported discomfort in revealing their
own race to a clerk, with African-American patients significantly less
comfortable than white patients on average \citep{baker2005patients}.

\replace{Through this protocol, we may expect to find patterns similar to those
highlighted in Section~\ref{sec:sample_prev_d3}, with the roles of the
auxiliary set $\mathcal{D}_2$ and the test set $\mathcal{D}_3$ now
switched. Under this protocol, $\mathcal{D}_2$ has a lower cardinality
and variable prevalence (and is noted by $\breve{\mathcal{D}}_2$ for
this reason), while $\mathcal{D}_3$ is left to its original
cardinality and prevalence of the sensitive attribute $s$.}{}

% -------------------------------------------------------------------

\subsubsection{Results}

\begin{figure}[t]%[h!]
  \centering
  \begin{subfigure}{\mysize\textwidth}
    \centering
    \includegraphics[width=\mysize\textwidth]{./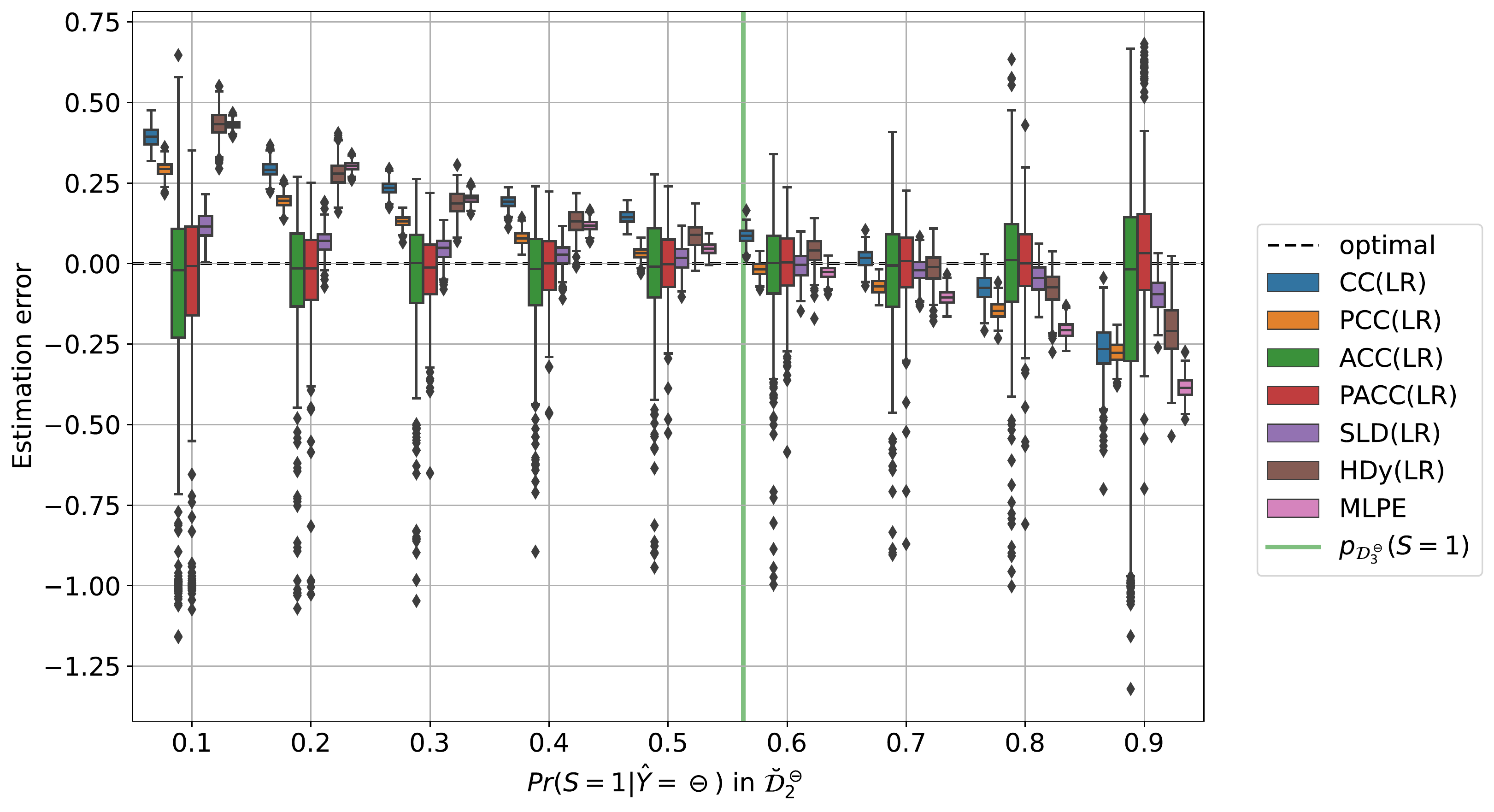}
    \caption{Protocol
    \texttt{sample-prev-$\mathcal{D}_{2}^{\ominus}$}}
    \label{fig:sample_prev_d20}
  \end{subfigure} \\
  \begin{subfigure}{\mysize\textwidth}
    \centering
    \includegraphics[width=\mysize\textwidth]{./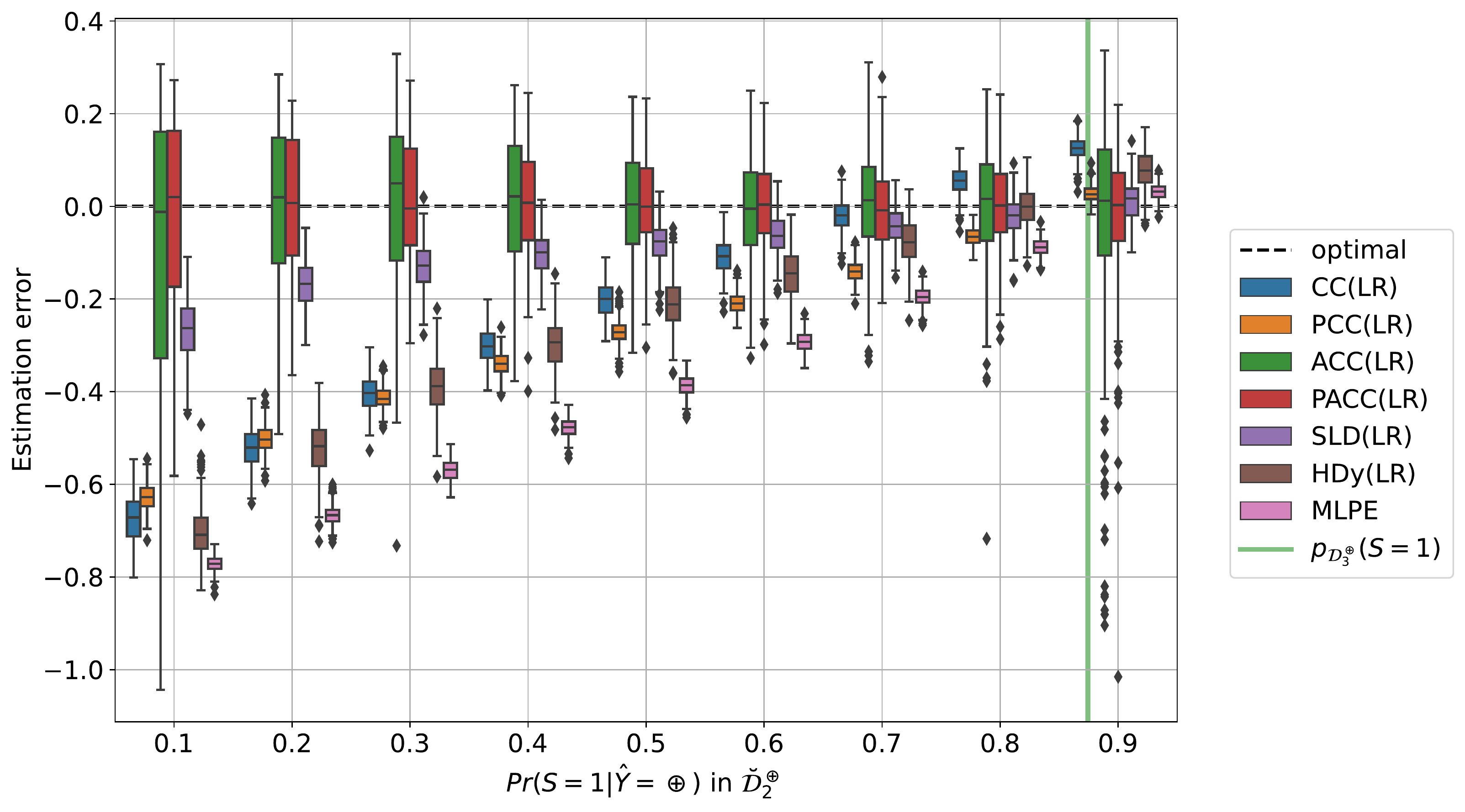}
    \caption{Protocol \texttt{sample-prev-$\mathcal{D}_{2}^{\oplus}$}}
    \label{fig:sample_prev_d21}
  \end{subfigure}
  \caption{Protocol \texttt{sample-prev-$\mathcal{D}_2$} on the Adult
  dataset. Distribution of the estimation error ($y$ axis) as
  $\breve{\mathcal{D}}_2$ is sampled with a given $\Pr(S=1|Y=\ominus)$
  value, plot (a), or $\Pr(S=1|Y=\oplus)$ value, plot (b) ($x$ axis).
  The green line indicates the value of $\Pr(S=1)$ as observed in
  $\mathcal{D}_{3}^{\ominus}$, plot (a), or
  $\mathcal{D}_{3}^{\oplus}$, plot (b).}
  \label{fig:sample_prev_d2}
\end{figure}

\noindent Figure~\ref{fig:sample_prev_d2} shows the signed estimation
error on the $y$ axis, as we vary, on the $x$ axis, the prevalence of
the sensitive attribute in $\mathcal{D}_{2}^{\ominus}$
(Figure~\ref{fig:sample_prev_d20}) and $\mathcal{D}_{2}^{\oplus}$
(Figure~\ref{fig:sample_prev_d21}). \blue{MLPE,} CC, PCC, and HDy prove to be
fairly sensitive to shifts in their training set. In sub-protocol
\texttt{sample-prev-$\mathcal{D}_{2}^{\oplus}$}, symmetrically to the
sub-protocol \texttt{sample-prev-$\mathcal{D}_{3}^{\oplus}$} discussed
in the previous section, the prevalence of males ($S=1$) in subset
$\mathcal{D}_{2}^{\oplus}$, used to train one of the quantifiers, is
almost always lower than the prevalence in the respective test subset
$\mathcal{D}_{3}^{\oplus}$, indicated with a vertical green line.
% \footnote{This is true for every data point on the $x$ axis in
% Figure~\ref{fig:sample_prev_d21}, except for the rightmost one,
% where
% $p_{\mathcal{D}_{2}^{\oplus}}(s)=0.9>p_{\mathcal{D}_{3}^{\oplus}}(s)=0.85$.}
As a result, quantifiers trained on $\mathcal{D}_{2}^{\oplus}$ tend to
systematically underestimate the prevalence of males in
$\mathcal{D}_{3}^{\oplus}$ and underestimate the (signed) demographic
disparity of the classifier
$h$. %Similar considerations hold for sub-protocol \texttt{sample-prev-$\mathcal{D}_{2}^{\ominus}$}, with a sign flip due to $p_{\mathcal{D}_{3}^{\ominus}}(s)$ being the first term in Equation~\ref{eq:dd}.
%\wasreplace{While it could be anticipated for CC and PCC, this behavior
%was unexpected for HDy and not documented in the quantification
%literature. Although less substantial, SLD shows a similar trend.}{}

ACC and PACC require splitting their training set to estimate the
respective adjustments
(Equations~\eqref{eq:tprandfpr}--\eqref{eq:pacc}), and suffer from a
reduced cardinality $|\breve{\mathcal{D}}_2|=1,000$. Their performance
worsens substantially with respect to protocol
\texttt{sample-prev-$\mathcal{D}_{3}$}, where
$|\mathcal{D}_2|>15,000$.
% As discussed above, HDy displays a large bias. ACC and PACC, on the
% other hand, have a good median performance.
Indeed, these methods have been shown to be \emph{Fisher-consistent}
under prior probability shift \citep{tasche2017:fisher,Vaz:2019eu},
that is, they are guaranteed to be accurate, thanks to the respective
adjustments, if $\mathcal{D}_2$ is large enough and linked to
$\mathcal{D}_3$ by prior probability shift. While the latter condition
holds, the former is violated under this protocol, hence ACC and PACC
are unbiased (in expectation), but display a large variance, due to
unstable adjustments. \wasblue{SLD, on the other hand, shows a moderate
variance and bias.}
% , as seen under protocol \texttt{sample-size-$\mathcal{D}_2$}
% (Section~\ref{sec:sample_size_d2}).
These effects are especially evident at the extremes of the $x$ axis,
which correspond to settings where few instances with either $S=0$ or
$S=1$ are available for quantifier training. In turn, the few
positives (negatives) make it particularly difficult to reliably
estimate $\mathrm{tpr}_{k_{s}}$ ($\mathrm{tnr}_{k_{s}}$), as required
by Equations \ref{eq:acc} and \ref{eq:pacc}. For example, in
Figure~\ref{fig:sample_prev_d20} we see that the error of ACC ranges
between $-1.3$ and $0.7$. Given that the true demographic disparity of
the classifier $h$ is $\delta_{h}^{S}=0.3$, these are the worst
possible errors, corresponding to extreme estimates
$\hat{\delta_{h}^{S}}=-1$ and $\hat{\delta_{h}^{S}}=1$,
respectively. Finally, it is worth noting that PACC outperforms ACC,
thanks to efficient use of posteriors $\pi_s(\mathbf{x}_i)$ in place
of binary decisions $k_s(\mathbf{x}_i)$.

\begin{table}[tb]
  \small \centering
  \caption{Results obtained in the experiments run according to
  protocol \texttt{sample-prev-$\mathcal{D}_2$}.}
  \label{tab:sample_prev_d2}
  % \resizebox{\textwidth}{!}{%
  
\begin{tabular}{llcccc} \toprule
 & & $\downarrow$ MAE & $\downarrow$ MSE & $\uparrow$ $P(\mathrm{AE}<0.1)$ & $\uparrow$ $P(\mathrm{AE}<0.2)$ \\ \midrule
\multirow{7}{*}{Adult} &CC(LR) & 0.230$^{\phantom{\ddag}}\pm0.177$ & 0.084$^{\phantom{\ddag}}\pm0.118$ & 0.274 & 0.523 \\
 &PCC(LR) & 0.213$^{\phantom{\ddag}}\pm0.169$ & 0.074$^{\phantom{\ddag}}\pm0.103$ & 0.323 & 0.551 \\
 &ACC(LR) & 0.159$^{\phantom{\ddag}}\pm0.178$ & 0.057$^{\phantom{\ddag}}\pm0.159$ & 0.439 & 0.789 \\
 &PACC(LR) & 0.112$^{\phantom{\ddag}}\pm0.118$ & 0.026$^{\phantom{\ddag}}\pm0.093$ & 0.559 & 0.889 \\
 &SLD(LR) & \textbf{0.081}$^{\phantom{\ddag}}\pm0.070$ & \textbf{0.011}$^{\phantom{\ddag}}\pm0.020$ & \textbf{0.705} & \textbf{0.929} \\
 &HDy(LR) & 0.219$^{\phantom{\ddag}}\pm0.188$ & 0.084$^{\phantom{\ddag}}\pm0.128$ & 0.345 & 0.573 \\
 &MLPE & 0.295$^{\phantom{\ddag}}\pm0.218$ & 0.134$^{\phantom{\ddag}}\pm0.165$ & 0.239 & 0.410 \\

 \vspace{0.01cm} \\ 
\multirow{7}{*}{COMPAS} &CC(LR) & 0.498$^{\phantom{\ddag}}\pm0.253$ & 0.312$^{\phantom{\ddag}}\pm0.260$ & 0.044 & 0.128 \\
 &PCC(LR) & 0.264$^{\phantom{\ddag}}\pm0.186$ & 0.104$^{\phantom{\ddag}}\pm0.126$ & 0.227 & 0.431 \\
 &ACC(LR) & 0.469$^{\phantom{\ddag}}\pm0.276$ & 0.296$^{\phantom{\ddag}}\pm0.303$ & 0.080 & 0.184 \\
 &PACC(LR) & 0.338$^{\phantom{\ddag}}\pm0.254$ & 0.179$^{\phantom{\ddag}}\pm0.250$ & 0.185 & 0.356 \\
 &SLD(LR) & \textbf{0.160}$^{\phantom{\ddag}}\pm0.123$ & \textbf{0.041}$^{\phantom{\ddag}}\pm0.060$ & \textbf{0.386} & \textbf{0.678} \\
 &HDy(LR) & 0.255$^{\phantom{\ddag}}\pm0.189$ & 0.101$^{\phantom{\ddag}}\pm0.135$ & 0.246 & 0.463 \\
 &MLPE & 0.275$^{\phantom{\ddag}}\pm0.193$ & 0.112$^{\phantom{\ddag}}\pm0.134$ & 0.219 & 0.417 \\

 \vspace{0.01cm} \\ 
\multirow{7}{*}{CreditCard} &CC(LR) & 0.429$^{\phantom{\ddag}}\pm0.252$ & 0.248$^{\phantom{\ddag}}\pm0.236$ & 0.103 & 0.225 \\
 &PCC(LR) & 0.204$^{\phantom{\ddag}}\pm0.140$ & 0.061$^{\phantom{\ddag}}\pm0.073$ & 0.287 & 0.551 \\
 &ACC(LR) & 0.535$^{\phantom{\ddag}}\pm0.316$ & 0.387$^{\phantom{\ddag}}\pm0.353$ & 0.085 & 0.165 \\
 &PACC(LR) & 0.512$^{\phantom{\ddag}}\pm0.311$ & 0.359$^{\phantom{\ddag}}\pm0.343$ & 0.094 & 0.171 \\
 &SLD(LR) & \textbf{0.171}$^{\phantom{\ddag}}\pm0.123$ & \textbf{0.044}$^{\phantom{\ddag}}\pm0.058$ & \textbf{0.348} & \textbf{0.645} \\
 &HDy(LR) & 0.222$^{\phantom{\ddag}}\pm0.159$ & 0.074$^{\phantom{\ddag}}\pm0.101$ & 0.260 & 0.508 \\
 &MLPE & 0.210$^{\phantom{\ddag}}\pm0.143$ & 0.065$^{\phantom{\ddag}}\pm0.077$ & 0.280 & 0.536 \\
\bottomrule
\end{tabular}%

  % }%
\end{table}

These trends also hold for COMPAS and CreditCard, as summarized in
Table~\ref{tab:sample_prev_d2}. Similarly to
Table~\ref{tab:sample_prev_d3}, we find that, under large shifts
between the auxiliary and the test set, the estimation of demographic
disparity is more difficult on COMPAS and CreditCard than on
Adult. Overall, these
experiments show that CC and PCC fare poorly under prior probability
shift, and are outperformed by estimators with better theoretical
guarantees.

% -------------------------------------------------------------------

\subsection{Reduced Cardinality of the Auxiliary
Set: Protocol \texttt{sample-size-$\mathcal{D}_2$}}
\label{sec:sample_size_d2}

% -------------------------------------------------------------------

\subsubsection{Motivation and Setup}

\noindent In this experimental protocol, we focus on the size of the
auxiliary set $\mathcal{D}_2$, studying its influence on the
estimation problem. Our goal is to understand how small this set can
be before degrading the performance of our estimation techniques. We
use subsets $\breve{\mathcal{D}}_2$ of the auxiliary set, obtained by
sampling instances uniformly without replacement from it. We let their
cardinality $|\breve{\mathcal{D}}_2|$ take five values evenly spaced
on a logarithmic scale, between a minimum size
$|\breve{\mathcal{D}}_2|$=1,000 and a maximum size
$|\breve{\mathcal{D}}_2|=|\mathcal{D}_2|$. In other words, we let the
cardinality of the auxiliary set take five different values between
1,000 and $|\mathcal{D}_2|$ in a geometric progression. \replace{As described
in Section~\ref{sec:setup}, for each cardinality of the auxiliary set
we wish to test, we perform ten samplings over five splits and six
permutations, for a total of 300 repetitions per approach per dataset.
Pseudocode~\ref{pseudo:sample-size-d2} describes Protocol
\texttt{sample-size-$\mathcal{D}_2$}, with the pale red region
highlighting the parts that are specific to this protocol.}{}

This protocol is justified by the well-documented difficulties in the
acquisition of demographic data for industry professionals, which vary
depending on the domain, the company and other factors of disparate
nature
\citep{beutel2019putting,holstein2019:if,gladonclavell2020auditing,bogen2021:ap,andrus2021:wm}. As
an example, \citet{gladonclavell2020auditing} perform an internal
fairness audit of a personalized wellness recommendation app, for
which sensitive features are not collected during production,
following the principles of data minimization. \replace{However, sensitive
features were available from an initial development phase in an
auxiliary set, whose size was determined by prior
considerations}{However, sensitive
features were available in a previously obtained
auxiliary set}. Furthermore, in the US, the collection of sensitive
attributes is highly industry dependent, ranging from mandatory to
forbidden, depending on the fragmented regulation applicable in each
domain \citep{bogen2021:ap}. High-quality auxiliary sets can be
obtained through optional surveys \citep{wilson2021building}, for which
response rates are highly dependent on trust, and can be improved by
making the intended use of the data clearer \citep{andrus2021:wm},
directly impacting the cardinality of $\mathcal{D}_2$.

Therefore, the cardinality of the auxiliary set $\mathcal{D}_2$ is an
interesting variable in the context of fairness audits. The estimation
methods that we consider have peculiar data requirements, such as the
need to estimate true/false positive rates. For this reason,
interesting patterns should emerge from this protocol.
% , as hinted by protocol \texttt{sample-prev-$\mathcal{D}_2$}.
We expect key trends for the estimation error to vary monotonically
with $|\breve{\mathcal{D}}_2|$, which is why we let it vary according
to a geometric progression.

% -------------------------------------------------------------------

\subsubsection{Results}

\noindent The signed estimation error on the Adult dataset under this
experimental protocol is illustrated in
Figure~\ref{fig:sample_size_d2}, as we vary the cardinality
$|\breve{\mathcal{D}}_2|$ along the $x$ axis. Clearly, the variance
for each approach decreases as the size of $\breve{\mathcal{D}}_2$
increases. Additionally, slight biases may improve,
% \fabsebcomment{Cosa vuol dire?} \afabcomment{aggiunto spiegazione da
% ``whose''}
as is the case with HDy, whose median error approaches zero as
$|\breve{\mathcal{D}}_2|$ increases. These trends are a direct
confirmation of hints already obtained from the protocols discussed
above. The most striking trend is the unreliability of ACC and PACC
(and especially the former) in the small data regime.
% \alex{this should hold true also for HDy, which similarly requires a
% validation split}. Indeed, these are adjusted methods, which require
% splitting the auxiliary set $\mathcal{D}_2$ into a first subset used
% for the actual training of a classifier and a second subset used to
% compute its $\mathrm{tpr}$ and $\mathrm{fpr}$ for the adjustment of
% prevalence estimates, as per Equations \ref{eq:acc} and
% \ref{eq:pacc}. \footnote{Once the estimations for $\mathrm{tpr}$ and
% $\mathrm{fpr}$ have been computed, it would be possible to
% \emph{retrain} the classifier, using the whole training set (as is
% customarily done in other areas of machine learning), thus
% generating a more robust classifier. In this particular case,
% however, retraining might cause the adjustment of the quantification
% to become less stable, since the main ingredients of the adjustemnt
% (the $\mathrm{tpr}$ and $\mathrm{fpr}$) actually characterise a
% different classifier. The extent to which this trade-off (more data
% vs. unstable adjustment) ends up impacting negatively or positively
% the final computation of the prevalence values is out of scope for
% this study; we thus avoid retraining in favour of having congruent
% estimates for the rates.}

\begin{figure}[tb]
  \centering
  \includegraphics[width=\mysize\textwidth]{./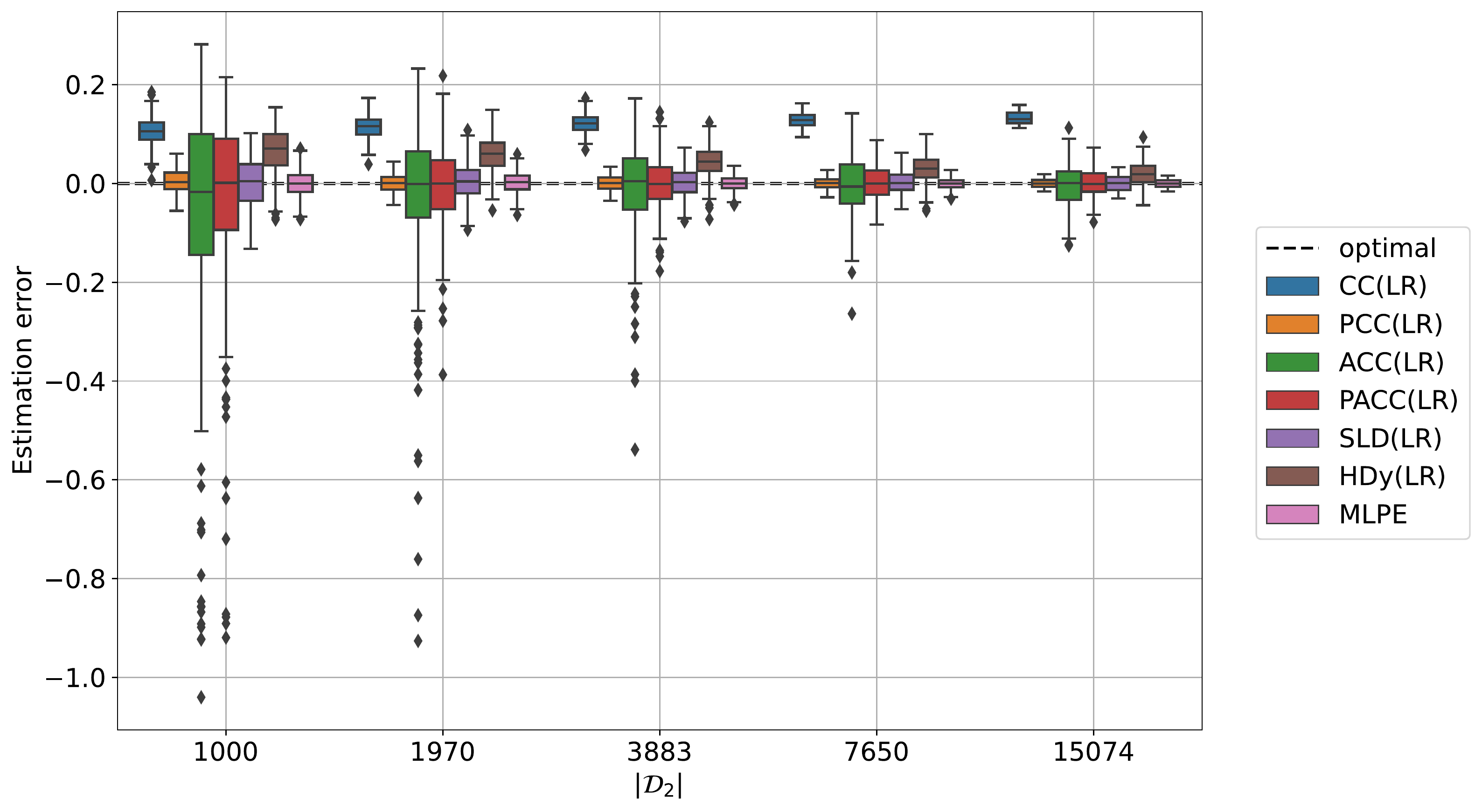}
  \caption{Protocol \texttt{sample-size-$\mathcal{D}_2$} on the Adult
  dataset. Distribution of the estimation error ($y$ axis) as the
  cardinality $|\breve{\mathcal{D}}_2|$ is varied ($x$ axis).}
  \label{fig:sample_size_d2}
\end{figure}

Similar results are obtained for COMPAS and CreditCard, as reported in
Table \ref{tab:sample_size_d2}. Across the three datasets, PACC and
ACC perform quite poorly due to the difficulty in estimating
$\mathrm{tpr}_{k_{s}}$ and $\mathrm{fpr}_{k_{s}}$ with the few
labelled data points available from $\breve{\mathcal{D}}_2$. On the
other hand, both SLD and HDy are fairly reliable. \replace{PCC stands out as a
strong performer}{PCC and MLPE stand out as
strong performers}, with low bias and low variance. This is due to the
fact that, under this experimental protocol, there is no shift between
the auxiliary set $\mathcal{D}_2$, on which the quantifiers are
trained, and the test set $\mathcal{D}_3$, on which they are
tested. Since the current protocol focuses on the cardinality of the
auxiliary set, $\mathcal{D}_2$ and $\mathcal{D}_3$ remain stratified
subsets of the Adult dataset, with identical distributions over
$(S,Y)$. In turn, this favours \blue{MLPE, which assumes no shift between $\mathcal{D}_2$ and $\mathcal{D}_3$, and } PCC, which relies on the fact that the
posterior probabilities of its underlying classifier $k$ are
well-calibrated on $\mathcal{D}_3$.\footnote{Posterior probabilities
$\Pr(s|\mathbf{x})$ are said to be \emph{well-calibrated} when, given
a sample $\sigma$ drawn from $\mathcal{X}$
$$\lim_{|\sigma|\rightarrow \infty}\frac{|\{\mathbf{x}\in s|
\Pr(s|\mathbf{x})=\alpha\}|}{|\{\mathbf{x}\in \sigma|
\Pr(s|\mathbf{x})=\alpha\}|}=\alpha.$$ i.e., when for big enough
samples, $\alpha$ approximates the true proportion of data points
belonging to class $s$ among all data points for which
$\Pr(s|\mathbf{x})=\alpha$.}

\begin{table}[tb]
  \small \centering
  \caption{Results obtained in the experiments run according to
  protocol \texttt{sample-size-$\mathcal{D}_2$}}
  \label{tab:sample_size_d2}
  % \resizebox{\textwidth}{!}{%
  
\begin{tabular}{llcccc} \toprule
 & & $\downarrow$ MAE & $\downarrow$ MSE & $\uparrow$ $P(\mathrm{AE}<0.1)$ & $\uparrow$ $P(\mathrm{AE}<0.2)$ \\ \midrule
\multirow{7}{*}{Adult} &CC(LR) & 0.120$^{\phantom{\ddag}}\pm0.022$ & 0.015$^{\phantom{\ddag}}\pm0.005$ & 0.159 & \textbf{1.000} \\
 &PCC(LR) & \textbf{0.012}$^{\phantom{\ddag}}\pm0.010$ & \textbf{0.000}$^{\phantom{\ddag}}\pm0.000$ & \textbf{1.000} & \textbf{1.000} \\
 &ACC(LR) & 0.083$^{\phantom{\ddag}}\pm0.113$ & 0.020$^{\phantom{\ddag}}\pm0.082$ & 0.747 & 0.928 \\
 &PACC(LR) & 0.055$^{\phantom{\ddag}}\pm0.079$ & 0.009$^{\phantom{\ddag}}\pm0.048$ & 0.856 & 0.969 \\
 &SLD(LR) & 0.025$^{\phantom{\ddag}}\pm0.020$ & 0.001$^{\phantom{\ddag}}\pm0.002$ & 0.996 & \textbf{1.000} \\
 &HDy(LR) & 0.047$^{\phantom{\ddag}}\pm0.033$ & 0.003$^{\phantom{\ddag}}\pm0.004$ & 0.922 & \textbf{1.000} \\
 &MLPE & 0.013$^{\ddag}\pm0.012$ & 0.000$^{\dag}\pm0.001$ & \textbf{1.000} & \textbf{1.000} \\

 \vspace{0.01cm} \\ 
\multirow{7}{*}{COMPAS} &CC(LR) & 0.353$^{\phantom{\ddag}}\pm0.047$ & 0.127$^{\phantom{\ddag}}\pm0.032$ & 0.000 & 0.005 \\
 &PCC(LR) & 0.030$^{\ddag}\pm0.020$ & \textbf{0.001}$^{\phantom{\ddag}}\pm0.001$ & 0.999 & \textbf{1.000} \\
 &ACC(LR) & 0.381$^{\phantom{\ddag}}\pm0.213$ & 0.190$^{\phantom{\ddag}}\pm0.214$ & 0.097 & 0.186 \\
 &PACC(LR) & 0.265$^{\phantom{\ddag}}\pm0.212$ & 0.115$^{\phantom{\ddag}}\pm0.183$ & 0.247 & 0.467 \\
 &SLD(LR) & 0.135$^{\phantom{\ddag}}\pm0.098$ & 0.028$^{\phantom{\ddag}}\pm0.038$ & 0.441 & 0.765 \\
 &HDy(LR) & 0.108$^{\phantom{\ddag}}\pm0.082$ & 0.018$^{\phantom{\ddag}}\pm0.027$ & 0.549 & 0.858 \\
 &MLPE & \textbf{0.029}$^{\phantom{\ddag}}\pm0.021$ & 0.001$^{\ddag}\pm0.002$ & \textbf{0.999} & \textbf{1.000} \\

 \vspace{0.01cm} \\ 
\multirow{7}{*}{CreditCard} &CC(LR) & 0.177$^{\phantom{\ddag}}\pm0.078$ & 0.037$^{\phantom{\ddag}}\pm0.030$ & 0.177 & 0.629 \\
 &PCC(LR) & 0.016$^{\ddag}\pm0.013$ & 0.000$^{\ddag}\pm0.001$ & \textbf{1.000} & \textbf{1.000} \\
 &ACC(LR) & 0.337$^{\phantom{\ddag}}\pm0.266$ & 0.184$^{\phantom{\ddag}}\pm0.259$ & 0.203 & 0.368 \\
 &PACC(LR) & 0.299$^{\phantom{\ddag}}\pm0.255$ & 0.154$^{\phantom{\ddag}}\pm0.240$ & 0.261 & 0.445 \\
 &SLD(LR) & 0.053$^{\phantom{\ddag}}\pm0.043$ & 0.005$^{\phantom{\ddag}}\pm0.008$ & 0.871 & 0.985 \\
 &HDy(LR) & 0.057$^{\phantom{\ddag}}\pm0.046$ & 0.005$^{\phantom{\ddag}}\pm0.009$ & 0.831 & 0.991 \\
 &MLPE & \textbf{0.016}$^{\phantom{\ddag}}\pm0.013$ & \textbf{0.000}$^{\phantom{\ddag}}\pm0.001$ & \textbf{1.000} & \textbf{1.000} \\
\bottomrule
\end{tabular}%

  % }%
\end{table}

% ------------------------------------------------------------------

\subsection{Distribution Shift Affecting $\mathcal{D}_1$: Protocol
\texttt{sample-prev-$\mathcal{D}_1$}}
\label{sec:sample_prev_d1}

% ------------------------------------------------------------------

\subsubsection{Motivation and Setup}
\label{sec:sample_prev_d1:setup}

\noindent With this protocol we evaluate the impact of shifts in the
training set $\mathcal{D}_1$, by drawing different subsets
$\breve{\mathcal{D}}_1$ as we vary $\Pr(Y=S)$.\footnote{While $Y$ and
$S$ take values from different domains, by $Y=S$ we mean
$(Y=\oplus \wedge S=1) \vee (Y=\ominus \wedge S=0)$, i.e. a situation
where positive outcomes are associated with group $S=1$ and negative
outcomes with group $S=0$.} More specifically, we vary $\Pr(Y=S)$
between 0 and 1 with a step of 0.1. In other words, we sample at
random from $\mathcal{D}_1$ a proportion $p$ of instances
$(\mathbf{x}_i,s_i,y_i)$ such that $Y=S$ and a proportion $(1-p)$ such
that $Y\neq S$, with
$p \in \left \lbrace 0.0, 0.1, \dots, 0.9, 1.0 \right \rbrace$.
% It is worth noting that we defined $\mathcal{D}_1$, in Section
% \ref{sec:notation}, as a training set involving
% $(\mathcal{X},\mathcal{Y})$. Here we exploit our knowledge of $S$ to
% control the dataset shift between training and test conditions,
% emulating a biased data collection procedure. Once a training set
% has been selected, the classifier $h$ is learnt exclusively from
% non-sensitive attributes $\mathcal{X}$, completely disregarding the
% sensitive attribute $S$.
We choose a limited cardinality $|\breve{\mathcal{D}}_1|=500$, which
allows us to perform multiple repetitions at reasonable computational
costs, as described in Section~\ref{sec:setup}. Although this may
impact the quality of the classifier $h$, this aspect is not the
central focus of the present work.

This experimental protocol aligns with biased data collection
procedures, sometimes referred to as \emph{censored data}
\citep{kallus2018residual}. Indeed, it is common for the ground-truth
variable to represent a mere proxy for the actual quantity of
interest, with nontrivial sampling effects between the two. For
example, the validity of arrest data as a proxy for offence has been
brought into question \citep{Fogliato:2021mb}.
% \afabcomment{Cite this work when proceedings become available
% \url{https://arxiv.org/pdf/2105.04953.pdf}}.
In fact, in this domain, different sources of sampling bias can be in
action, such as uneven allocation of police resources between
jurisdictions and neighbourhoods \citep{malcolm2008:minority} and lower
levels of cooperation in populations who feel oppressed by law
enforcement \citep{xie2012racial}.

By varying $\Pr(Y=S)$ we impose a spurious correlation between $Y$ and
$S$, which may be picked up by the classifier $h$. In extreme
situations, such as when $\Pr(Y=S) \simeq 1$, a classifier $h$ can
confound the concepts behind $S$ and $Y$. In turn, we expect this to
unevenly affect the acceptance rates for the two demographic groups,
effectively changing the demographic disparity of $h$, i.e., our
estimand $\delta_{h}^{S}$. Pseudocode~\ref{pseudo:sample-prev-d1}
describes the main steps to implement Protocol
\texttt{sample-prev-$\mathcal{D}_1$}.

% Datasets where $Y \perp S | X$ are an exception in this regard. If,
% given the observable attributes, the outcome is independent of the
% sensitive attribute, we expect that classifiers will not learn
% spurious correlations between $Y$ and $S$ that are due to sample
% selection.

% -------------------------------------------------------------------

\subsubsection{Results}

\begin{figure}[tb]
  \centering
  \includegraphics[width=\mysize\textwidth]{./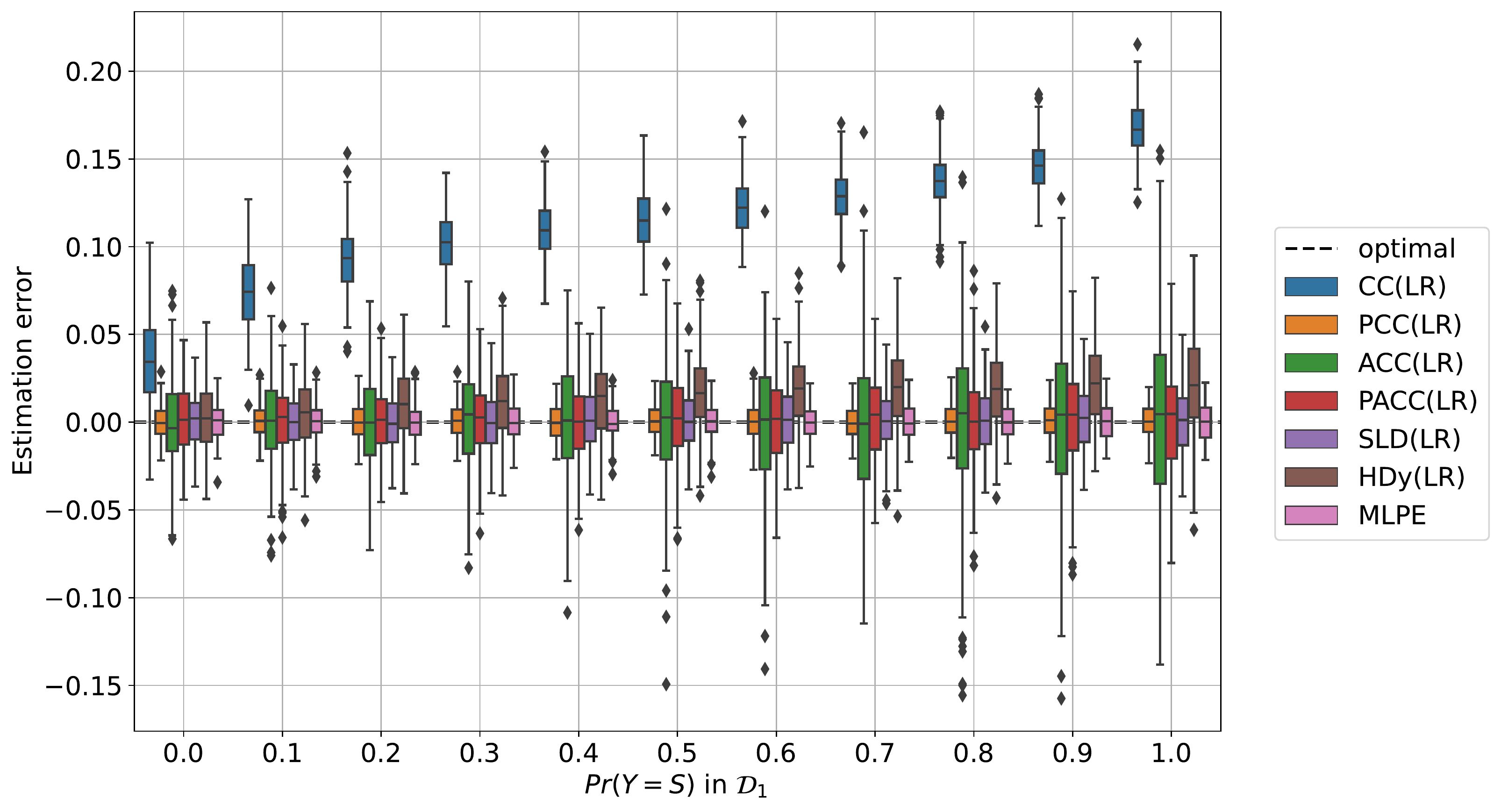}
  \caption{Protocol \texttt{sample-prev-$\mathcal{D}_1$} on the Adult
  dataset. Distribution of the estimation error ($y$ axis) as
  $\breve{\mathcal{D}}_1$ is sampled with a given $\Pr(Y=S)$ ($x$
  axis). Each boxplot summarizes the results of 5 random splits, 6
  role permutations and 10 samplings of $\breve{\mathcal{D}}_1$.}
  \label{fig:sample_prev_d1}
\end{figure}

\noindent In Figure~\ref{fig:sample_prev_d1}, the $y$ axis depicts the
estimation error (Equation~\ref{eq:estim_err1}), as we vary $\Pr(Y=S)$
along the $x$ axis. Each quantification approach outperforms vanilla
CC, which overestimates the demographic disparity of the classifier
$h$, i.e., its estimate is larger than the ground truth value, so
$\hat{\delta}_{h}^{S, \text{CC}} > \delta_{h}^{S}$. ACC, PCC, PACC, SLD,
\replace{and HDy}{HDy, and MLPE} display a negligible bias and a reliable estimate of
demographic disparity. The absolute error for these techniques is
% bounded close to zero so that the Mean Absolute Error is
always below 0.1, except for a few outliers.

\begin{table}[tb]
  \small \centering
  \caption{Results obtained in the experiments run according to
  protocol \texttt{sample-prev-$\mathcal{D}_1$}.}
  \label{tab:sample_prev_d1}
  % \resizebox{\textwidth}{!}{%
  
\begin{tabular}{llcccc} \toprule
 & & $\downarrow$ MAE & $\downarrow$ MSE & $\uparrow$ $P(\mathrm{AE}<0.1)$ & $\uparrow$ $P(\mathrm{AE}<0.2)$ \\ \midrule
\multirow{7}{*}{Adult} &CC(LR) & 0.112$^{\phantom{\ddag}}\pm0.038$ & 0.014$^{\phantom{\ddag}}\pm0.008$ & 0.321 & 0.998 \\
 &PCC(LR) & \textbf{0.008}$^{\phantom{\ddag}}\pm0.005$ & \textbf{0.000}$^{\phantom{\ddag}}\pm0.000$ & \textbf{1.000} & \textbf{1.000} \\
 &ACC(LR) & 0.029$^{\phantom{\ddag}}\pm0.024$ & 0.001$^{\phantom{\ddag}}\pm0.003$ & 0.983 & \textbf{1.000} \\
 &PACC(LR) & 0.019$^{\phantom{\ddag}}\pm0.014$ & 0.001$^{\phantom{\ddag}}\pm0.001$ & \textbf{1.000} & \textbf{1.000} \\
 &SLD(LR) & 0.013$^{\phantom{\ddag}}\pm0.010$ & 0.000$^{\phantom{\ddag}}\pm0.000$ & \textbf{1.000} & \textbf{1.000} \\
 &HDy(LR) & 0.022$^{\phantom{\ddag}}\pm0.016$ & 0.001$^{\phantom{\ddag}}\pm0.001$ & \textbf{1.000} & \textbf{1.000} \\
 &MLPE & 0.008$^{\phantom{\ddag}}\pm0.006$ & 0.000$^{\phantom{\ddag}}\pm0.000$ & \textbf{1.000} & \textbf{1.000} \\

 \vspace{0.01cm} \\ 
\multirow{7}{*}{COMPAS} &CC(LR) & 0.328$^{\phantom{\ddag}}\pm0.091$ & 0.116$^{\phantom{\ddag}}\pm0.056$ & 0.022 & 0.081 \\
 &PCC(LR) & \textbf{0.026}$^{\phantom{\ddag}}\pm0.019$ & \textbf{0.001}$^{\phantom{\ddag}}\pm0.001$ & 1.000 & \textbf{1.000} \\
 &ACC(LR) & 0.349$^{\phantom{\ddag}}\pm0.211$ & 0.166$^{\phantom{\ddag}}\pm0.192$ & 0.130 & 0.252 \\
 &PACC(LR) & 0.194$^{\phantom{\ddag}}\pm0.164$ & 0.065$^{\phantom{\ddag}}\pm0.115$ & 0.345 & 0.607 \\
 &SLD(LR) & 0.114$^{\phantom{\ddag}}\pm0.083$ & 0.020$^{\phantom{\ddag}}\pm0.027$ & 0.512 & 0.849 \\
 &HDy(LR) & 0.096$^{\phantom{\ddag}}\pm0.076$ & 0.015$^{\phantom{\ddag}}\pm0.023$ & 0.605 & 0.897 \\
 &MLPE & 0.027$^{\ddag}\pm0.019$ & 0.001$^{\ddag}\pm0.001$ & \textbf{1.000} & \textbf{1.000} \\

 \vspace{0.01cm} \\ 
\multirow{7}{*}{CreditCard} &CC(LR) & 0.152$^{\phantom{\ddag}}\pm0.095$ & 0.032$^{\phantom{\ddag}}\pm0.036$ & 0.338 & 0.711 \\
 &PCC(LR) & \textbf{0.010}$^{\phantom{\ddag}}\pm0.007$ & \textbf{0.000}$^{\phantom{\ddag}}\pm0.000$ & \textbf{1.000} & \textbf{1.000} \\
 &ACC(LR) & 0.187$^{\phantom{\ddag}}\pm0.152$ & 0.058$^{\phantom{\ddag}}\pm0.094$ & 0.347 & 0.626 \\
 &PACC(LR) & 0.130$^{\phantom{\ddag}}\pm0.106$ & 0.028$^{\phantom{\ddag}}\pm0.046$ & 0.487 & 0.777 \\
 &SLD(LR) & 0.047$^{\phantom{\ddag}}\pm0.037$ & 0.004$^{\phantom{\ddag}}\pm0.005$ & 0.902 & 0.998 \\
 &HDy(LR) & 0.061$^{\phantom{\ddag}}\pm0.047$ & 0.006$^{\phantom{\ddag}}\pm0.009$ & 0.814 & 0.989 \\
 &MLPE & 0.011$^{\phantom{\ddag}}\pm0.008$ & 0.000$^{\phantom{\ddag}}\pm0.000$ & \textbf{1.000} & \textbf{1.000} \\
\bottomrule
\end{tabular}%

  % }%
\end{table}

Results for the COMPAS and CreditCard datasets are reported in Table
\ref{tab:sample_prev_d1}. Confirming the results of previous
protocols, these datasets provide a harder setting for the estimate of
demographic disparity, as shown by higher MAE and MSE, which, for
instance, increase by one order of magnitude for SLD and PACC moving
from Adult to COMPAS. PCC is the best performer, for the same reasons
discussed in Section~\ref{sec:sample_prev_d3}, i.e., the absence of
shift between $\mathcal{D}_2$ and $\mathcal{D}_3$.

\subsection{Distribution Shift Affecting $\mathcal{D}_1$: Protocol
\texttt{flip-prev-$\mathcal{D}_1$}}
\label{sec:flip_prev_d1}

% -------------------------------------------------------------------

\subsubsection{Motivation and Setup}

\noindent %We just noted that changing $\Pr(Y=S)$ via sampling in the training set $\mathcal{D}_1$ may leave the classifier $h$ and its demographic parity unaffected.

\noindent Certain biases in the training set resulting from
domain-specific practices, such as the use of arrest as a substitute
for the offence, can be modelled as either a selection bias
\citep{Fogliato:2021mb} or a label bias distorting the ground truth
variable $Y$ \citep{fogliato2020:fairness}. With this experimental
protocol, we impose the latter bias by actively flipping some ground
truth labels $Y$ in $\mathcal{D}_1$ based on their sensitive
attribute. Similarly to \texttt{sample-prev-$\mathcal{D}_1$}, this
protocol achieves a given association
% $\Pr(Y=S)$
between the target $Y$ and the sensitive variable $S$ in the training
set $\mathcal{D}_1$. However, instead of sampling, it does so by
flipping the $Y$ label of some data points. More specifically, we
impose
% $\Pr(Y=S|S=0)=\Pr(Y=S|S=1)
$\Pr(Y=\ominus|S=0)=\Pr(Y=\oplus|S=1) = p$ and let
$p$ take values across eleven evenly spaced values between 0 and
1. For every value of
$p$, we first sample a random subset
$\breve{\mathcal{D}}_1$ of the training set with cardinality
500. Next, we actively flip some
$Y$ labels in both demographic groups, until both
$\Pr(Y=\ominus|S=0)$ and
$\Pr(Y=\oplus|S=1)$ reach the desired value of $p \in \left \lbrace
  0.0, 0.1, \dots, 0.9, 1.0 \right
\rbrace$. Finally, we train a classifier $h$ on the attributes
$X$ and modified ground truth $Y$ of $\breve{\mathcal{D}}_1$.

This experimental protocol is compatible with settings where the
training data capture a distorted ground truth due to systematic
biases and group-dependent annotation accuracy
\citep{wang2021:fair}. As an example, the quality of medical diagnoses
can depend on race, sex, and socioeconomic status
\citep{gianfrancesco2018potential}. In addition, health care
expenditures have been used as a proxy to train an algorithm deployed
nationwide in the US to estimate patients' health care needs,
resulting in a systematic underestimation of the needs of
African-American patients \citep{obermeyer2019dissecting}. In the
hiring domain, employer response rates to resumes have been found to
vary with the perceived ethnic origin of an applicant’s name
\citep{bertrand2004emily}.
% Finally, the gender gap in mathematical performance, while
% negligible in elementary school, has been found to increase with age
% \citep{lindberg2010new}, possibly due to gender stereotypes arising
% in this domain from an early age \citep{cvencek2011:mg} and to the
% prescriptive nature of these stereotypes
% \citep{nosek2002:mm,ellemers2018:gs}.
These are all examples where the ``ground truth'' associated with a
dataset is distorted to the disadvantage of a sensitive demographic
group.

Similarly to Section~\ref{sec:sample_prev_d1}, we expect that this
experimental protocol will cause significant variations in the
demographic disparity of the classifier $h$ due to the strong
correlation we impose between $S$ and $Y$ by label flipping. The
pseudocode that describes this protocol is essentially the same as in
Pseudocode~\ref{pseudo:sample-prev-d1}, simply replacing the sampling
in line~\ref{line:sample:p1} with the label flipping procedure
described above; therefore, we omit it.

% Indeed, if the condition $Y \perp S | X$ held true on the original
% dataset $\mathcal{D}_1$, after label flipping we are guaranteed to
% disrupt the conditional independence of ground truth and sensitive
% attribute.

% -------------------------------------------------------------------

\subsubsection{Results}

\begin{figure}[tb]
  \centering
  \includegraphics[width=\mysize\textwidth]{./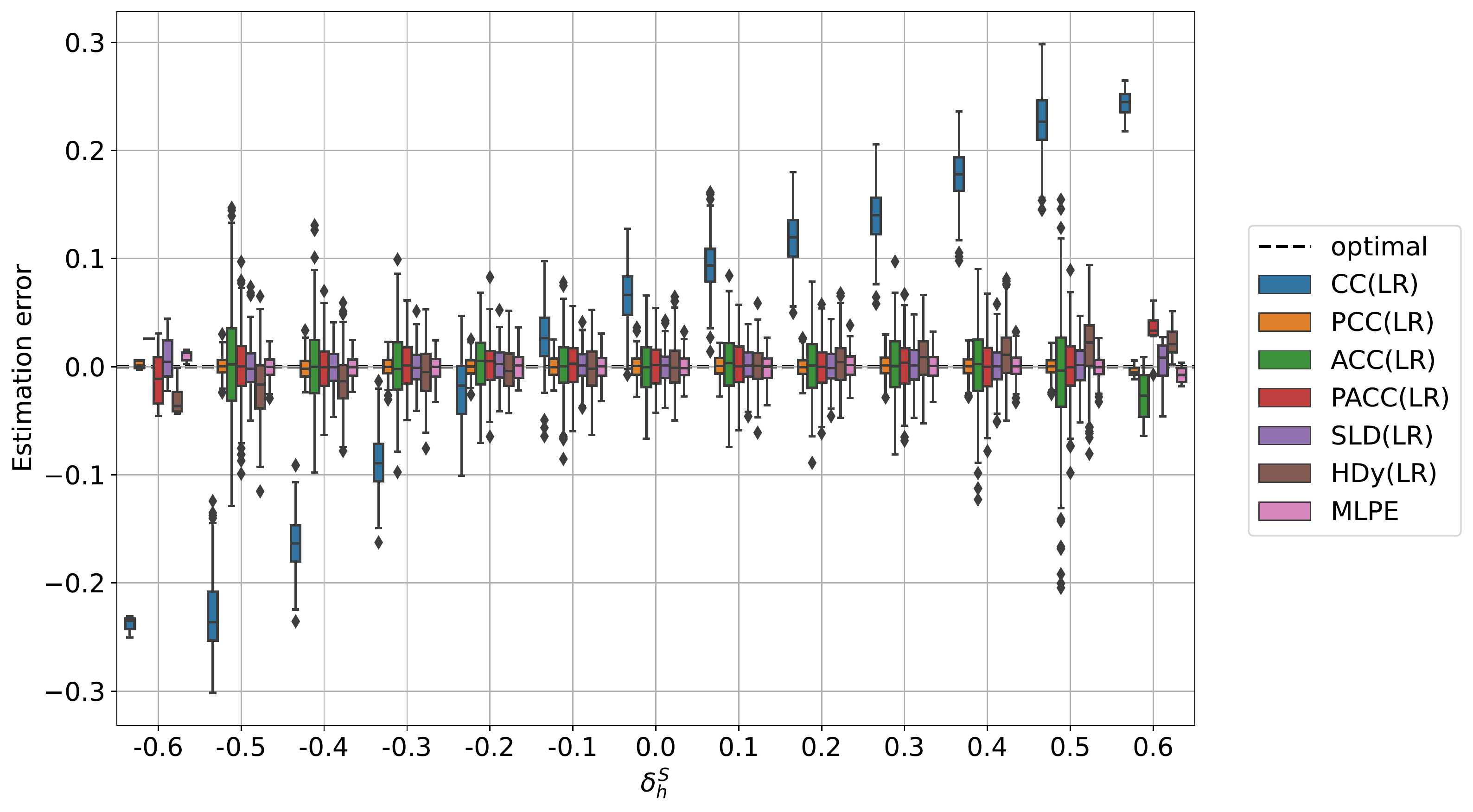}
  \caption{Protocol \texttt{flip-prev-$\mathcal{D}_1$} on the Adult
  dataset. Distribution of the estimation error ($y$ axis) as
  $\delta_{h}^{S}$ varies ($x$ axis).}
  \label{fig:flip_prev_d1}
\end{figure}

\noindent Figure~\ref{fig:flip_prev_d1} illustrates the key trends
caused by this experimental protocol on the Adult dataset. A clear
trend is visible along the $x$ axis, which reports the true
demographic disparity $\delta_{h}^{S}$ for the classifier $h$
(Equation~\ref{eq:dd}), quantized with a step of 0.1. We choose to
depict the true demographic disparity on the $x$ axis as it is the
estimand, hence a quantity of interest by definition. The error
incurred by CC displays a linear trend that goes from severe
underestimation (for low values of the $x$ axis) to severe
overestimation (for large values of the $x$ axis).
% CC displays a tendency to over-estimation (or under-estimation
% \alex{forse potremmo dire che passa dall'over-estimation
% all'under-estimation}) with a clear dependence on the true value of
% the estimand.
In other words, the (signed) estimation error increases with the true
demographic disparity of the classifier $h$, a phenomenon also noticed
by \citet{chen2019:fu}. All remaining approaches compensate for this
weakness and display a good estimation error: PCC, ACC, PACC, SLD, \replace{and HDy}{HDy, and MLPE} have low variance and a median estimation close to zero across
different values of the estimand. Table~\ref{tab:flip_prev_d1}
summarizes similar results on COMPASS and CreditCard; PCC remains
well-calibrated and very effective, while SLD and HDy also have good
performance.

\begin{table}[tb]
  \small \centering
  \caption{Results obtained in the experiments run according to
  protocol \texttt{flip-prev-$\mathcal{D}_1$}.}
  \label{tab:flip_prev_d1}
  % \resizebox{\textwidth}{!}{%
  
\begin{tabular}{llcccc} \toprule
 & & $\downarrow$ MAE & $\downarrow$ MSE & $\uparrow$ $P(\mathrm{AE}<0.1)$ & $\uparrow$ $P(\mathrm{AE}<0.2)$ \\ \midrule
\multirow{7}{*}{Adult} &CC(LR) & 0.151$^{\phantom{\ddag}}\pm0.072$ & 0.028$^{\phantom{\ddag}}\pm0.021$ & 0.274 & 0.706 \\
 &PCC(LR) & \textbf{0.008}$^{\phantom{\ddag}}\pm0.006$ & \textbf{0.000}$^{\phantom{\ddag}}\pm0.000$ & \textbf{1.000} & \textbf{1.000} \\
 &ACC(LR) & 0.030$^{\phantom{\ddag}}\pm0.025$ & 0.002$^{\phantom{\ddag}}\pm0.003$ & 0.982 & 0.999 \\
 &PACC(LR) & 0.020$^{\phantom{\ddag}}\pm0.015$ & 0.001$^{\phantom{\ddag}}\pm0.001$ & \textbf{1.000} & \textbf{1.000} \\
 &SLD(LR) & 0.014$^{\phantom{\ddag}}\pm0.011$ & 0.000$^{\phantom{\ddag}}\pm0.000$ & \textbf{1.000} & \textbf{1.000} \\
 &HDy(LR) & 0.022$^{\phantom{\ddag}}\pm0.017$ & 0.001$^{\phantom{\ddag}}\pm0.001$ & 1.000 & \textbf{1.000} \\
 &MLPE & 0.009$^{\phantom{\ddag}}\pm0.006$ & 0.000$^{\phantom{\ddag}}\pm0.000$ & \textbf{1.000} & \textbf{1.000} \\

 \vspace{0.01cm} \\ 
\multirow{7}{*}{COMPAS} &CC(LR) & 0.388$^{\phantom{\ddag}}\pm0.116$ & 0.164$^{\phantom{\ddag}}\pm0.083$ & 0.027 & 0.068 \\
 &PCC(LR) & \textbf{0.027}$^{\phantom{\ddag}}\pm0.020$ & \textbf{0.001}$^{\phantom{\ddag}}\pm0.001$ & 0.998 & \textbf{1.000} \\
 &ACC(LR) & 0.392$^{\phantom{\ddag}}\pm0.211$ & 0.198$^{\phantom{\ddag}}\pm0.199$ & 0.105 & 0.194 \\
 &PACC(LR) & 0.195$^{\phantom{\ddag}}\pm0.160$ & 0.063$^{\phantom{\ddag}}\pm0.106$ & 0.337 & 0.611 \\
 &SLD(LR) & 0.115$^{\phantom{\ddag}}\pm0.084$ & 0.020$^{\phantom{\ddag}}\pm0.027$ & 0.513 & 0.836 \\
 &HDy(LR) & 0.094$^{\phantom{\ddag}}\pm0.075$ & 0.015$^{\phantom{\ddag}}\pm0.023$ & 0.612 & 0.906 \\
 &MLPE & 0.028$^{\ddag}\pm0.019$ & 0.001$^{\ddag}\pm0.001$ & \textbf{0.999} & \textbf{1.000} \\

 \vspace{0.01cm} \\ 
\multirow{7}{*}{CreditCard} &CC(LR) & 0.159$^{\phantom{\ddag}}\pm0.101$ & 0.036$^{\phantom{\ddag}}\pm0.037$ & 0.345 & 0.640 \\
 &PCC(LR) & \textbf{0.011}$^{\phantom{\ddag}}\pm0.009$ & \textbf{0.000}$^{\phantom{\ddag}}\pm0.000$ & \textbf{1.000} & \textbf{1.000} \\
 &ACC(LR) & 0.223$^{\phantom{\ddag}}\pm0.185$ & 0.084$^{\phantom{\ddag}}\pm0.130$ & 0.307 & 0.565 \\
 &PACC(LR) & 0.147$^{\phantom{\ddag}}\pm0.117$ & 0.035$^{\phantom{\ddag}}\pm0.056$ & 0.420 & 0.725 \\
 &SLD(LR) & 0.056$^{\phantom{\ddag}}\pm0.043$ & 0.005$^{\phantom{\ddag}}\pm0.007$ & 0.843 & 0.995 \\
 &HDy(LR) & 0.071$^{\phantom{\ddag}}\pm0.055$ & 0.008$^{\phantom{\ddag}}\pm0.012$ & 0.732 & 0.973 \\
 &MLPE & 0.012$^{\dag}\pm0.009$ & 0.000$^{\dag}\pm0.000$ & \textbf{1.000} & \textbf{1.000} \\
\bottomrule
\end{tabular}%

  % }%
\end{table}

% PCC stands out as particularly effective, which, just like for
% protocol \texttt{sample-prev-$\mathcal{D}_1$}, can be explained with
% a complete absence of shift between $\mathcal{D}_2$ and
% $\mathcal{D}_3$, which makes the posteriors of the underlying
% classifier $k$ reliable during inference.

% As summarised in Table~\ref{tab:flip_prev_d1}, similar trends obtain
% for the COMPAS dataset, albeit with worse average performance,
% already noticed in Section~\ref{sec:sample_prev_d1}. A notable
% result on CreditCard is the solid performance of
% Classify-and-Count. We notice that, on CreditCard, the entire range
% of biased training sets $\breve{\mathcal{D}}_1$ imposed by protocol
% \texttt{flip-prev-$\mathcal{D}_1$} brings about low variability in
% the true demographic disparity of classifier $h$ on $\mathcal{D}_3$,
% such that the maximum and minimum values of $\delta_{h}^{S}$ across
% the entire protocol differ by 0.2. This variability is one order of
% magnitude lower than the one measured on the Adult dataset, where
% the maximum and minimum values taken by $\delta_{h}^{S}$ differ by
% 1, as per Figure~\ref{fig:flip_prev_d1}, making estimation via
% Classify-and-Count more challenging on Adult rather than on
% CreditCard.

% -------------------------------------------------------------------
\blue{
\subsection{Estimating Fairness for Discrimination-Aware Classifiers} \label{sec:fair_classifier}
\subsubsection{Motivation and Setup}
So far, we have considered classifiers $h(\mathbf{x})$ which only maximize accuracy. In practice, it can be especially interesting to monitor fairness for methods that target this quantity, explicitly optimizing fairness during training. In fact, sensitive attributes may be available during training, allowing for a direct optimization of equity, but unavailable after deployment, complicating fairness evaluation of live systems. In this section, we replace the vanilla LR classifier from the previous experiments with a fairness-aware method. We train a decision tree $h_{\mathrm{T}}$, jointly optimizing accuracy and demographic parity, with the cost-sensitive method of \citet{agarwal2018reductions}. This method makes use of $s$ during training to adjust the cost of positive and negative predictions according to group membership. This learning scheme leads to a classifier $h_{\mathrm{T}}(\mathbf{x})$ which is fairness-aware but does not require access to sensitive attributes to issue predictions on $\mathcal{D}_3$.
\subsubsection{Results}
}

\begin{figure}[t]%[h!]
  \centering
  \begin{subfigure}{\mysize\textwidth}
    \centering
    \includegraphics[width=\mysize\textwidth]{./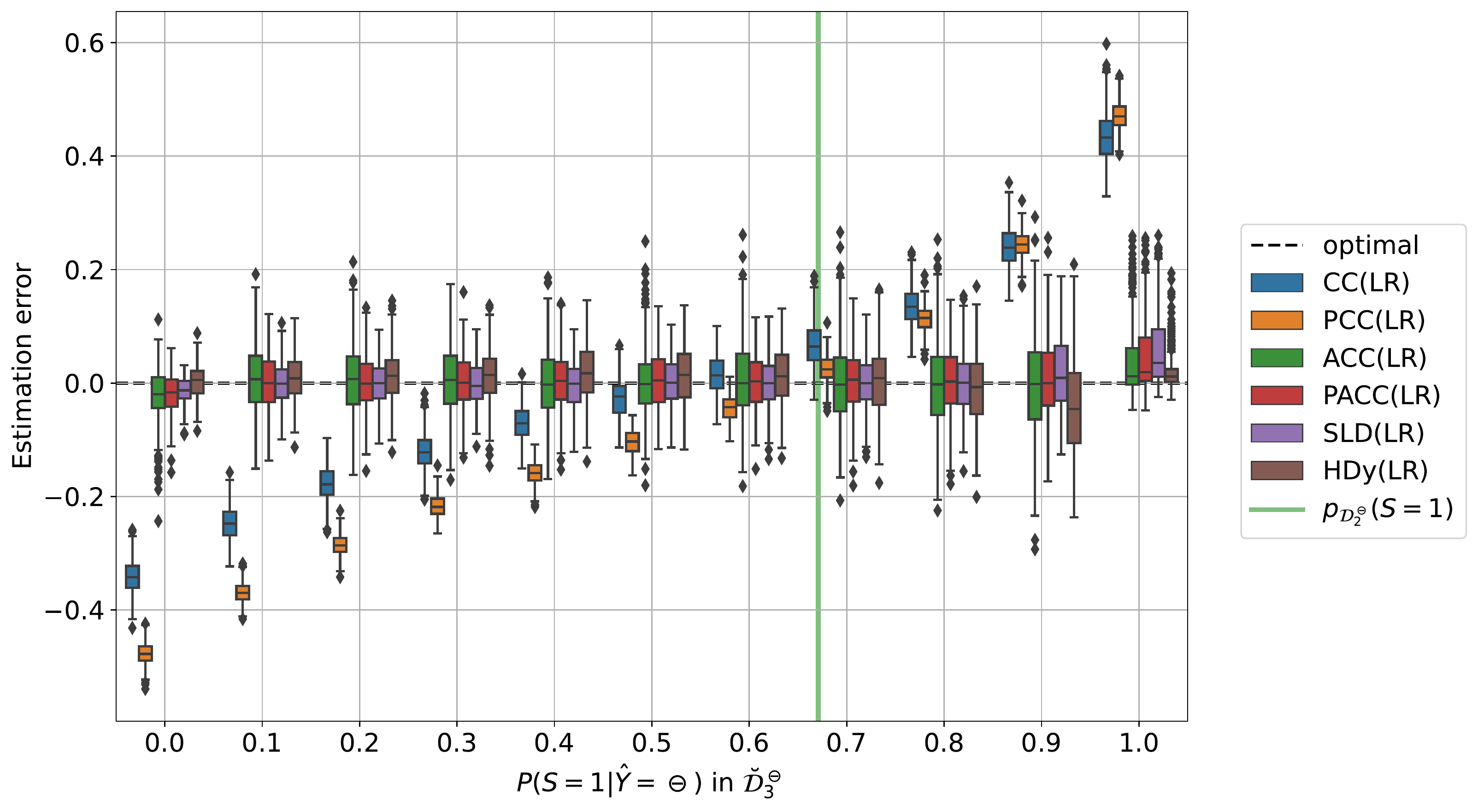}
    \caption{Protocol
    \texttt{sample-prev-$\mathcal{D}_{3}^{\ominus}$}}
    \label{fig:sample_prev_d30_fair}
  \end{subfigure} \\
  \begin{subfigure}{\mysize\textwidth}
    \centering
    \includegraphics[width=\mysize\textwidth]{./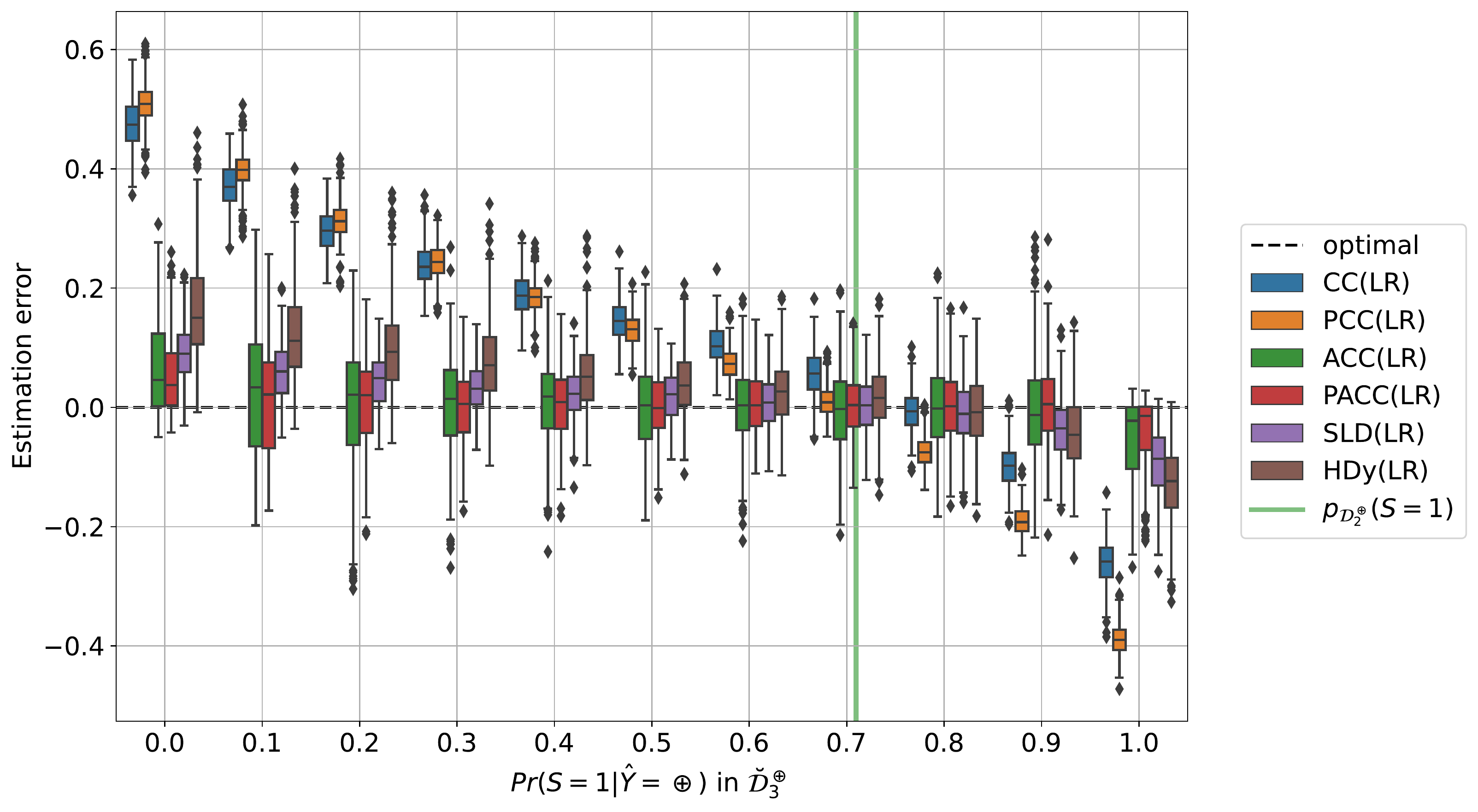}
    \caption{Protocol \texttt{sample-prev-$\mathcal{D}_{3}^{\oplus}$}}
    \label{fig:sample_prev_d31_fair}
  \end{subfigure}
  \caption{\blue{Experiments conducted according to protocol
  \texttt{sample-prev-$\mathcal{D}_3$} on a fairness-aware classifier.}}
  \label{fig:sample_prev_d3_fair}
\end{figure}

\blue{
We focus our exposition on protocol \texttt{sample-prev-$\mathcal{D}_3$}; analogous results are obtained on the remaining protocols. The fairness-aware decision tree improves DD by one order of magnitude, with an average $\delta_{h_{\mathrm{T}}}^{S}=0.017$, down from $\delta_{h}^{S}=0.158$ for LR. Figure \ref{fig:sample_prev_d3_fair}, reporting the estimation error from different quantifiers, shows the same patterns as its counterpart from Figure \ref{fig:sample_prev_d3}. CC and PCC have a sizeable bias, while ACC, PACC, SLD, and HDy display low estimation error for all the tested prevalence values. This experiment confirms the suitability of our method in measuring fairness under unawareness, also for fairness-aware classifiers. 
}

\subsection{Quantifying Without Classifying}
% \subsection{Decoupling quantification and classification}
\label{sec:q_not_c}

% -------------------------------------------------------------------

\subsubsection{Motivation and Setup}

\noindent The motivating use case for this work are internal audits of
group fairness, to characterize a model and its potential to harm
sensitive categories of users. Following \citet{aswasthi2021:ef}, we
envision this as an important first step in empowering practitioners
to argue for resources and, more broadly, to advocate for a deeper
understanding and careful evaluation of models. Unfortunately,
developing a tool to infer demographic information, even if motivated
by careful intentions and good faith, leaves open the possibility for
misuse, especially at an individual level. Once a predictive tool,
also capable of instance-level classification, is available, it will
be tempting for some actors to exploit it precisely for this purpose.

For example, the \textit{Bayesian Improved Surname Geocoding} (BISG)
method was designed to estimate population-level disparities in health
care \citep{elliott2009:uc}, but later used to identify individuals
potentially eligible for settlements related to discriminatory
practices of auto lenders
\citep{andriotis2015:US,koren2016:fu}. Automatic inference of sensitive
attributes of individuals is problematic for several reasons. Such
procedure exploits the co-occurrence of membership in a group and
display of a given trait, running the risk of learning, encoding, and
reinforcing stereotypical associations. Although also true of
group-level estimates, this practice is particularly troublesome at
the individual level, where it is likely to cause harms for people who
do not fit the norm, resulting, for instance, in misgendering and the
associated negative effects \citep{mclemore2015experiences}. Even when
``accurate'', the mere act of externally assigning sensitive labels
can be problematic. For example, gender assignment can be forceful and
cause psychological harm for individuals
\citep{keyes2018:misgendering}.

In this section, we aim to demonstrate that it is possible to decouple
the objective of (group-level) quantification of sensitive attributes
from that of (individual-level) classification. For each protocol in
the previous sections, we compute the accuracy and $F_1$ score
(defined below) of the sensitive attribute classifier $k$ underlying
the tested quantifiers, comparing it against their estimation error
for class prevalence of the same sensitive attribute
(Equation~\ref{eq:estim_err1}). Accuracy is the proportion of
correctly classified instances over the total
(Equation~\ref{eq:accuracy}) while $F_1$ is the harmonic mean of
precision and recall (Equation~\ref{eq:F1}). Both measures can be
computed from the counters of true positives (TP), true negatives
(TN), false positives (FP), and false negatives (FN), as follows:
%
% \begin{equation}
%   \label{eq:accuracy}
%   \mathrm{accuracy} = 
%   \dfrac{\TP+\TN}{\TP+\TN + \FP + \FN}
% \end{equation}
% %
% \begin{align}
%   \label{eq:F1}
%   F_{1} = & \ \left\{
%             \begin{array}{cl}
%               \dfrac{2\TP}{2\TP + \FP + \FN} & \mathrm{if} \ \TP + \FP + \FN>0 \rule[-3ex]{0mm}{7ex} \\
%               1 & \mathrm{if} \ \TP=\FP=\FN=0 \\
%             \end{array}
%   \right.
% \end{align}

\begin{align}
  \label{eq:accuracy}
  \mathrm{accuracy} = & \ 
                        \dfrac{\TP+\TN}{\TP+\TN + \FP + \FN} \\
  \label{eq:F1}
  F_{1} = & \ \left\{
            \begin{array}{cl}
              \dfrac{2\TP}{2\TP + \FP + \FN} & \mathrm{if} \ \TP + \FP + \FN>0 \rule[-3ex]{0mm}{7ex} \\
              1 & \mathrm{if} \ \TP=\FP=\FN=0 \\
            \end{array}
  \right.
\end{align}

% -------------------------------------------------------------------

\subsubsection{Results}

\def \clfplotlength {\mysize\textwidth}
\begin{figure}[!t]%[h!]
  \centering
  \begin{subfigure}{\clfplotlength}
    \centering
    \includegraphics[width=\mysize\textwidth]{./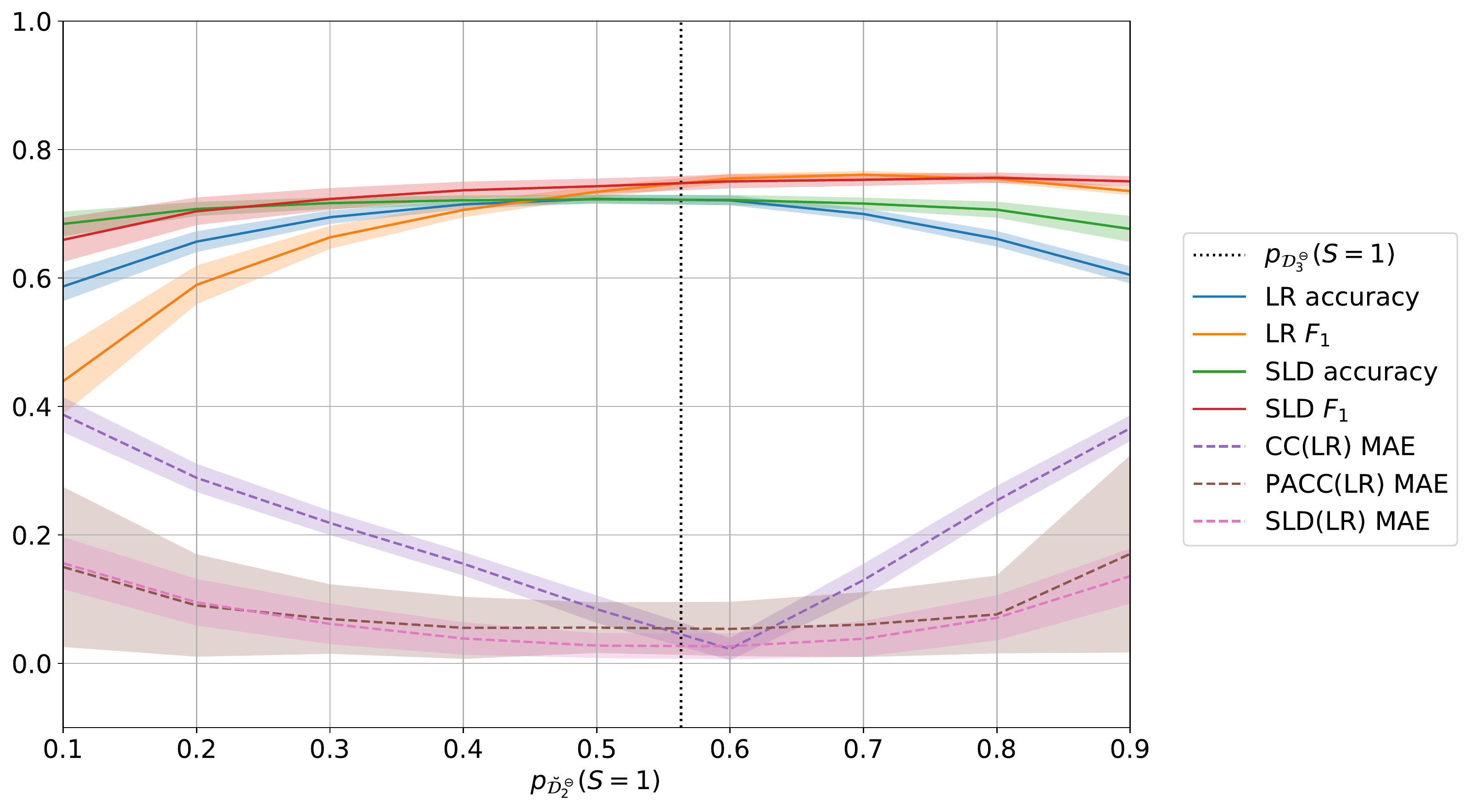}
    \caption{Protocol
    \texttt{sample-prev-$\mathcal{D}_{2}^{\ominus}$}}
    \label{fig:sample_prev_d20_q_wo_c}
  \end{subfigure} \\
  \begin{subfigure}{\clfplotlength}
    \centering
    \includegraphics[width=\mysize\textwidth]{./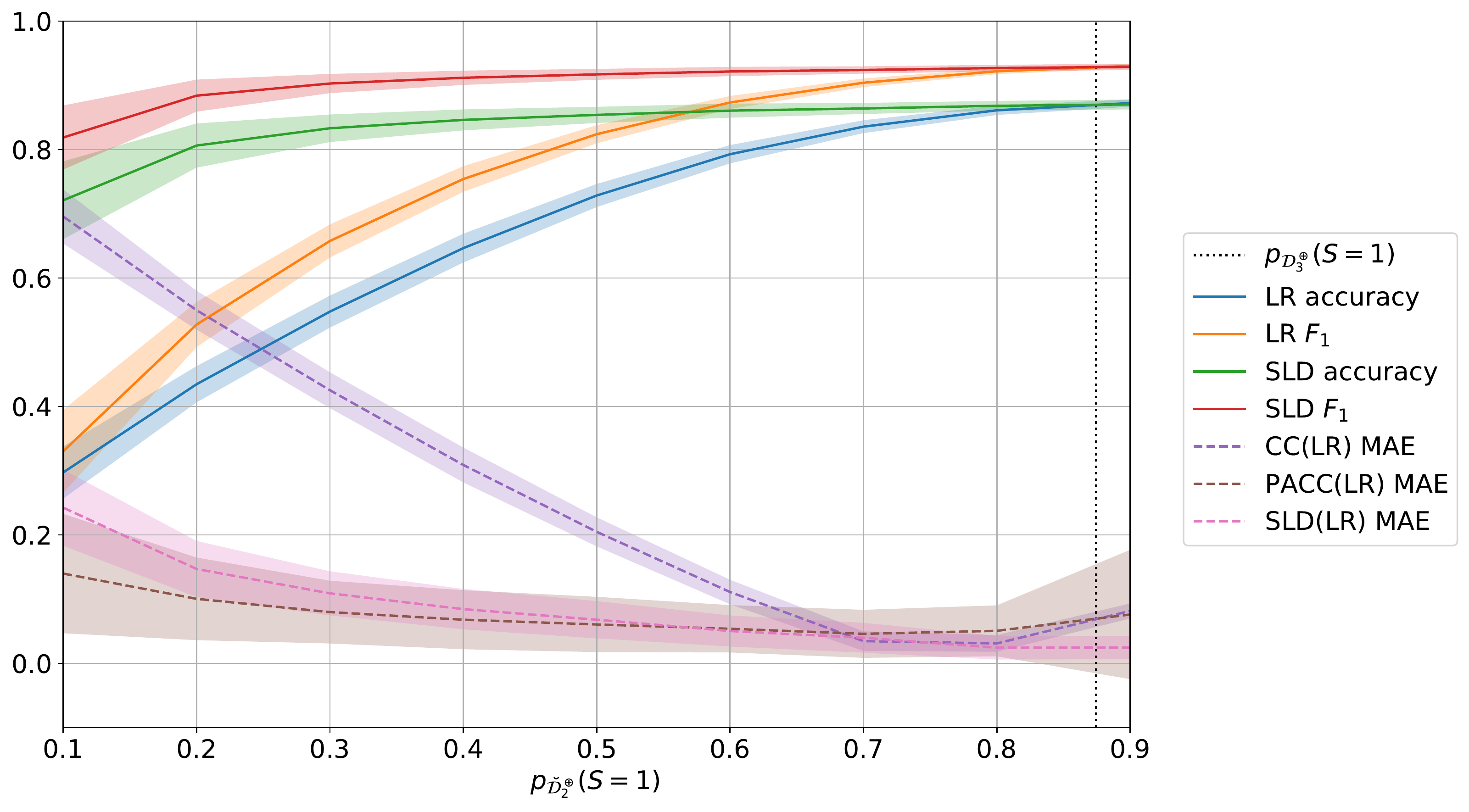}
    \caption{Protocol \texttt{sample-prev-$\mathcal{D}_{2}^{\oplus}$}}
    \label{fig:sample_prev_d21_q_wo_c}
  \end{subfigure} \\
  \caption{Performance of CC, SLD and PACC on the Adult dataset when
  used for quantification (MAE -- lower is better, dashed) and
  classification ($F_1$, accuracy -- higher is better, solid) under
  protocol \texttt{sample-prev-$\mathcal{D}_2$}. The classification
  performance of PACC is equivalent to that of CC (both equal to the
  performance of the underlying LR), and we thus omit it for
  readability.}
  \label{fig:q_wo_c_d2}
\end{figure}

\begin{figure}[!t]%[h!]
  \centering
  \begin{subfigure}{\clfplotlength}
    \centering
    \includegraphics[width=\mysize\textwidth]{./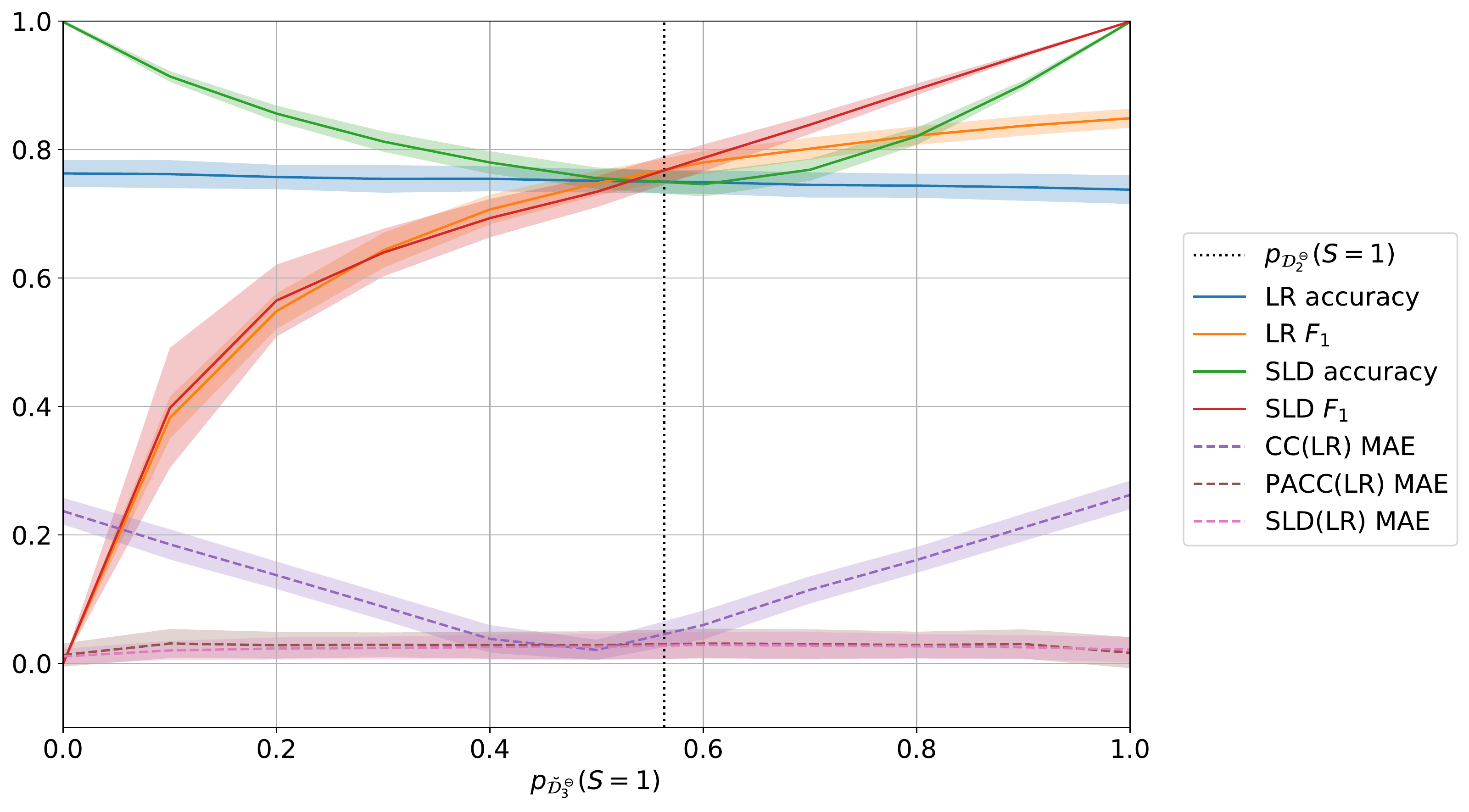}
    \caption{Protocol
    \texttt{sample-prev-$\mathcal{D}_{3}^{\ominus}$}}
    \label{fig:sample_prev_d30_q_wo_c}
  \end{subfigure} \\
  \begin{subfigure}{\clfplotlength}
    \centering
    \includegraphics[width=\mysize\textwidth]{./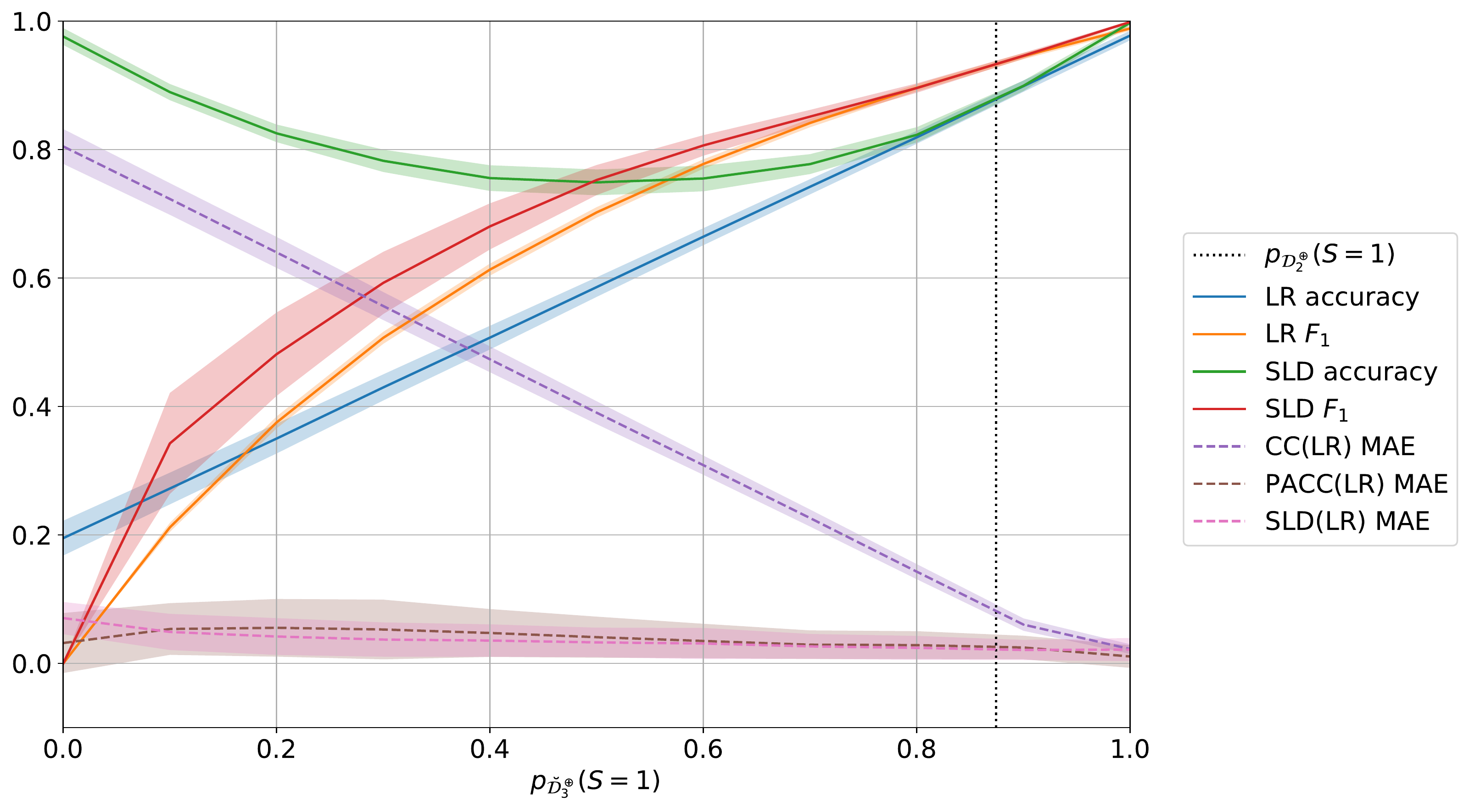}
    \caption{Protocol \texttt{sample-prev-$\mathcal{D}_{3}^{\oplus}$}}
    \label{fig:sample_prev_d31_q_wo_c}
  \end{subfigure}%
  \caption{Performance of CC, SLD and PACC on the Adult dataset when
  used for quantification (MAE -- lower is better, dashed) and
  classification ($F_1$, accuracy -- higher is better, solid) under
  protocol \texttt{sample-prev-$\mathcal{D}_3$}. The classification
  performance of PACC is equivalent to that of CC (both equal to the
  performance of the underlying LR), and we thus omit it for
  readability.}
  \label{fig:q_wo_c_d3}
\end{figure}

\noindent
Figures~\ref{fig:q_wo_c_d2} and \ref{fig:q_wo_c_d3} displays the
quantification performance (MAE -- dashed) and classification
performance ($F_1$, accuracy -- solid) of CC, SLD and PACC on the
Adult dataset under protocols \texttt{sample-prev-$\mathcal{D}_{2}$}
and \texttt{sample-prev-$\mathcal{D}_{3}$}, respectively. As usual, we
describe the results for LR-based learners and report their SVM-based
duals in the appendix (Figures~\ref{fig:q_wo_c_svm_d2} and
\ref{fig:q_wo_c_svm_d3}). To evaluate the quantification performance
of each approach, we simply report their MAE in estimating the
prevalence $p_{\mathcal{D}_{3}^{\ominus}}(S=1)$,
$p_{\mathcal{D}_{3}^{\oplus}}(S=1)$ in either test subset, depending
on the protocol at hand. To assess the performance of the sensitive
attribute classifier $k$ underlying each quantifier, we proceed as
follows. For CC and PACC, we simply run $k$ (LR) on either
$\mathcal{D}_{3}^{\ominus}$ or $\mathcal{D}_{3}^{\oplus}$, reporting
its accuracy and $F_1$ score in inferring the sensitive attribute of
individual instances. The classification performance scores of the
classifiers underlying CC and PACC are equivalent, so we omit the
latter from Figures~\ref{fig:q_wo_c_d2} and \ref{fig:q_wo_c_d3} for
readability. For SLD, we take the novel posteriors obtained by
applying the EM algorithm to either test subset, and use them for
classification with a threshold of 0.5.

Clearly, SLD improves both the quantification and classification
performance of the classifier $k$. In terms of quantification, its MAE
is consistently below that of CC, and in terms of classification, it
displays better $F_1$ and accuracy. However, under large prevalence
shifts across the auxiliary set $\mathcal{D}_2$ and the test set
$\mathcal{D}_3$, its classification performance becomes unreliable. In
particular, under protocol
\texttt{sample-prev-$\mathcal{D}_{3}^{\ominus}$}
(resp. \texttt{sample-prev-$\mathcal{D}_{3}^{\oplus}$}) in Figure
\ref{fig:sample_prev_d30_q_wo_c} (resp. Figure
\ref{fig:sample_prev_d31_q_wo_c}), for low values of the $x$ axis,
i.e., when the true prevalence values
$p_{\mathcal{D}_{3}^{\ominus}}(S=1)$
(resp. $p_{\mathcal{D}_{3}^{\oplus}}(S=1)$) becomes small, the
SLD-based classifier starts acting as a trivial rejector with low
recall, and hence low $F_1$ score. On the other hand, the
quantification performance of SLD does not degrade in the same way,
since its MAE is low and flat across the entire $x$ axis in
Figures~\ref{fig:sample_prev_d30_q_wo_c} and
\ref{fig:sample_prev_d31_q_wo_c}. This is a first hint of the fact
that classification and quantification performance may be decoupled.

PACC is another method that significantly outperforms CC in estimating
the prevalence of sensitive attributes in both test subsets
$\mathcal{D}_{3}^{\ominus}$, $\mathcal{D}_{3}^{\oplus}$. Indeed, its
MAE is well aligned with that of SLD, displaying low quantification
error under all protocols (Figures
\ref{fig:q_wo_c_d2}--\ref{fig:q_wo_c_d3}). On the other hand, its
classification performance is aligned with the accuracy and $F_1$
score of CC, which is unsatisfactory and can even become worse than
random. This fact shows that it is possible to build models which
yield good prevalence estimates for the sensitive attribute within a
sample, without providing reliable demographic estimates for single
instances. Indeed, quantification methods of type \textit{aggregative}
(that is, based on the output of a classifier -- like all methods we
use in this study) are suited to repair the initial prevalence
estimate (computed by classifying and counting) without precise
knowledge of which specific data points have been misclassified. In
the context of models to measure fairness under unawareness of
sensitive attributes, we highlight this as a positive result,
decoupling a desirable ability to estimate group-level disparities
from the potential for undesirable misuse at the individual level.

% -------------------------------------------------------------------

\subsection{Ablation Study}
\label{sec:ablation}

% -------------------------------------------------------------------

\subsubsection{Motivation and Setup}

\noindent In the previous sections, we tested six approaches to
estimate demographic disparity. For each approach, we used multiple
quantifiers for the sensitive attribute $S$, namely one for each class
in the codomain of the classifier $h$\wasblue{, as described in Step
\ref{item:trainquantifiers} of the method for quantification-based
estimate of demographic disparity}. In the binary setting adopted in
this work, where $\mathcal{Y}=\{\ominus,\oplus\}$, we trained two
quantifiers. A quantifier was trained on the set of
positively-classified instances of the auxiliary set
$\mathcal{D}_{2}^{\oplus}=\{ (\mathbf{x}_i,s_i) \in \mathcal{D}_2
\;|\; h(\mathbf{x})=\oplus \}$ and deployed to quantify the prevalence
of sensitive instances (such that $S=s$) within the test subset
$\mathcal{D}_{3}^{\oplus}$. The remaining quantifier was trained on
$\mathcal{D}_{2}^{\ominus}$ and deployed on
$\mathcal{D}_{3}^{\ominus}$.

% Although probably negligible in comparison with the computation and
% storage requirements of the classifier $h$, t
Training and maintaining multiple quantifiers is more expensive and
cumbersome than having a single one. Firstly, quantifiers that depend
on the classification outcome $\hat{y}=h(\mathbf{x})$ require
retraining every time $h$ is modified, e.g., due to a model update
being rolled out. Second, the maintenance cost is multiplied by the
number of classes $|\mathcal{Y}|$ that are possible for the outcome
variable. To ensure that these downsides are compensated by
performance improvements, we perform an ablation study and evaluate
the performance of different estimators of demographic disparity
supported by a single quantifier.

In this section we concentrate on three estimation approaches, namely
PCC, SLD, and PACC. SLD and PACC are among the best overall
performers, displaying low bias or variance across all protocols. PCC
shows great performance in situations where its posteriors are
well-calibrated on $\mathcal{D}_3$. We compare their performance in
two settings. In the first setting, adopted so far, two separate
quantifiers $q_{\ominus}$ and $q_{\oplus}$ are trained on
$\mathcal{D}_{2}^{\ominus}$, $\mathcal{D}_{2}^{\oplus}$ and deployed
on $\mathcal{D}_{3}^{\ominus}$, $\mathcal{D}_{3}^{\oplus}$,
respectively. In the second setting, we train a single quantifier $q$
on $\mathcal{D}_{2}$ and deploy it separately on
$\mathcal{D}_{3}^{\ominus}$ and $\mathcal{D}_{3}^{\oplus}$ to estimate
$\hat{\delta_{h}^{S}}$ using Equations~\eqref{eq:dd_estim} and
\eqref{eq:mu_ql}, specialized so that $q_{\ominus}$ and $q_{\oplus}$
are the same quantifier.

% -------------------------------------------------------------------

\subsubsection{Results}

\begin{figure}[t]%[h!]
  \centering
  \begin{subfigure}{\mysize\textwidth}
    \centering
    \includegraphics[width=\mysize\textwidth]{./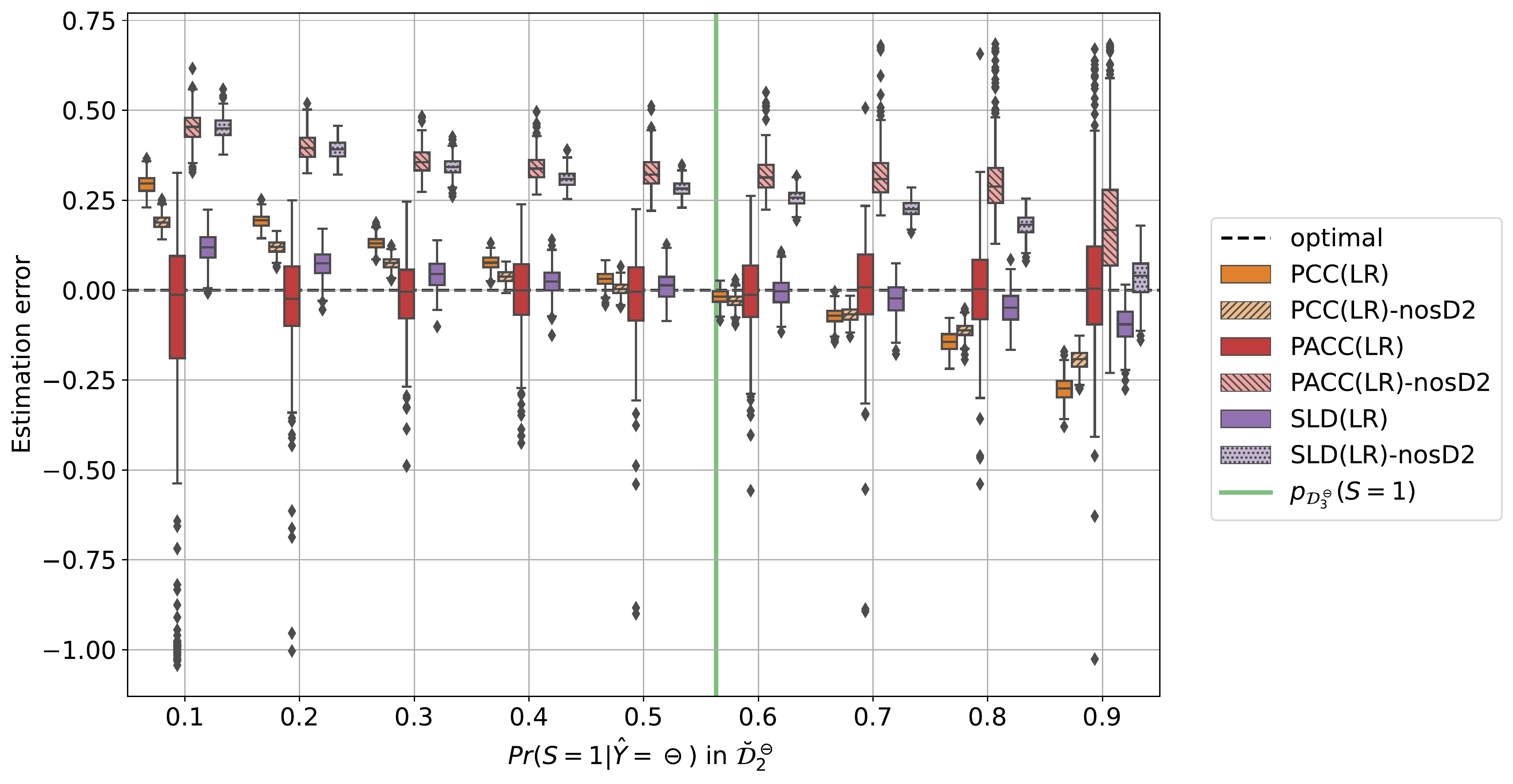}
    \caption{Protocol
    \texttt{sample-prev-$\mathcal{D}_{2}^{\ominus}$}}
    \label{fig:sample_prev_d2_ablation}
  \end{subfigure} \\
  \begin{subfigure}{\mysize\textwidth}
    \centering
    \includegraphics[width=\mysize\textwidth]{./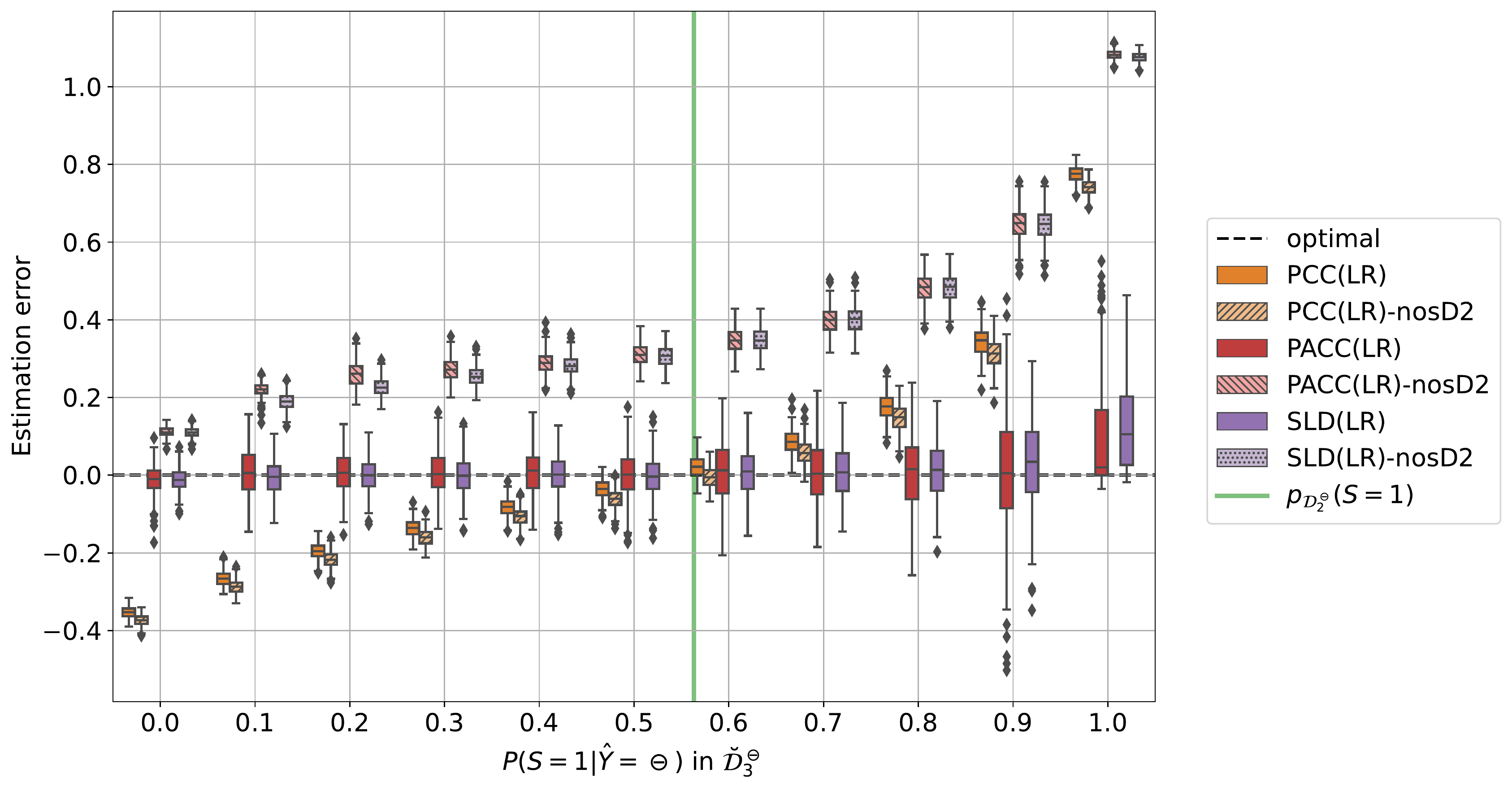}
    \caption{Protocol
    \texttt{sample-prev-$\mathcal{D}_{3}^{\ominus}$}}
    \label{fig:sample_prev_d3_ablation}
  \end{subfigure}
  \caption{Ablation study on the Adult dataset. Distribution of the
  estimation error ($y$ axis) for CC, PACC, SLD, and their
  single-quantifier counterparts, as $\Pr(S=1|\hat{Y}=\ominus)$ vary
  in $\mathcal{D}_2$, plot (a), and $\mathcal{D}_3$, plot (b).}
  \label{fig:ablation}
\end{figure}

\noindent Figure~\ref{fig:ablation} summarizes results for the Adult
dataset under two protocols that are representative of the overall
trends, namely \texttt{sample-prev-$\mathcal{D}_{2}$} (Figure
\ref{fig:sample_prev_d2_ablation}) and
\texttt{sample-prev-$\mathcal{D}_{3}$} (Figure
\ref{fig:sample_prev_d3_ablation}).\footnote{In the interest of
brevity, the figures in this section refer to LR-based quantification
on the Adult dataset under two protocols. Results for SVM-based
quantifiers under every protocol are depicted in the Appendix (Figures
\ref{fig:sample_prev_d2_ablation_svm} and
\ref{fig:sample_prev_d3_ablation_svm}). Analogous results hold on
CreditCard and COMPAS.} The $y$ axis depicts the estimation error of
PCC, SLD, PACC, and their single-quantifier counterparts, denoted by
the suffix ``nosD2'' to indicate that the auxiliary set
$\mathcal{D}_2$ is \underline{no}t \underline{s}plit into
$\mathcal{D}_{2}^{\ominus}$, $\mathcal{D}_{2}^{\oplus}$ during
training. The $x$ axis depicts the quantity of interest varied under
each protocol.

Interestingly, PCC appears to be rather insensitive to the ablation
study, so that the estimation errors of PCC and PCC-nosD2 are
well-aligned. PCC-nosD2 performs slightly better under the protocol
\texttt{sample-prev-$\mathcal{D}_{2}$}, where the auxiliary set is
small, and splitting it to learn separate quantifiers may result in
poor performance. The opposite is true for PACC-nosD2, showing a clear
decline in performance in the single-quantifier setting. This is due
to the fact that the estimates of $\mathrm{tpr}$ (and $\mathrm{fpr}$)
in $\mathcal{D}_3^{\oplus}$ and $\mathcal{D}_3^{\ominus}$ for the
adjustment (Equation~\ref{eq:pacc}) are more precise when issued by
dedicated estimators rather than a single one computed without
splitting $\mathcal{D}_2$. SLD-nosD2 also shows a sizeable performance
decay.

Under all protocols, the performance of SLD and PACC is compromised in
the absence of class-specific quantifiers $q_{\ominus}$ and
$q_{\oplus}$. If a single quantifier is trained on the full auxiliary
set $\mathcal{D}_2$, the corrections brought about by SLD and PACC can
end up worsening, rather than improving, the prevalence estimates of
vanilla CC. PCC is less sensitive to the ablation, showing small
performance differences in both directions under the single quantifier
setting. In general, it seems beneficial to partition the auxiliary
set into subsets $\mathcal{D}_{2}^{\ominus}$ and
$\mathcal{D}_{2}^{\oplus}$ according to the method in
Section~\ref{sec:adaptation}.

% -------------------------------------------------------------------

% \FloatBarrier

% -------------------------------------------------------------------

\section{\blue{Summary and Takeaway Message}}
\label{sec:discussion}

\noindent Overall, our work shows that quantification approaches are
suited to measure demographic parity under unawareness of sensitive
attributes if a small auxiliary dataset, containing sensitive and
non-sensitive attributes, is available. This is a common setting in
real-world scenarios, where such datasets may originate from targeted
efforts or voluntary disclosure. Despite an inevitable selection bias,
these datasets still represent a valuable asset for fairness audits,
if coupled with robust estimation approaches. Indeed, several
quantification methods tested in this work provide precise estimates
of demographic disparity despite the distribution shift across
training and testing caused by selection bias, and other distribution
shifts that arise in the context of human processes. This is an
important improvement over CC and PCC, previously studied in the
algorithmic fairness literature \wasblue{as the \emph{threshold
estimator} and \emph{weighted estimator}} \citep{chen2019:fu}. SLD
strikes the best balance in performance across all protocols; we
suggest its adoption, especially when the distribution shift between
development and deployment conditions has not been carefully
characterized. Moreover, while the development of proxy methods
typically comes with a potential for misuse on individuals (e.g.,
profiling), quantification approaches demonstrate the potential to
circumvent this issue. More in detail, from the above experimental
section, we summarize the following trends concerning different
approaches to measure demographic parity under unawareness.

\wasblue{\textbf{Fairness under unawareness can be measured using
quantification}, for both vanilla and fairness-aware classifiers. Group fairness under unawareness can be cast as a
prevalence estimation problem and effectively solved by methods of
proven consistency from the quantification literature. We demonstrate
several estimators that outperform the previously proposed methods
\citep{chen2019:fu}, corresponding to CC and PCC, i.e., two weak
baselines in the quantification literature.}

\textbf{CC is suboptimal}. Naïve Classify-and-Count represents the
default approach for practitioners unaware of quantification. Ad hoc
quantification methods outperform CC in most combinations of 5 protocols, 3 datasets,
and 2 underlying learners.

\textbf{PCC suffers under distribution shift}. As long as the
underlying posteriors are well-calibrated, PCC is a strong
performer. However, when its training set and test set have different
prevalence values for the sensitive attribute $S$, \wasblue{a common
situation in practice, }PCC displays a systematic estimation bias,
which increases sharply with the prior probability shift between
training and test.

\textbf{HDy, ACC and PACC deteriorate in the small data regime}. These
methods require splitting their training set (that is, the auxiliary
set $\mathcal{D}_2$), so their performance drops faster when its
cardinality is small. PACC and ACC display good median performance but
a large variance; the former method always outperforms the latter.

\textbf{SLD strikes a good balance}. This method was shown to be the
best performer under (the inevitable) distribution shift between the
auxiliary set $\mathcal{D}_2$ and the test set $\mathcal{D}_3$, with a
moderate performance decrease when $|\mathcal{D}_2|$ becomes
small. However, in situations where it is not possible to maintain
separate quantifiers for positively and negatively predicted
instances, its performance may drop substantially.

\textbf{Decoupling is possible}. Methods such as SLD and PACC fare
much better than CC in estimating group-level quantities (such as
demographic parity), while if misused for individual classification of
sensitive attributes, the improvement is minor (SLD) or zero (PACC).

% -------------------------------------------------------------------

\section{Conclusion}
\label{sec:conclusion}

\noindent Measuring the differential impact of models on groups of
individuals is important to understand their effects in the real world
and their tendency to encode and reinforce divisions and privilege
across sensitive attributes. Unfortunately, in practice, demographic
attributes are often not available. In this work, we have taken the
perspective of responsible practitioners, interested in internal
fairness audits of production models. We have proposed a novel
approach to measure group fairness under unawareness of sensitive
attributes, utilizing methods from the quantification
literature. These methods are specifically designed for group-level
\wasblue{prevalence} estimation rather than individual-level
classification. Since practitioners who try to measure fairness under
unawareness are precisely interested in
prevalence estimates \wasblue{of sensitive
attributes (Proposition \ref{prop:ql4facct})}, it is useful for the
fairness and quantification communities to exchange lessons.

We have studied the problem of estimating a classifier's
fairness under unawareness of
sensitive attributes, with access to a disjoint auxiliary set of data
for which demographic information is available. \wasblue{We have shown
how this can be cast as a quantification problem, and solved with
established approaches of proven consistency. We have conducted a
detailed empirical evaluation of different methods and their
properties focused on demographic parity.} Drawing from the
algorithmic fairness literature, we have identified five important
factors for this problem, associating each of them with a formal
evaluation protocol. We have tested several quantification-based approaches,
which, under realistic assumptions for an internal fairness audit,
outperform previously proposed estimators in the fairness
literature. We have discussed their benefits and limitations,
including the unbiasedness guarantees of some methods, and the
potential for misuse at an individual level.

% These factors mirror challenges in real-world applications,
% including dataset shift and variable cardinality for auxiliary
% datasets comprising demographic information. We have tested five
% quantification methods under every protocol, comparing them against
% the naïve Classify-and-Count (CC) method, which represents the
% default approach for practitioners unaware of quantification. Each
% quantification approach was shown to outperform CC under most
% combinations of 5 protocols, 3 datasets, and 2 underlying
% learners. Moreover, we have shown a simple approach to integrate
% quantification methods into existing machine learning pipelines with
% little orchestration effort, and demonstrated the importance of each
% component through an ablation study.

% Finally, we have considered the problem of model misuse to infer
% demographic characteristics at an individual level, which represents
% a concern when developing models to measure group fairness via proxy
% attributes. Through a dedicated set of experiments, we have shown
% that it is possible to obtain precise estimates of demographic
% disparity from methods that have poor classification
% performance. This is a positive result for decoupling these two
% objectives, which should help deter from the exploitation of these
% models for individual-level inference.

Future work may require a deeper study of the relation between
classification and quantification performance and the extent to which
these two objectives can be decoupled. It would be interesting to
explicitly target decoupling through learners aimed at maximizing
quantification performance subject to a low classification performance
constraint. \wasblue{Ideally, decoupling should provide precise privacy
guarantees to individuals while allowing for precise group-level
estimates.} Another important avenue for future work is the study of
confidence intervals for fairness estimates provided by quantification
methods. A reliable indication of confidence for estimates of group
fairness may be invaluable for a practitioner arguing for resources
and attention to the disparate effects of a model on different
populations. \wasblue{Finally, the estimators presented in this work may
be plugged into optimization procedures aimed at improving, rather
than measuring, algorithmic fairness. \blue{Mixed loss functions, jointly optimizing accuracy and fairness can be optimized, even under unawareness of sensitive attributes, with our methods providing fairness estimates at each iteration.} It will be interesting to
evaluate fairness estimators in this broader context and extend them,
e.g., to ranking problems and counterfactual settings.}

\acks{The work by Alessandro Fabris was supported by MIUR (Italian
Ministry for University and Research) under the ``Departments of
Excellence'' initiative (Law 232/2016). The work by Andrea Esuli,
Alejandro Moreo, and Fabrizio Sebastiani has been supported by the
\textsc{SoBigData++} project, funded by the European Commission (Grant
871042) under the H2020 Programme INFRAIA-2019-1, by the
\textsc{AI4Media} project, funded by the European Commission (Grant
951911) under the H2020 Programme ICT-48-2020, and by the \textsc{SoBigData.it}, \textsc{FAIR} and \textsc{ITSERR}
projects funded by the Italian Ministry of University and Research under the NextGenerationEU program.
These authors' opinions
do not necessarily reflect those of the funding agencies.}

\clearpage

\appendix
\section{The SLD Method}
\label{app:sld}

\noindent SLD \citep{Saerens:2002uq} produces prevalence estimates
$\hat{p}_{\sigma}^{\mathrm{SLD}}(s)$ iteratively, using EM
algorithms. In detail, given two sets, $L$ and $U$, where the former
represents the \emph{labelled} one (training set) and the latter
represents the \emph{unlabelled} one (test set). The method iterates
until convergence (i.e., the difference between the prevalence
estimated across two consecutive iterations is less than a tolerance
factor $\epsilon$ --we use $\epsilon=1e-4$) or until a maximum number
of iterations is reached. The pseudocode describing SLD is as follows:
% \afabcomment{A couple words on stopping condition?}

\IncMargin{1em}
\begin{algorithm}
  \SetKwInOut{Input}{Input}\SetKwInOut{Output}{Output} \Input{Class
  prevalence values $p_{L}(s)$ on $L$; \\
  \hspace{0em} Posterior probabilities $\pi_s(\mathbf{x}_i)$, for all
  $\mathbf{x}_i\in U$;} \Output{Estimates $\hat{p}_{U}(s)$ of class
  prevalence values on $U$;} \BlankLine \tcc{Initialisation}
  $t\leftarrow 0$\; \For{$s\in S$}{
  $\hat{p}^{(t)}_{U}(s)\leftarrow p_{L}(s)$\;
  \For{$\mathbf{x}_i\in U$}{
  $\Pr^{(t)}(s|\textbf{x}_i)\leftarrow \pi_s(\mathbf{x}_i)$\;}}
  \BlankLine \tcc{Main Iteration Cycle} \While{stopping condition =
  false}{ $t\leftarrow t+1$\;
  \For{$s\in S$\nllabel{line:posteriorupdate}}{
  \For{$\mathbf{x}_i\in U$}{
  $\Pr^{(t)}(s|\textbf{x}_i)\leftarrow
  \displaystyle\frac{\displaystyle\frac{\hat{p}^{(t-1)}_{U}(s)
  }{\hat{p}^{(0)}_{U}(s)}\cdot
  \Pr^{(0)}(s|\textbf{x}_i)}{\displaystyle\sum_{s\in
  S}\displaystyle\frac{\hat{p}^{(t-1)}_{U}(s)}{\hat{p}^{(0)}_{U}(s)}\cdot
  \Pr^{(0)}(s|\textbf{x}_i)}$}
  $\hat{p}^{(t)}_{U}(s) \leftarrow
  \displaystyle\frac{1}{|U|}\displaystyle\sum_{\textbf{x}_i \in
  U}\Pr^{(t)}(s|\textbf{x}_i)$ \nllabel{line:classpriorupdate}}}
  \BlankLine \tcc{Generate output}
  \For{$s\in
  S$}{$\hat{p}_{U}^{\mathrm{SLD}}(s)\leftarrow \hat{p}^{(t)}_{U}(s)$}
  \caption{The SLD algorithm
  \citep{Saerens:2002uq}.} \label{alg:Saerens}
\end{algorithm}\DecMargin{1em}

% -------------------------------------------------------------------

\clearpage
\newpage

\section{The HDy Method}
\label{app:hdy}

\noindent HDy \citep{Gonzalez-Castro:2013fk} measures the divergence
between two distributions of posterior probabilities (i.e., as
returned by a calibrated classifier) $v$ and $u$ in terms of the
Hellinger Distance (HD), defined as
\begin{equation*}
  \mathrm{HD}(v,u)=\sqrt{\int \left(\sqrt{v(x)} - \sqrt{u(x)} \right)^2 dx}
\end{equation*}
The HD between two continuous distributions $v$ and $u$ is typically
approximated by discretizing $v$ and $u$ across bins and then
integrating
\begin{equation*}
  \hat{\mathrm{HD}}(V,U)= \sqrt{\sum_{i=1}^b \left( \sqrt{\frac{|V_{i}|}{|V|}} - \sqrt{\frac{|U_{i}|}{|U|}} \right)^2} 
\end{equation*}
\noindent with $V$ and $U$ the discrete distributions, $b$ the number
of bins and $V_i$, $U_i$ representing the frequency in the $i$th bin
for each distribution, respectively.

The method seeks the $\alpha$ parameter that yields the smallest
distance between the validation distribution $V$ (typically, a
held-out split of the training set that has not been used to train the
classifier) and the unlabelled distribution $U$, i.e.,
\begin{equation*}
  \alpha^* = \arg\min_{\alpha\in [0,1]} \hat{\mathrm{HD}}(V^\alpha,U)
\end{equation*}
\noindent where $V^\alpha$ is the mixture of the positive distribution
($V^{S=1}$) and the negative distribution ($V^{S=0}$) defined by
\begin{equation*}
  V^{\alpha}(x) = (1-\alpha)\cdot V^{S=0}(x) + \alpha\cdot V^{S=1}(x)
\end{equation*}
\noindent HDy returns $\alpha^*$ as the sought positive class
prevalence
\begin{align*}
  \hat{p}_{\sigma}^{\mathrm{HDy}}(1) = \alpha^*
\end{align*}

\noindent Since the number of bins $b$ could have a significant impact
on the calculation, one typically returns the median of the
distribution of the best $\alpha$'s found for a range of $b$'s (in our
case, we explore $b\in[10,20,30,\ldots,110]$).

% -------------------------------------------------------------------

\clearpage
\newpage

\section{Proof of Proposition \ref{prop:we_pcc}}
\label{sec:proof}

\noindent We show that Equation~\eqref{eq:mu_we} and
Equation~\eqref{eq:mu_ql} are equivalent when the latter is
instantiated by prevalence estimates given by PCC:
\begin{align*}
  \hat{\mu}^{\mathrm{PCC}}(s) &= \hat{p}_{\mathcal{D}_{3}^{\oplus}}^{\mathrm{PCC}}(s) \frac{p_{\mathcal{D}_{3}}(\oplus)}{\hat{p}_{\mathcal{D}_{3}^{\oplus}}^{\mathrm{PCC}}(s) p_{\mathcal{D}_{3}}(\oplus) + \hat{p}_{\mathcal{D}_{3}^{\ominus}}^{\mathrm{PCC}}(s) p_{\mathcal{D}_{3}}(\ominus)}
\end{align*}

\noindent The terms in the denominator can be written as
\begin{align*}
  \hat{p}_{\mathcal{D}_{3}^{\oplus}}^{\mathrm{PCC}}(s) &= \frac{\sum_{\mathbf{x}_i \in \mathcal{D}_{3}^{\oplus}}\pi_{s}(\mathbf{x}_i)}{|\mathcal{D}_{3}^{\oplus}|} \\
                                                       &= \frac{\sum_{\mathbf{x}_i} \pi_s(\mathbf{x}_i) h_{\oplus}(\mathbf{x}_i)}{\sum_{\mathbf{x}_i} h_{\oplus}(\mathbf{x}_i)}
\end{align*}
\begin{align*}
  \hat{p}_{\mathcal{D}_{3}^{\ominus}}^{\mathrm{PCC}}(s) = \frac{\sum_{\mathbf{x}_i} \pi_s(\mathbf{x}_i) (1-h_{\oplus}(\mathbf{x}_i))}{\sum_{\mathbf{x}_i} (1-h_{\oplus}(\mathbf{x}_i))}
\end{align*}
\begin{align*}
  p_{\mathcal{D}_{3}}(\oplus) = \frac{\sum_{\mathbf{x}_i} h_{\oplus}(\mathbf{x}_i)}{|\mathcal{D}_3|}
\end{align*}
\begin{align*}
  p_{\mathcal{D}_{3}}(\ominus) = \frac{\sum_{\mathbf{x}_i} (1-h_{\oplus}(\mathbf{x}_i))}{|\mathcal{D}_3|}
\end{align*}

\noindent Plugging them into the denominator yields
\begin{align*}
  \hat{\mu}^{\mathrm{PCC}}(s) &= \hat{p}_{\mathcal{D}_{3}^{\oplus}}^{\mathrm{PCC}}(s) \frac{p_{\mathcal{D}_{3}}(\oplus)}{\frac{\sum_{\mathbf{x}_i} \pi_s(\mathbf{x}_i)}{|\mathcal{D}_3|}} \\
                              &= \frac{\sum_{\mathbf{x}_i} \pi_s(\mathbf{x}_i) h_{\oplus}(\mathbf{x}_i)}{\sum_{\mathbf{x}_i} h_{\oplus}(\mathbf{x}_i)} \cdot \frac{\sum_{\mathbf{x}_i} h_{\oplus}(\mathbf{x}_i)}{|\mathcal{D}_3|} \cdot \frac{|\mathcal{D}_3|}{\sum_{\mathbf{x}_i} \pi_s(\mathbf{x}_i)} \\
                              &= \hat{\mu}^{\mathrm{WE}}(s)
\end{align*}
\noindent The equivalence between CC and TE is straightforward.
\hfill\qed

% -------------------------------------------------------------------

\clearpage
\newpage

\section{SVM-based Quantification}
\label{app:svm}

\noindent In this appendix we report the results of experiments,
analogous to the ones in Sections
\ref{sec:sample_prev_d1}-\ref{sec:q_not_c}, where quantifiers are
wrapped around an SVM classifier rather than an LR classifier. The
experimental protocols are summarized in Tables
\ref{tab:sample_prev_d3_svm}-\ref{tab:flip_prev_d1_svm}. The ablation
study is depicted in Figures
%\ref{fig:sample_prev_d1_ablation_svm}-\ref{fig:sample_prev_d3_ablation_svm}. 
\ref{fig:sample_prev_d2_ablation_svm} and \ref{fig:sample_prev_d3_ablation_svm}.
Experiments
on decoupling the quantification performance of a model from its
classification performance are reported in Figures
\ref{fig:q_wo_c_svm_d2} and \ref{fig:q_wo_c_svm_d3}.

\begin{table}[h!]
  \caption{Results obtained in the experiments run according to
  protocol \texttt{sample-prev-$\mathcal{D}_3$} with the SVM-based
  classifier.}
  \small \centering
  % \resizebox{\textwidth}{!}{%
  
  % }%
  \label{tab:sample_prev_d3_svm}
\end{table}

\begin{table}[h!]
  \caption{Results obtained in the experiments run according to
  protocol \texttt{sample-prev-$\mathcal{D}_2$} with the SVM-based
  classifier.}
  \small \centering
  % \resizebox{\textwidth}{!}{%
  
  % }%
\end{table}

\begin{table}[h!]
  \caption{Results obtained in the experiments run according to
  protocol \texttt{sample-size-$\mathcal{D}_2$} with the SVM-based
  classifier}
  \small \centering
  % \resizebox{\textwidth}{!}{%
  
  % }%
\end{table}

\begin{table}[h!]
  \caption{Results obtained in the experiments run according to
  protocol \texttt{sample-prev-$\mathcal{D}_1$} with the SVM-based
  classifier.}
  \label{tab:sample_prev_d1_svm}
  \small \centering
  % \resizebox{\textwidth}{!}{%
  
  % }%
 
\end{table}

\begin{table}[h!]
  \caption{Results obtained in the experiments run according to
  protocol \texttt{flip-prev-$\mathcal{D}_1$} with the SVM-based
  classifier.}
  \label{tab:flip_prev_d1_svm}
  \small \centering
  % \resizebox{\textwidth}{!}{%
  
  % }%

\end{table}

\clearpage

% ABLATION
% \begin{figure}[h!]%[h!]
%   \centering
%   \includegraphics[width=\mysize\textwidth]{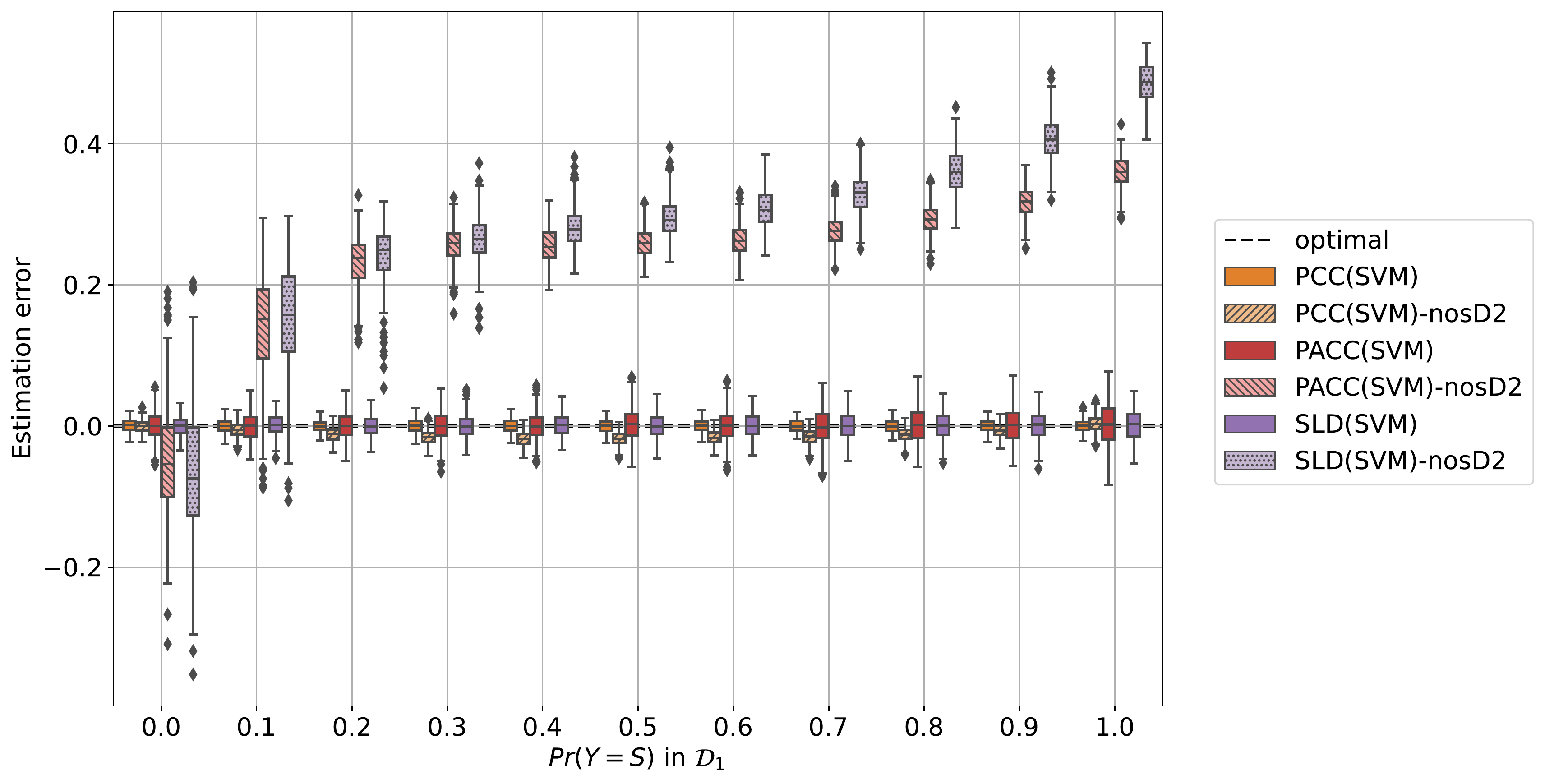}
%   \caption{Results obtained in the ablation study on the Adult dataset
%   with SVM-based quantification for protocol
%   \texttt{sample-prev-$\mathcal{D}_{1}$}.}
%   \label{fig:sample_prev_d1_ablation_svm}
% \end{figure}

% \begin{figure}[h!]%[h!]
%   \centering
%   \includegraphics[width=\mysize\textwidth]{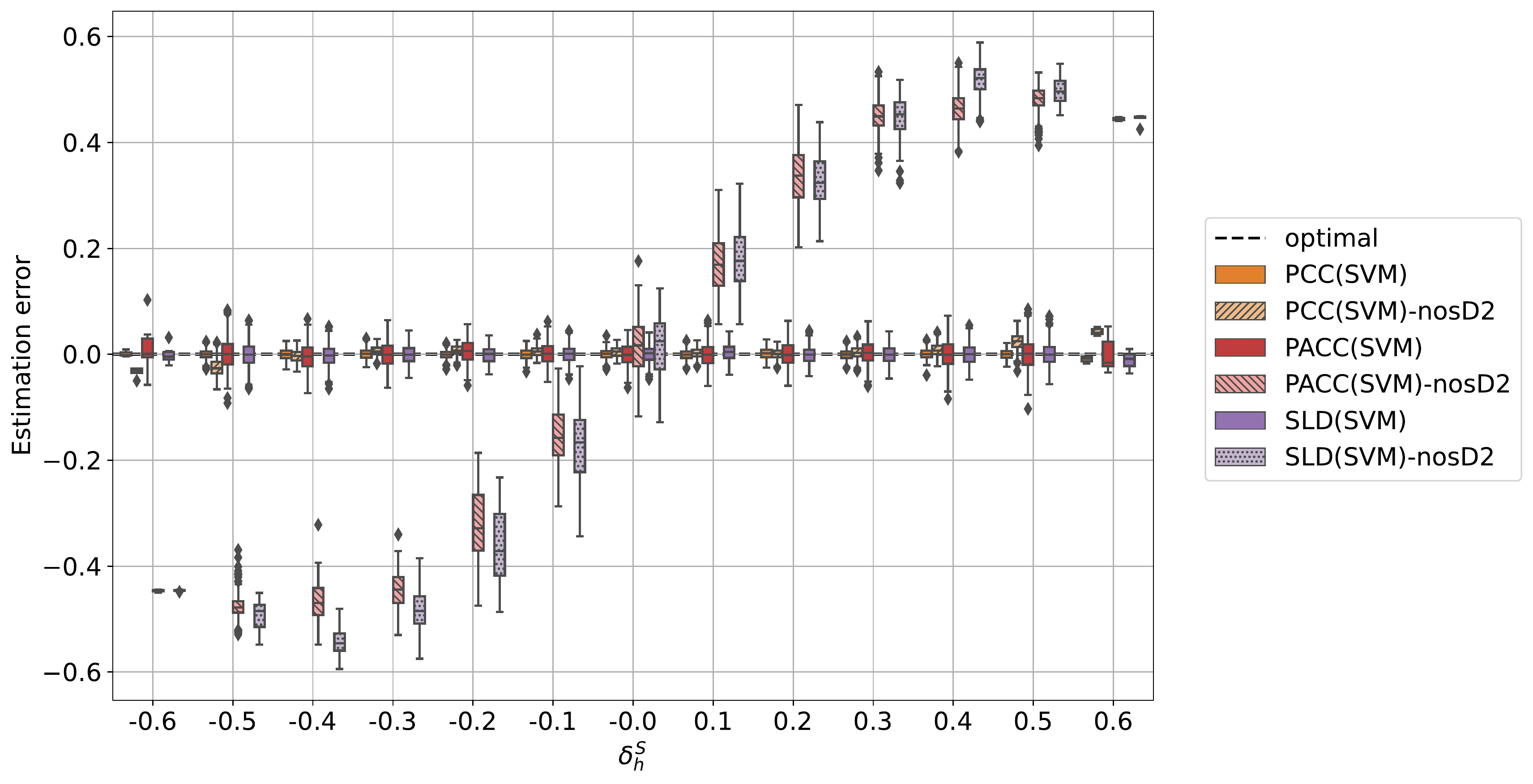}
%   \caption{Results obtained in the ablation study on the Adult dataset
%   with SVM-based quantification for protocol
%   \texttt{flip-prev-$\mathcal{D}_{1}$}.}
%   \label{fig:flip_prev_d1_ablation_svm}
% \end{figure}

% \begin{figure}[h!]%[h!]
%   \centering
%   \includegraphics[width=\mysize\textwidth]{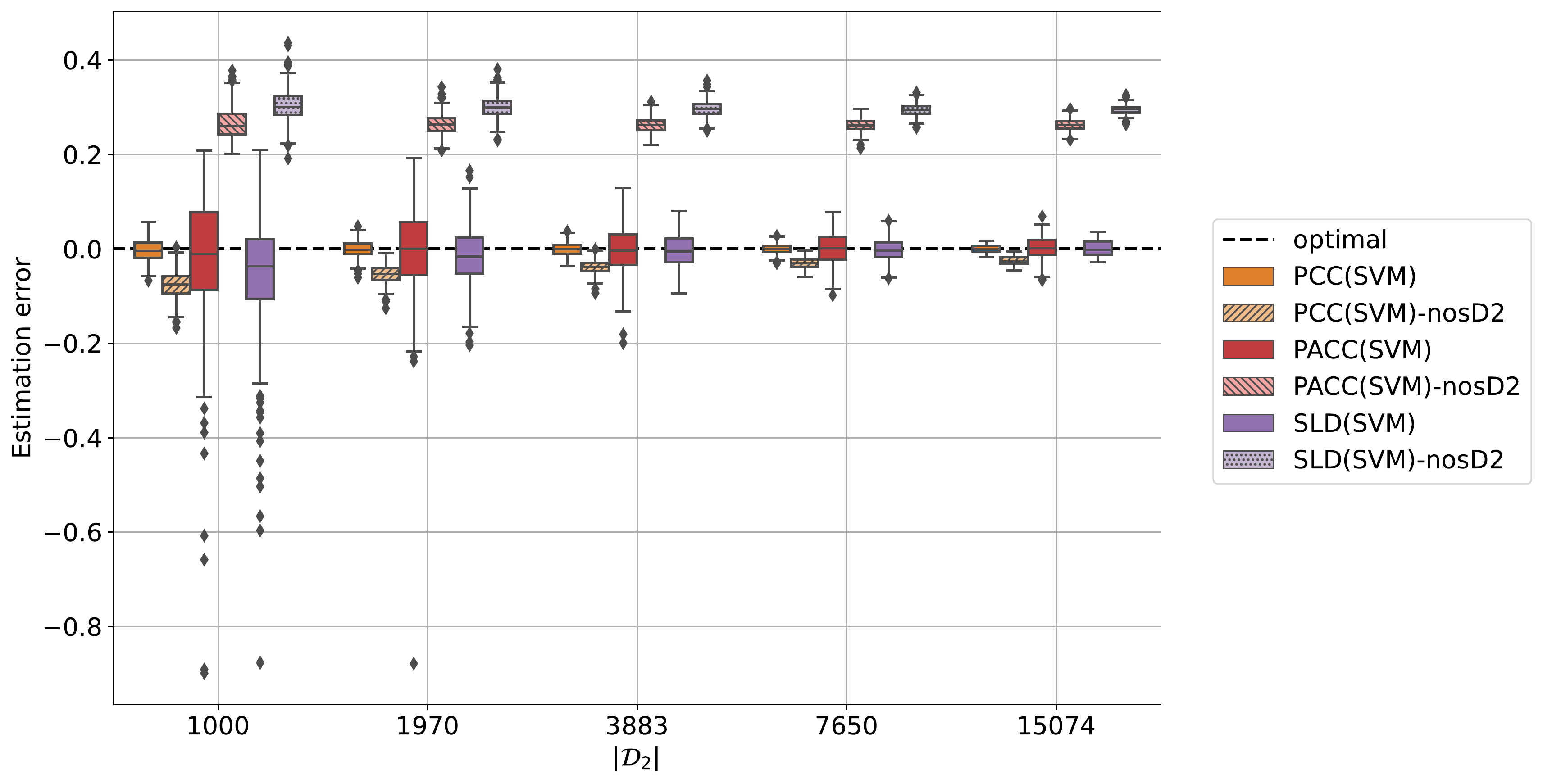}
%   \caption{Results obtained in the ablation study on the Adult dataset
%   with SVM-based quantification for protocol
%   \texttt{sample-size-$\mathcal{D}_{2}$}.}
%   \label{fig:sample_size_d2_ablation_svm}
% \end{figure}

\begin{figure}[h!]%[h!]
  \begin{subfigure}{\mysize\textwidth}
    \centering
    \includegraphics[width=\mysize\textwidth]{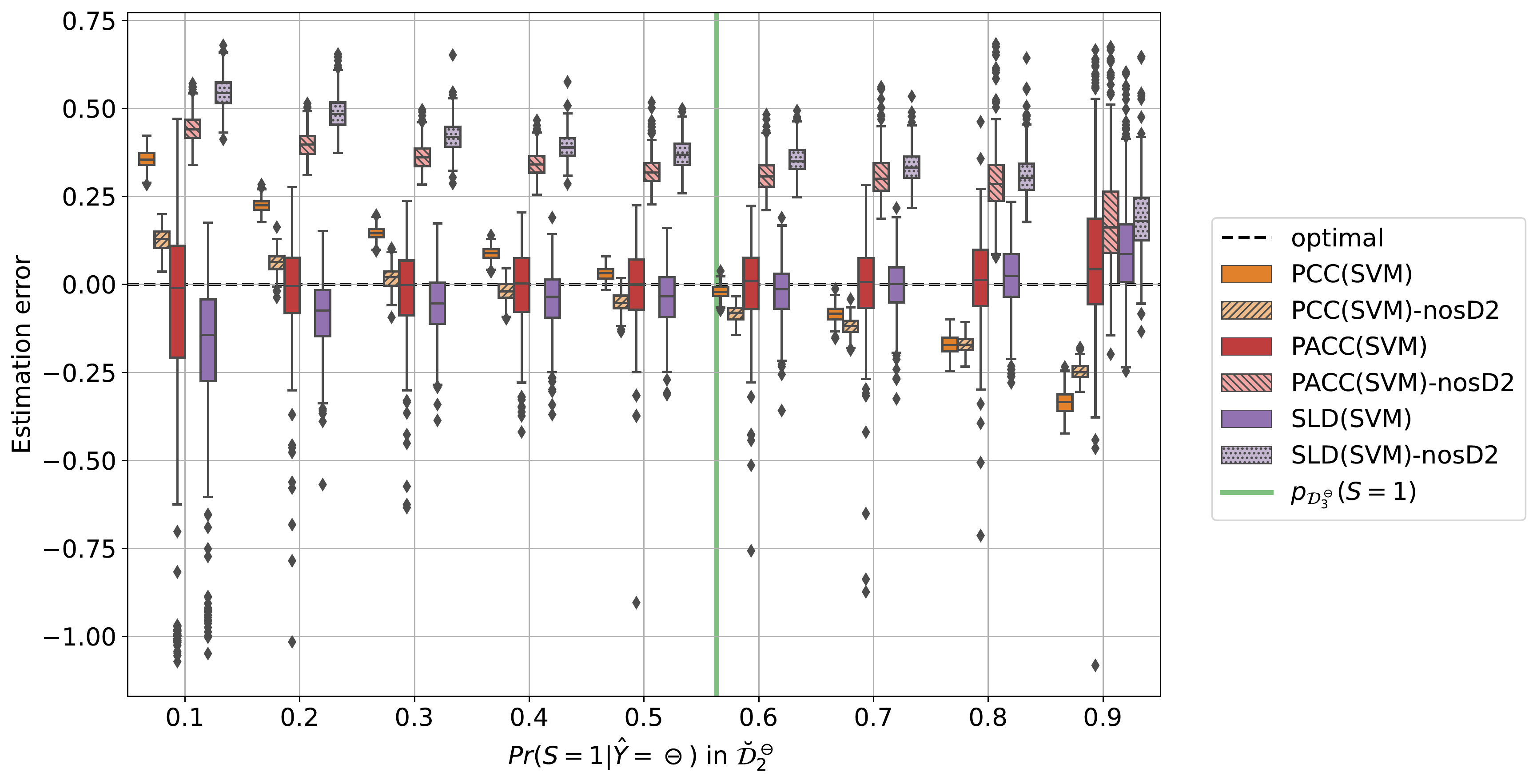}
    \caption{Protocol
    \texttt{sample-prev-$\mathcal{D}_{2}^{\ominus}$}}
    \label{fig:sample_prev_d20_ablation_svm}
  \end{subfigure} \\
  \begin{subfigure}{\mysize\textwidth}
    \centering
    \includegraphics[width=\mysize\textwidth]{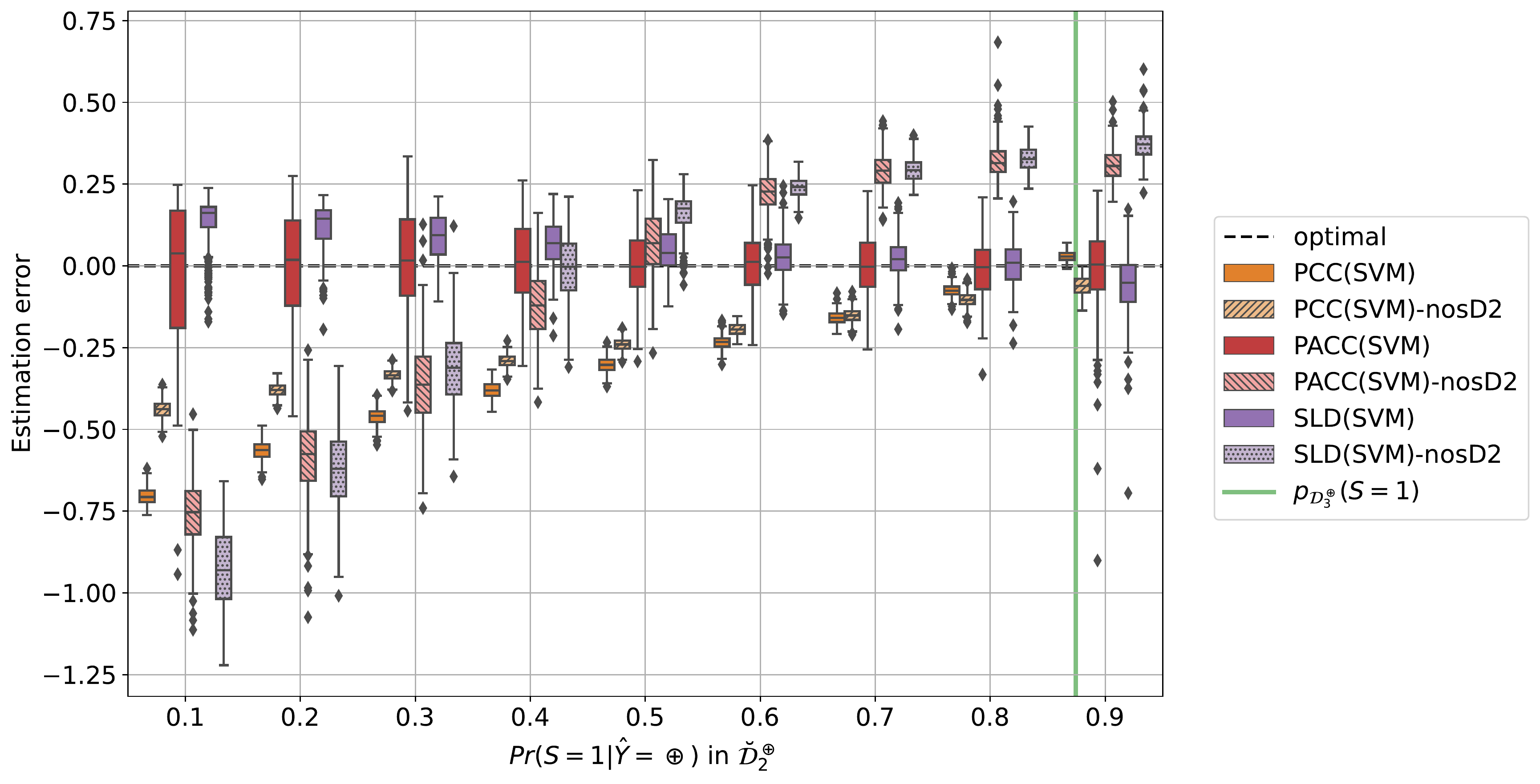}
    \caption{Protocol \texttt{sample-prev-$\mathcal{D}_{2}^{\oplus}$}}
    \label{fig:sample_prev_d21_ablation_svm}
  \end{subfigure}%
  \caption{Results obtained in the ablation study on the Adult dataset
  with SVM-based quantification for protocol
  \texttt{sample-prev-$\mathcal{D}_{2}$}.}
  \label{fig:sample_prev_d2_ablation_svm}
\end{figure}

\begin{figure}[h!]%[h!]
  \begin{subfigure}{\mysize\textwidth}
    \centering
    \includegraphics[width=\mysize\textwidth]{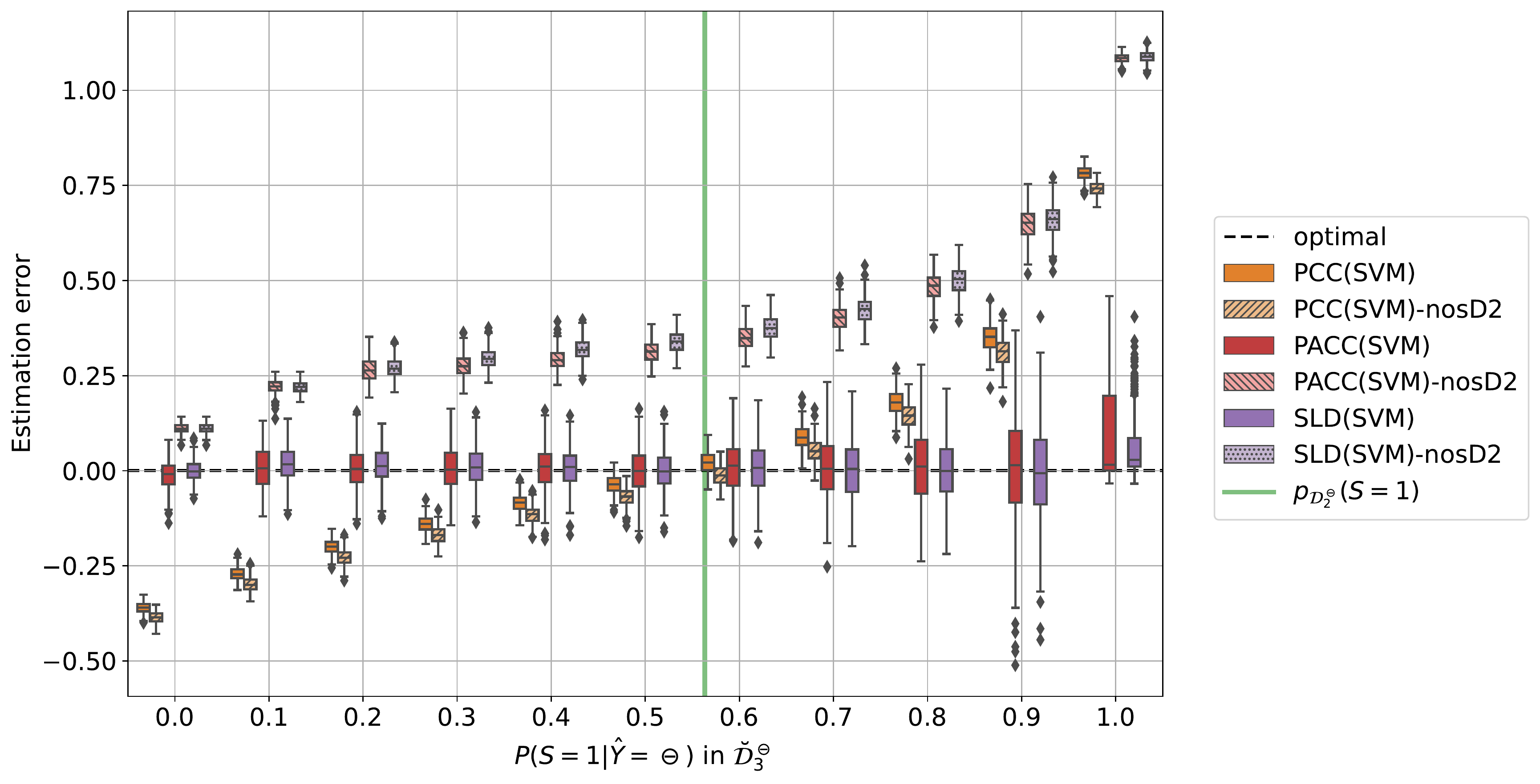}
    \caption{Protocol
    \texttt{sample-prev-$\mathcal{D}_{3}^{\ominus}$}}
    \label{fig:sample_prev_d30_ablation_svm}
  \end{subfigure} \\
  \begin{subfigure}{\mysize\textwidth}
    \centering
    \includegraphics[width=\mysize\textwidth]{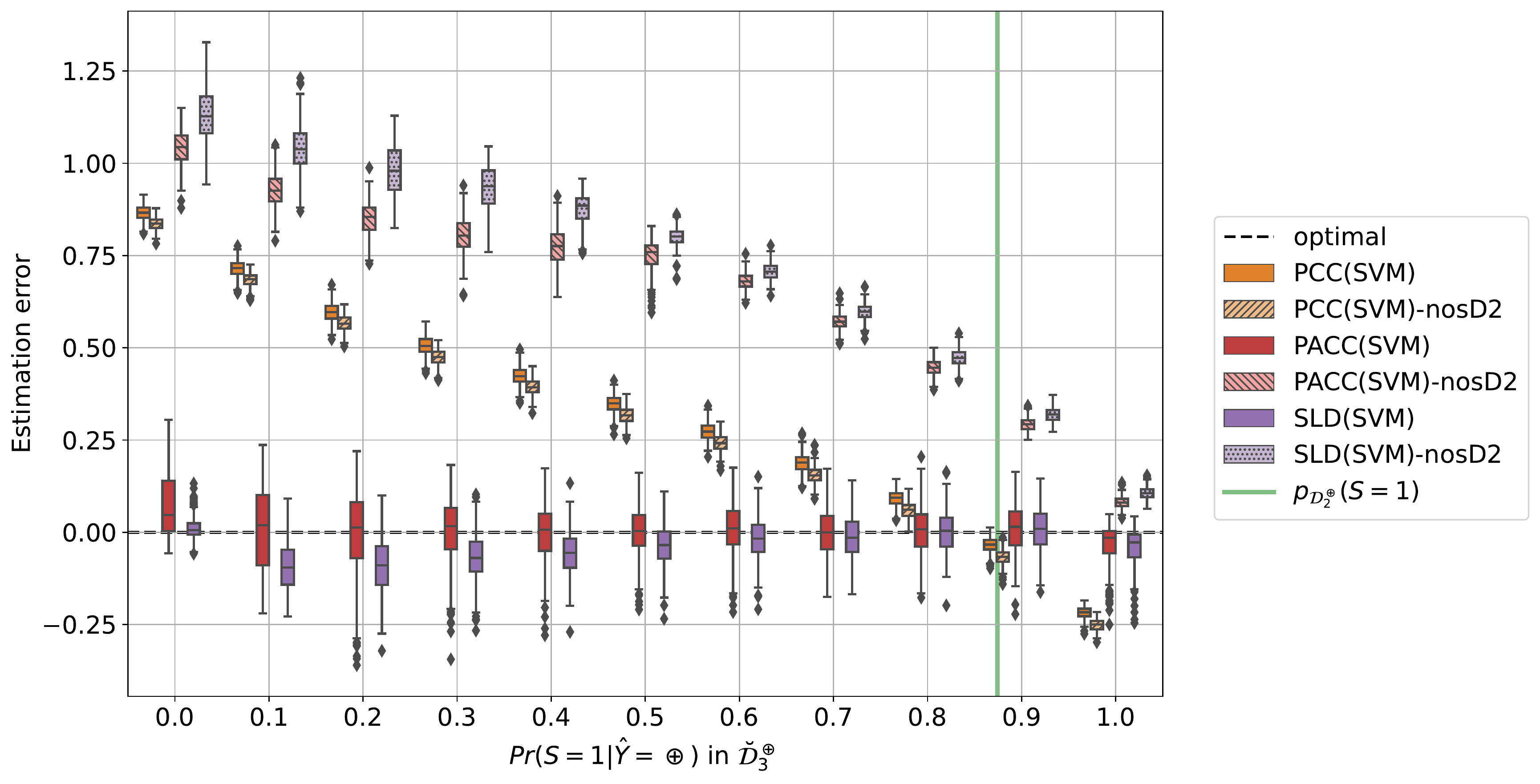}
    \caption{Protocol \texttt{sample-prev-$\mathcal{D}_{3}^{\oplus}$}}
    \label{fig:sample_prev_d31_ablation_svm}
  \end{subfigure}%
  \caption{Results obtained in the ablation study on the Adult dataset
  with SVM-based quantification for protocol
  \texttt{sample-prev-$\mathcal{D}_{3}$}.}
  \label{fig:sample_prev_d3_ablation_svm}
\end{figure}

% DECOUPLING
\begin{figure}[h!]%[h!]
  \centering
  \begin{subfigure}{\clfplotlength}
    \centering
    \includegraphics[width=\mysize\textwidth]{./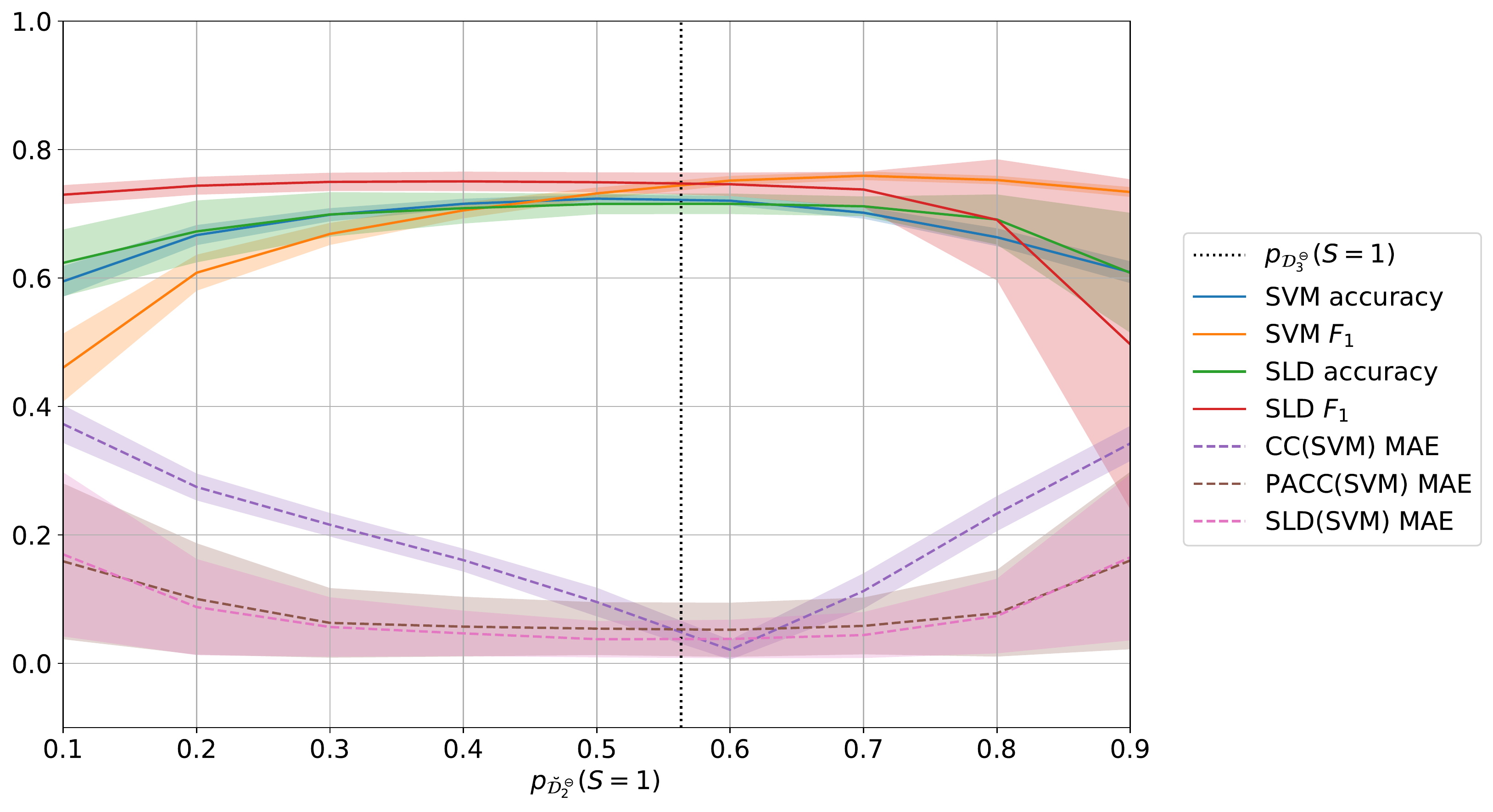}
    \caption{Protocol
    \texttt{sample-prev-$\mathcal{D}_{2}^{\ominus}$}}
    \label{fig:sample_prev_d20_q_wo_c_svm}
  \end{subfigure} \\
  \begin{subfigure}{\clfplotlength}
    \centering
    \includegraphics[width=\mysize\textwidth]{./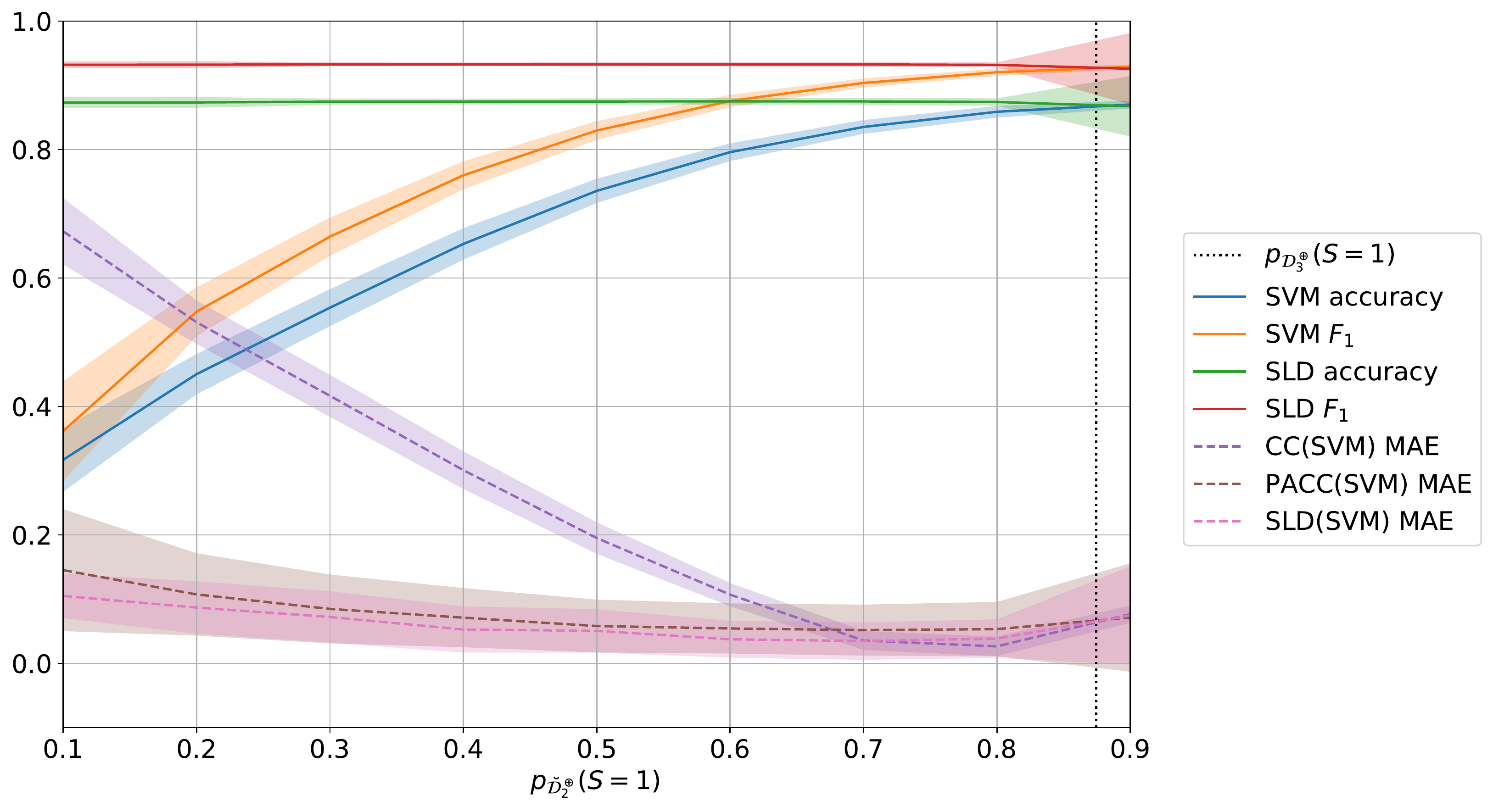}
    \caption{Protocol \texttt{sample-prev-$\mathcal{D}_{2}^{\oplus}$}}
    \label{fig:sample_prev_d21_q_wo_c_svm}
  \end{subfigure} \\
  \caption{Performance of SVM-based methods CC, SLD and PACC on the
  Adult dataset when used for quantification (MAE -- lower is better)
  and classification ($F_1$, accuracy -- higher is better) under
  protocol \texttt{sample-prev-$\mathcal{D}_2$}. The classification
  performance of PACC is equivalent to that of CC (both equal to the
  performance of the underlying SVM), and we thus omit it for
  readability.}
  \label{fig:q_wo_c_svm_d2}
\end{figure}

\begin{figure}[h!]%[h!]
  \begin{center}
    \begin{subfigure}{\clfplotlength}
      \centering
      \includegraphics[width=\mysize\textwidth]{./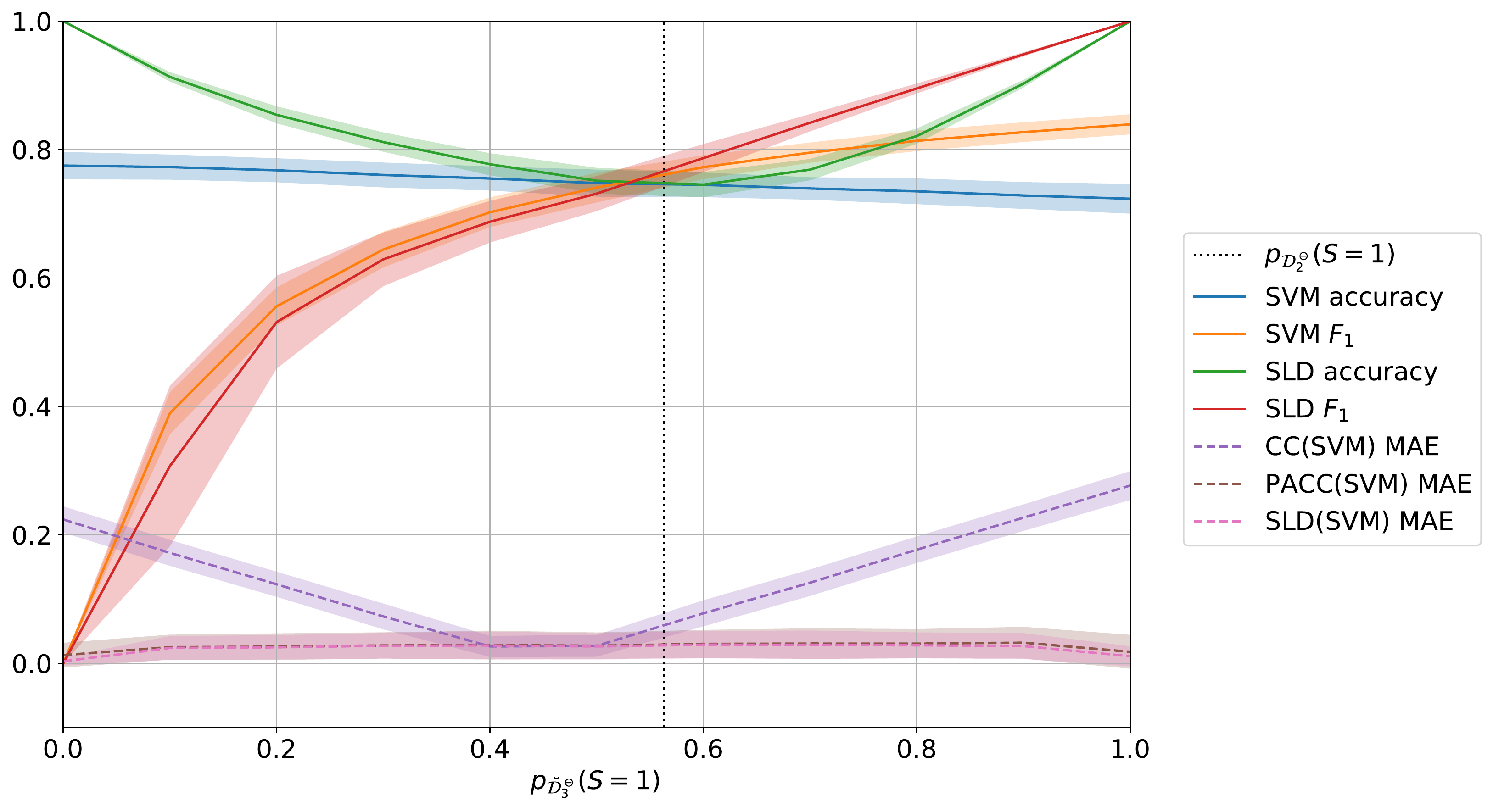}
      \caption{Protocol
      \texttt{sample-prev-$\mathcal{D}_{3}^{\ominus}$}}
      \label{fig:sample_prev_d30_q_wo_c_svm}
    \end{subfigure} \\
    \begin{subfigure}{\clfplotlength}
      \centering
      \includegraphics[width=\mysize\textwidth]{./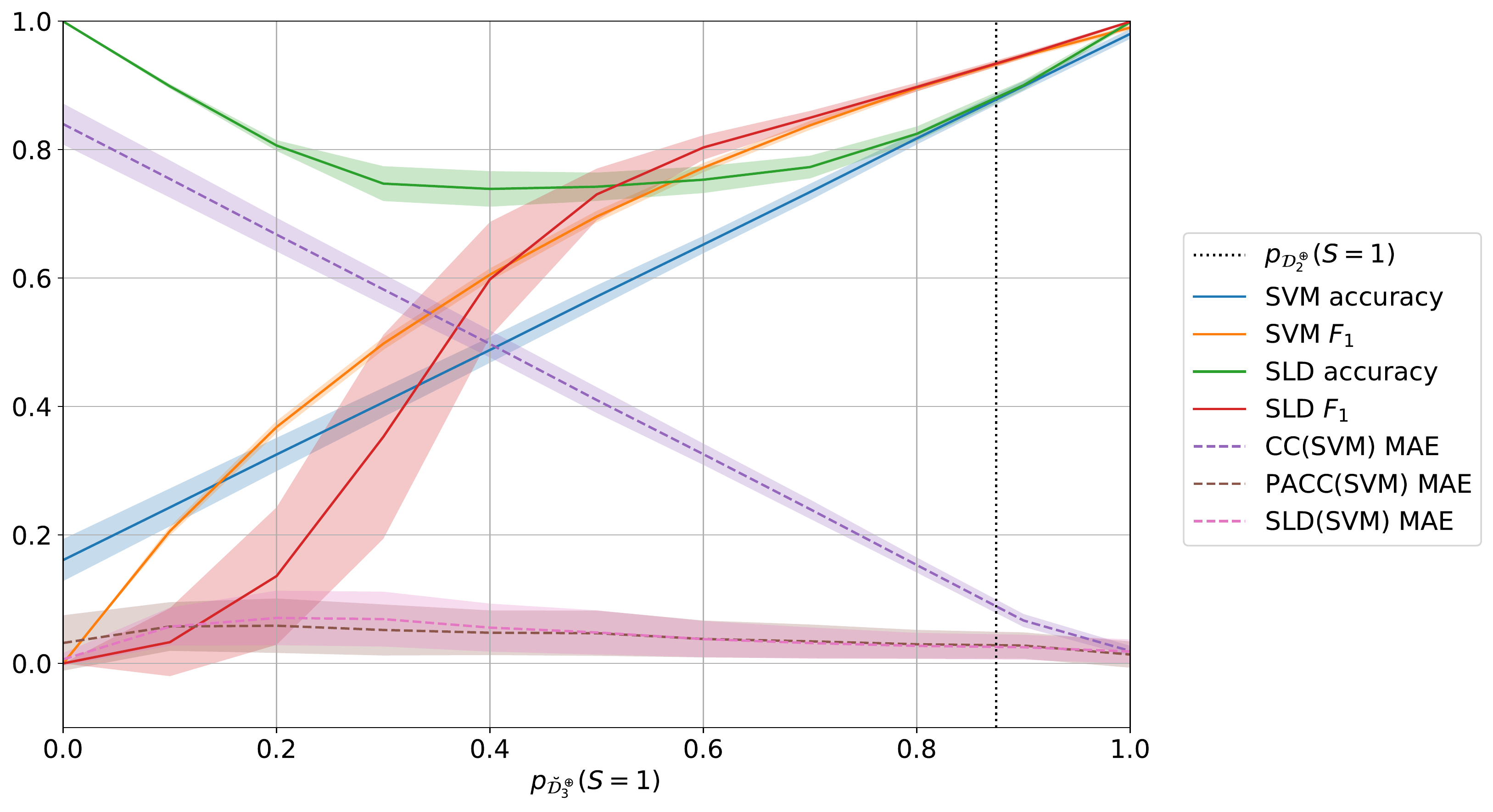}
      \caption{Protocol
      \texttt{sample-prev-$\mathcal{D}_{3}^{\oplus}$}}
      \label{fig:sample_prev_d31_q_wo_c_svm}
    \end{subfigure}%
  \end{center}
  \caption{Performance of SVM-based methods CC, SLD and PACC on the
  Adult dataset when used for quantification (MAE -- lower is better)
  and classification ($F_1$, accuracy -- higher is better) under
  protocol \texttt{sample-prev-$\mathcal{D}_3$}. The classification
  performance of PACC is equivalent to that of CC (both equal to the
  performance of the underlying SVM), and we thus omit it for
  readability.}
  \label{fig:q_wo_c_svm_d3}
\end{figure}

% -------------------------------------------------------------------

\clearpage
\newpage

\section{Pseudocode}

\noindent This section reports pseudocode for protocols
\texttt{sample-prev-$\mathcal{D}_2$} (Pseudocode
\ref{pseudo:sample-prev-d2}), \texttt{sample-size-$\mathcal{D}_2$}
(Pseudocode \ref{pseudo:sample-size-d2}), and
\texttt{sample-prev-$\mathcal{D}_1$} (Pseudocode
\ref{pseudo:sample-prev-d1}).

\begin{algorithm}[h!]
  \LinesNumbered \SetNoFillComment
 
 \begin{footnotesize}
   \SetKwInOut{Input}{Input} \SetKwInOut{Output}{Output} \Input{
   \textbullet\ Dataset $\mathcal{D}$ ; \\
   \hspace{0.35em}\textbullet\ Classifier learner CLS; \\
   \hspace{0.35em}\textbullet\ Quantification method Q; \\
   } \Output{
   \textbullet\ MAE of the demographic disparity estimates ; \\
   \hspace{0.35em}\textbullet\ MSE of the demographic disparity
   estimates ;}
 
   \BlankLine

   $E\leftarrow\emptyset$ ;
 
   \For{5 random splits}{
   $\mathcal{D}_A,\mathcal{D}_B,\mathcal{D}_C \leftarrow
   \mathrm{split\_stratify}(\mathcal{D})$ ;
 
   \For{$\mathcal{D}_1,\mathcal{D}_2,\mathcal{D}_3 \in
   \mathrm{permutations}(\mathcal{D}_A,\mathcal{D}_B,\mathcal{D}_C)$}{
 
   \tcc{Learn a classifier $h : \mathcal{X} \rightarrow \mathcal{Y}$}
   $h\leftarrow$ CLS.fit($\mathcal{D}_1$) ;
 
   $\mathcal{D}_{2}^{\ominus}\leftarrow\{(\mathbf{x}_i,s_i)\in\mathcal{D}_2
   \;|\; h(\mathbf{x}_i)=\ominus\}$ ;
 
   $\mathcal{D}_{2}^{\oplus}\leftarrow\{(\mathbf{x}_i,s_i)\in\mathcal{D}_2
   \;|\; h(\mathbf{x}_i)=\oplus\}$ ;
 
   \For{10 repeats}{

   \tikzmk{A} \For{$p \in \{0.1,0.2,\ldots,0.9\}$}{
 
   \tcc{Generate samples from $\mathcal{D}_{2}^{\ominus}$ at desired
   prevalence and size, and uniform samples from
   $\mathcal{D}_{2}^{\oplus}$ at desired size}
 
   $\breve{\mathcal{D}}_{2}^{\ominus}\sim\mathcal{D}_{2}^{\ominus}$
   with $p_{\breve{\mathcal{D}}_{2}^{\ominus}}(s)=p$ and
   $|\breve{\mathcal{D}}_{2}^{\ominus}|=500$ \label{line:randomsample:d20}
   ;
 
   $\breve{\mathcal{D}}_{2}^{\oplus}\sim\mathcal{D}_{2}^{\oplus}$ with
   $|\breve{\mathcal{D}}_{2}^{\oplus}|=500$ \label{line:randomsample:d21}
   ;
 
   \tcc{Learn quantifiers $q_y : 2^\mathcal{X} \rightarrow [0,1]$}

   $q_{\ominus} \leftarrow $
   Q.fit($\breve{\mathcal{D}}_{2}^{\ominus})$ ;
 
   $q_{\oplus} \leftarrow $ Q.fit($\breve{\mathcal{D}}_{2}^{\oplus})$
   ;
 
   \tikzmk{B} \boxit{pink} \tcc{Use quantifiers to estimate
   demographic prevalence}
   $\mathcal{D}_{3}^{\ominus}\leftarrow\{\mathbf{x}_i\in\mathcal{D}_3
   \;|\; h(\mathbf{x}_i)=\ominus\}$ ;
 
   $\mathcal{D}_{3}^{\oplus}\leftarrow\{\mathbf{x}_i\in\mathcal{D}_3
   \;|\; h(\mathbf{x}_i)=\oplus\}$ ;
 
   $\hat{p}^{q_{\ominus}}_{\mathcal{D}_{3}^{\ominus}}(s) \leftarrow
   q_{\ominus}(\mathcal{D}_{3}^{\ominus})$ ;
 
   $\hat{p}^{q_{\oplus}}_{\mathcal{D}_{3}^{\oplus}}(s) \leftarrow
   q_{\oplus}(\mathcal{D}_{3}^{\oplus})$ ;
 
   \tcc{Compute the signed error of the demographic disparity
   estimate} $e \leftarrow $ compute error using
   $\hat{p}^{q_{\ominus}}_{\mathcal{D}_{3}^{\ominus}}(s)$,
   $\hat{p}^{q_{\oplus}}_{\mathcal{D}_{3}^{\oplus}}(s)$ and Equation
   \ref{eq:estim_err1}
   % $e=\left [ \hat{p}_{\mathcal{D}_{3}^{\ominus}}^{q_{\ominus}}(s)
   %   -
   %   \hat{p}_{\mathcal{D}_{3}^{\oplus}}^{q_{\oplus}}(s) \right ] -
   % \left [ p_{\mathcal{D}_{3}^{\ominus}}(s) -
   %   p_{\mathcal{D}_{3}^{\oplus}}(s) \right ]$ ;
 
   $E \leftarrow E \cup \{e\}$}}}}
 
   $\mathrm{mae}\leftarrow\mathrm{MAE}(E)$ ;
 
   $\mathrm{mse}\leftarrow\mathrm{MSE}(E)$ ;

   \Return{$\mathrm{mae}$, $\mathrm{mse}$} \BlankLine

   \caption{Protocol \texttt{sample-prev-$\mathcal{D}_2$}, shown for
   variations of prevalence values in class $y=\ominus$.}
   \label{pseudo:sample-prev-d2}
 \end{footnotesize}
\end{algorithm}

\begin{algorithm}[t]
  \LinesNumbered \SetNoFillComment
 
  % \SetAlgorithmName{Protocol}{List of protocols}
 
 \begin{footnotesize}
   \SetKwInOut{Input}{Input} \SetKwInOut{Output}{Output} \Input{
   \textbullet\ Dataset $\mathcal{D}$ ; \\
   \hspace{0.35em}\textbullet\ Classifier learner CLS; \\
   \hspace{0.35em}\textbullet\ Quantification method Q; \\
   } \Output{
   \textbullet\ MAE of the demographic disparity estimates ; \\
   \hspace{0.35em}\textbullet\ MSE of the demographic disparity
   estimates ;}
 
   \BlankLine

   $E\leftarrow\emptyset$ ;
 
   \For{5 random splits\label{line:randomsplit}}{
   $\mathcal{D}_A,\mathcal{D}_B,\mathcal{D}_C \leftarrow
   \mathrm{split\_stratify}(\mathcal{D})$ ;
 
   \For{$\mathcal{D}_1,\mathcal{D}_2,\mathcal{D}_3 \in
   \mathrm{permutations}(\mathcal{D}_A,\mathcal{D}_B,\mathcal{D}_C)$}{
 
   \tcc{Learn a classifier $h : \mathcal{X} \rightarrow \mathcal{Y}$}
   $h\leftarrow$ CLS.fit($\mathcal{D}_1$) ;
 
   \For{10 repeats\label{line:10repeats}}{
 
   \tikzmk{A} \For{size $s \in $ logspace(\textbf{from:} 1000,
   \textbf{to:} $|\mathcal{D}_2|$, \textbf{steps:} 5)}{
 
   \tcc{Generate samples from $\mathcal{D}_2$ at desired size}
 
   $\breve{\mathcal{D}}_2\sim\mathcal{D}_2$ with
   $|\breve{\mathcal{D}}_2|=s$ \label{line:sample:p2};
 
   \tcc{Learn quantifiers $q_y : 2^\mathcal{X} \rightarrow [0,1]$}
   $\breve{\mathcal{D}}_{2}^{\ominus}\leftarrow\{(\mathbf{x}_i,s_i)\in\breve{\mathcal{D}}_2
   \;|\; h(\mathbf{x}_i)=\ominus\}$ ;
 
   $\breve{\mathcal{D}}_{2}^{\oplus}\leftarrow\{(\mathbf{x}_i,s_i)\in\breve{\mathcal{D}}_2
   \;|\; h(\mathbf{x}_i)=\oplus\}$ ;
 
   $q_{\ominus} \leftarrow $
   Q.fit($\breve{\mathcal{D}}_{2}^{\ominus})$ ;
 
   $q_{\oplus} \leftarrow $ Q.fit($\breve{\mathcal{D}}_{2}^{\oplus})$
   ;
 
   \tikzmk{B} \boxit{pink} \tcc{Use quantifiers to estimate
   demographic prevalence}
   $\mathcal{D}_{3}^{\ominus}\leftarrow\{\mathbf{x}_i\in\mathcal{D}_3
   \;|\; h(\mathbf{x}_i)=\ominus\}$ ;
 
   $\mathcal{D}_{3}^{\oplus}\leftarrow\{\mathbf{x}_i\in\mathcal{D}_3
   \;|\; h(\mathbf{x}_i)=\oplus\}$ ;
 
   $\hat{p}^{q_{\ominus}}_{\mathcal{D}_{3}^{\ominus}}(s) \leftarrow
   q_{\ominus}(\mathcal{D}_{3}^{\ominus})$ ;
 
   $\hat{p}^{q_{\oplus}}_{\mathcal{D}_{3}^{\oplus}}(s) \leftarrow
   q_{\oplus}(\mathcal{D}_{3}^{\oplus})$ ;
 
   \tcc{Compute the signed error of the demographic disparity
   estimate} $e \leftarrow $ compute error using
   $\hat{p}^{q_{\ominus}}_{\mathcal{D}_{3}^{\ominus}}(s)$,
   $\hat{p}^{q_{\oplus}}_{\mathcal{D}_{3}^{\oplus}}(s)$ and Equation
   \ref{eq:estim_err1}
   % $e=\left [ \hat{p}_{\mathcal{D}_{3}^{\ominus}}^{q_{\ominus}}(s)
   %   -
   %   \hat{p}_{\mathcal{D}_{3}^{\oplus}}^{q_{\oplus}}(s) \right ] -
   % \left [ p_{\mathcal{D}_{3}^{\ominus}}(s) -
   %   p_{\mathcal{D}_{3}^{\oplus}}(s) \right ]$ ;
 
   $E \leftarrow E \cup \{e\}$}}}}
 
   $\mathrm{mae}\leftarrow\mathrm{MAE}(E)$ ;
 
   $\mathrm{mse}\leftarrow\mathrm{MSE}(E)$ ;

   \Return{$\mathrm{mae}$, $\mathrm{mse}$} \BlankLine

   \caption{Protocol \texttt{sample-size-$\mathcal{D}_2$}.}
   \label{pseudo:sample-size-d2}
 \end{footnotesize}
\end{algorithm}

\begin{algorithm}[t]
  \LinesNumbered \SetNoFillComment
 
  % \SetAlgorithmName{Protocol}{List of protocols}
 
 \begin{footnotesize}
   \SetKwInOut{Input}{Input} \SetKwInOut{Output}{Output} \Input{
   \textbullet\ Dataset $\mathcal{D}$ ; \\
   \hspace{0.35em}\textbullet\ Classifier learner CLS; \\
   \hspace{0.35em}\textbullet\ Quantification method Q; \\
   } \Output{
   \textbullet\ MAE of the demographic disparity estimates ; \\
   \hspace{0.35em}\textbullet\ MSE of the demographic disparity
   estimates ;}
 
   \BlankLine

   $E\leftarrow\emptyset$ ;
 
   \For{5 random splits}{
   $\mathcal{D}_A,\mathcal{D}_B,\mathcal{D}_C \leftarrow
   \mathrm{split\_stratify}(\mathcal{D})$ ;
 
   \For{$\mathcal{D}_1,\mathcal{D}_2,\mathcal{D}_3 \in
   \mathrm{permutations}(\mathcal{D}_A,\mathcal{D}_B,\mathcal{D}_C)$}{
 
   \For{10 repeats\label{line:10repeatsB}}{
   \For{$p\in\{0.0, 0.1, \ldots, 0.9, 1.0\}$}{
 
   \tikzmk{A} \tcc{Generate samples from $\mathcal{D}_1$ at desired
   prevalence}
 
   $\breve{\mathcal{D}}_1\sim\mathcal{D}_1$ with $P(Y=S)=p$ and
   $|\breve{\mathcal{D}}_1|=500$ \label{line:sample:p1};
 
   \tcc{Learn a classifier $h : \mathcal{X} \rightarrow \mathcal{Y}$}
   $h\leftarrow$ CLS.fit($\breve{\mathcal{D}}_1$) ;
 
   \tikzmk{B} \boxit{pink} \tcc{Learn quantifiers
   $q_y : 2^\mathcal{X} \rightarrow [0,1]$}
   $\mathcal{D}_{2}^{\ominus}\leftarrow\{(\mathbf{x}_i,s_i)\in\mathcal{D}_2
   \;|\; h(\mathbf{x}_i)=\ominus\}$ ;
 
   $\mathcal{D}_{2}^{\oplus}\leftarrow\{(\mathbf{x}_i,s_i)\in\mathcal{D}_2
   \;|\; h(\mathbf{x}_i)=\oplus\}$ ;
 
   $q_{\ominus} \leftarrow $ Q.fit($\mathcal{D}_{2}^{\ominus})$ ;
 
   $q_{\oplus} \leftarrow $ Q.fit($\mathcal{D}_{2}^{\oplus})$ ;
 
   \tcc{Use quantifiers to estimate demographic prevalence}
   $\mathcal{D}_{3}^{\ominus}\leftarrow\{\mathbf{x}_i\in\mathcal{D}_3
   \;|\; h(\mathbf{x}_i)=\ominus\}$ ;
 
   $\mathcal{D}_{3}^{\oplus}\leftarrow\{\mathbf{x}_i\in\mathcal{D}_3
   \;|\; h(\mathbf{x}_i)=\oplus\}$ ;
 
   $\hat{p}^{q_{\ominus}}_{\mathcal{D}_{3}^{\ominus}}(s) \leftarrow
   q_{\ominus}(\mathcal{D}_{3}^{\ominus})$ ;
 
   $\hat{p}^{q_{\oplus}}_{\mathcal{D}_{3}^{\oplus}}(s) \leftarrow
   q_{\oplus}(\mathcal{D}_{3}^{\oplus})$ ;
 
   \tcc{Compute the signed error of the demographic disparity
   estimate} $e \leftarrow $ compute error using
   $\hat{p}^{q_{\ominus}}_{\mathcal{D}_{3}^{\ominus}}(s)$,
   $\hat{p}^{q_{\oplus}}_{\mathcal{D}_{3}^{\oplus}}(s)$ and Equation
   \ref{eq:estim_err1}
   % $e=\left [ \hat{p}_{\mathcal{D}_{3}^{\ominus}}^{q_{\ominus}}(s)
   %   -
   %   \hat{p}_{\mathcal{D}_{3}^{\oplus}}^{q_{\oplus}}(s) \right ] -
   % \left [ p_{\mathcal{D}_{3}^{\ominus}}(s) -
   %   p_{\mathcal{D}_{3}^{\oplus}}(s) \right ]$ ;
 
   $E \leftarrow E \cup \{e\}$}}}}
 
   $\mathrm{mae}\leftarrow\mathrm{MAE}(E)$ ;
 
   $\mathrm{mse}\leftarrow\mathrm{MSE}(E)$ ;

   \Return{$\mathrm{mae}$, $\mathrm{mse}$} \BlankLine

   \caption{Protocol \texttt{sample-prev-$\mathcal{D}_1$}.}
   \label{pseudo:sample-prev-d1}
 \end{footnotesize}
\end{algorithm}

% \begin{appendices}
%   \section{Protocol \texttt{sample-prev-$\mathcal{D}_1$} on COMPAS}

%   \begin{figure}[tb]
%     \centering
%     \includegraphics[width=\textwidth]{./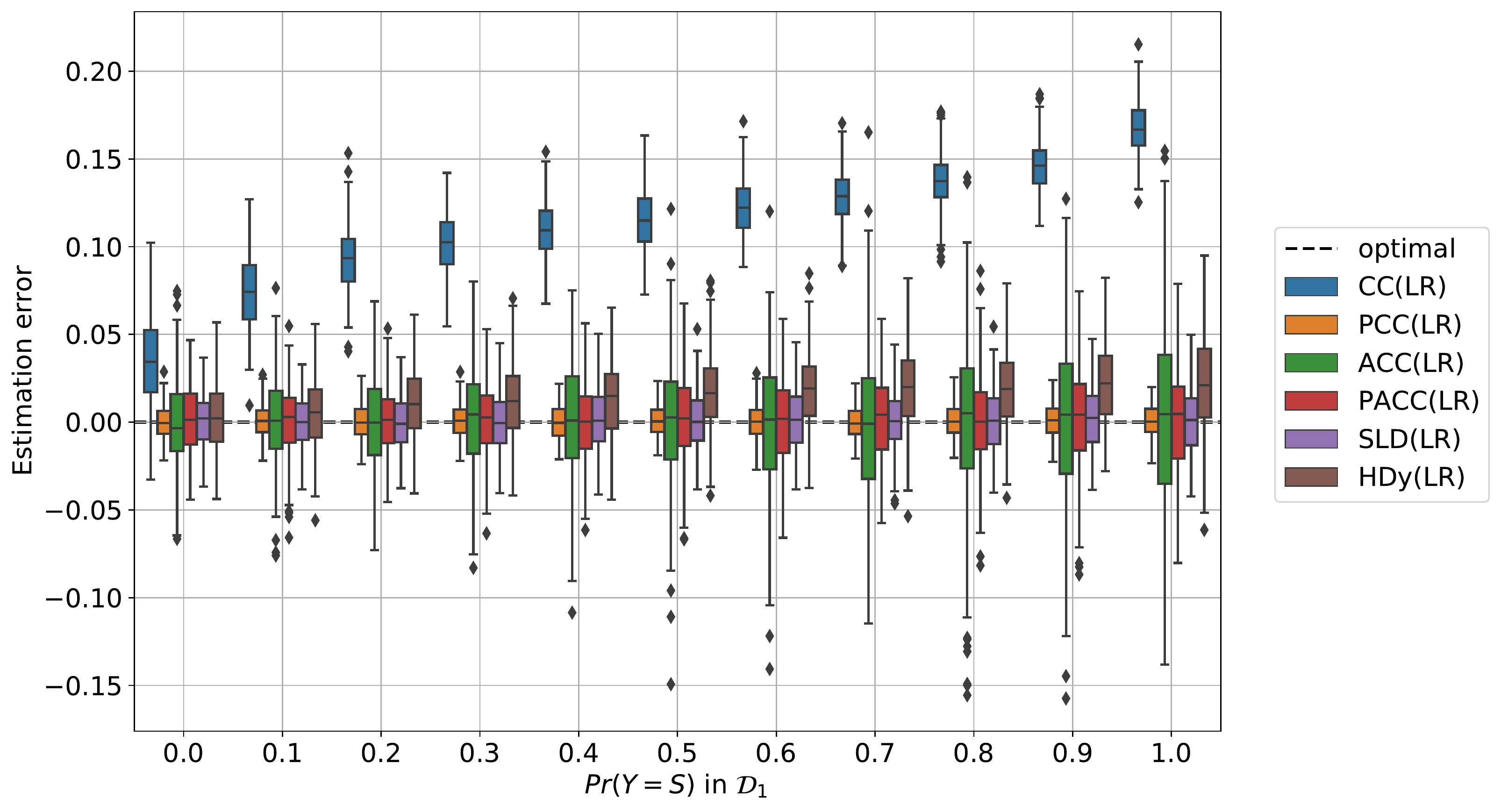}
%     \caption{Protocol \texttt{sample-prev-$\mathcal{D}_1$} on the
%     Adult dataset.}
%     \label{fig:sample_prev_d1_compas}
%   \end{figure}

%   -------------------------------------------------------------------

%\end{document}

% --------------------------------------------------------------------

% -------------------------------------------------------------------------

\clearpage
\newpage
\vskip 0.2in \bibliography{biblionew} \bibliographystyle{plainnat}

\end{document}